\newtheorem{theorem}{Theorem}[section]     
\newtheorem{proposition}[theorem]{Proposition} 
\newtheorem{lemma}[theorem]{Lemma} 
\newtheorem{corollary}[theorem]{Corollary}
\theoremstyle{definition}
\theoremstyle{remark}
\newtheorem{remark}[theorem]{Remark}
\DeclareMathOperator{\res}{Res}
\DeclareMathOperator{\diag}{diag}
\def\Z{\mathbb{Z}}	
\def\C{\mathbb{C}}	
\def\R{\mathbb{R}}	
\renewcommand{\leq}{\leqslant} 		
\renewcommand{\geq}{\geqslant}
\def\ii{{\mathrm i}} 		
\def\cC{\mathcal{C}}
\def\cI{\mathcal{I}}
\def\cU{\mathcal{U}}
\def\cV{\mathcal{V}}
\def\cH{\mathcal{H}}
\def\bla{{\bar\lambda}}
\def\bem#1\enm{\begin{pmatrix}#1\end{pmatrix}}
\numberwithin{equation}{section}
\begin{document}

\title[Stokes phenomenon and Frobenius manifolds]
{Infinite-dimensional Dubrovin-Frobenius manifolds \\ and the Stokes phenomenon}
\author{Guido Carlet}
\address[G. Carlet]{Institut de Mathématiques de Bourgogne, UMR 5584 CNRS, Université Bourgogne Franche-Comté, F--2100 Dijon, France} 
\email{guido.carlet@u-bourgogne.fr}
\author{Francisco Hern{\'a}ndez Iglesias}
\address[F. Hern{\'a}ndez Iglesias]{Korteweg-de Vriesinstituut voor Wiskunde, 
Universiteit van Amsterdam, Postbus 94248,
1090GE Amsterdam, Nederland \newline Institut de Mathématiques de Bourgogne, UMR 5584 CNRS, Université Bourgogne Franche-Comté, F--2100 Dijon, France}
\email{f.hernandeziglesias@uva.nl}
\begin{abstract} 
We study the Dubrovin equation of the infinite-dimensional 2D Toda Dubrovin-Frobenius manifold at its irregular singularity. We first revisit the definition of the canonical coordinates, proving that they emerge naturally as generalized eigenvalues of the operator of multiplication by the Euler vector field. We then show that the formal solutions to the Dubrovin equation with exponential type behaviour at the irregular singular point are not uniquely determined by their leading order, but instead depend on an infinite number of parameters, contrary to what happens in the finite-dimensional case. 
Next, we obtain a large family of solutions to the Dubrovin equation given by integrals along the unit circle of certain combinations of the superpotentials. Observing that such a family is not complete and has trivial monodromy, we study a larger family of weak solutions obtained via Borel resummation of some distinguished formal solutions. These resummed solutions naturally appear in monodromy-related pairs, finally allowing us to compute the infinite-dimensional analogue of the Stokes matrices.
\end{abstract}
\maketitle
\tableofcontents

\section{Introduction}

Dubrovin--Frobenius manifolds were introduced by B.~Dubrovin in~\cite{Dub96} to provide a coordinate-free description of the WDVV associativity equations~\cite{Wit90, DVV91} of two-dimensional topological field theory. 
While on the one hand Dubrovin--Frobenius manifolds provide the leading invariant in the reconstruction of higher-genus generating functions of several enumerative objects, on the other hand they have proven valuable in the classification and study of a large class of integrable hierarchies with one  spatial variable~\cite{DZ01}.

The program of extending the tools of Dubrovin--Frobenius manifold theory to integrable hierarchies in two spatial variables, that is, $2+1$ integrable systems, started in~\cite{CDM11} with the definition of an infinite-dimensional Dubrovin--Frobenius manifold $M_0$ associated with the dispersionless limit of the bi-Hamiltonian structure~\cite{carletHamiltonianStructuresTwodimensional2005} of the 2D Toda lattice due to Ueno and Takasaki~\cite{UT84}. 
In~\cite{CM15} the Dubrovin equation of $M_0$ was derived and studied, in particular by obtaining a Levelt basis of solutions near its regular singular point at $\zeta \sim 0$. This yields a canonical basis of Hamiltonian densities for the principal hierarchy of $M_0$, which constitutes a non-trivial extension of the dispersionless 2D Toda lattice. 

In recent years, several other examples of infinite-dimensional Dubrovin--Frobenius manifolds have been constructed. In~\cite{WZ14} a family $M^{n,m}_0$ of infinite-dimensional Dubrovin--Frobenius manifolds, all of them underlying the dispersionless 2D Toda lattice and coinciding with $M_0$ for $n=m=1$, was defined. A similar infinite family for the dispersionless two-component BKP hierarchy was discussed in~\cite{WX12}. Other remarkable examples are the infinite-dimensional Dubrovin--Frobenius manifold associated with the dispersionless KP hierarchy, defined in~\cite{Rai12}, and a family of infinite-dimensional Dubrovin--Frobenius manifolds underlying the Whitham hierarchy, recently obtained in~\cite{mwz21}.

The existence of a theory in full genera associated with these infinite-dimensional Dubrovin--Frobenius manifolds is still not clear. In this direction, a partial cohomological field theory of infinite rank has been recently defined in~\cite{buryakModuliSpacesResidueless2021}. Its associated Hamiltonian integrable hierarchy, in a certain reduction, has been shown to coincide with the KP hierarchy. 

In this paper we continue the study of the Frobenius manifold $M_0$ associated with the 2D Toda hierarchy. 

First, we revisit the definition of the canonical coordinates introduced in~\cite{CDM11}, showing that the continuous family $u_p$ has to be supplemented by a finite number of discrete coordinates $u_i$, $\bar{u}_i$ given by the critical values of the two superpotentials $\lambda$, $\bar\lambda$, in analogy to the usual description of canonical coordinates for finite-dimensional Frobenius manifolds given by a superpotential. To give a better justification for the somewhat ad hoc definition of the canonical coordinates $u_p$, we study the spectrum of the operator $\cU$ of multiplication by the Euler vector field. We show that the continuous canonical coordinates $u_p$ coincide with the generalized eigenvalues of $\cU$, while the standard eigenvalues are given by the critical values $u_i$, $\bar{u}_i$.
To give a more accurate and rigorous description of the tangent and cotangent spaces to $M_0$, here we make a distinction between the cotangent space and its representable (via the metric) subspace. This is necessary to deal with the non-representable differentials of several basic functionals on $M_0$, including those of the canonical coordinates. 

We then consider the Dubrovin equation at its irregular singularity at $\zeta \sim\infty$. We reformulate it as an equation on the cotangent space to $M_0$, rather than on its representable subspace, to allow for sufficiently large families of solutions. We study the formal solutions of the Dubrovin equation at the irregular singularity, remarkably finding that such formal solutions are not uniquely determined by their leading order, unlike in the finite-dimensional case, but depend on a large set of parameters. 

Our final aim is to describe the Stokes phenomenon for the irregular singularity of the Dubrovin equation and, in particular, to compute its Stokes matrices. We obtain an infinite family of solutions given by integrals along the unit circle and compute their asymptotics. We are however faced with the problem that such a family has trivial monodromy around $\zeta \sim \infty $ and cannot be considered as the analogue of a fundamental solution in the finite-dimensional case or, in other words, it is not complete. To solve this problem, we apply the theory of resurgent functions to certain formal solutions for which we have an explicit description, namely those obtained as asymptotic series from the integral solutions. What we find in the resummation process is a large family of solutions which are nevertheless weak, i.e., they do not extend to linear functionals defined on the whole tangent space. For such a family, we explicitly compute the Stokes matrices. For simplicity, this last part of the paper is conducted restricting to a two-dimensional locus in $M_0$ where the superpotentials have a particularly simple form.

\subsection*{Organization of the paper} 

In Section~\ref{sec:2DFM} we recall the definition of the 2D Toda Dubrovin--Frobenius manifold $M_0$ given in~\cite{CDM11, CM15}. 
In Section~\ref{sec:canonical} we revisit the canonical coordinates and prove they coincide with the (generalized) eigenvalues of the operator $\cU$ of multiplication by Euler vector field. 
In Section~\ref{sec:Dubrovinequation} we derive the Dubrovin equation on the cotangent spaces. 
In Section~\ref{sec:formalsolutions} we find the formal solutions to the Dubrovin equation at $\infty$. 
In Section~\ref{sec:analyticsolutions} we study an infinite, albeit incomplete, family of integral solutions to the Dubrovin equation with suitable asymptotic expansions at $\infty$. 
Finally, in Section~\ref{sec:resurgenceandstokes} we apply the resurgence procedure to the formal solutions which arise as asymptotic expansions of the integral solutions, obtaining this way a family of weak solutions parametrized by the unit circle $S^1$. These solutions appear naturally in monodromy-related pairs, allowing us to study the Stokes phenomenon in a similar fashion to the finite-dimensional case. The section ends with the explicit computation of the infinite-dimensional analogue of the Stokes matrices.

\subsection*{Acknowledgments} 

G.~C. would like to acknowledge many hours of conversation with B.~Dubrovin about the topics covered in this works.
G.~C. is supported by the ANER grant ``FROBENIUS'' of the Region Bourgogne-Franche-Comt\'e. The IMB receives support from
the EIPHI Graduate School (contract ANR-17-EURE-0002). F.~H.~I. is supported by the Netherlands Organization for Scientific Research.

\section{The 2D Toda Dubrovin-Frobenius manifold} 
\label{sec:2DFM}

In this section, we recall the definition and some properties of the 2D Toda Dubrovin--Frobenius manifold from~\cite{CDM11, CM15}.

\subsection{The manifold $M$ and its tangent bundle}

Let $D_0$ be the closed unit disc in the Riemann sphere, $D_{\infty}$ the closure of its complement and $S^1 = D_0 \cap D_\infty$ the unit circle. For a compact subset $K$ of the Riemann sphere, we denote by $\cH (K)$ the space of holomorphic functions on $K$, i.e., functions which extend holomorphically to an open neighborhood of $K$. 

We define the infinite-dimensional manifold $M$ as the affine space
\begin{align}
M = \{ (\lambda(z), \bar{\lambda} (z )) \in z \mathcal{H}(D_\infty) \oplus \frac{1}{z} \mathcal{H} (D_0) \  | \ \lambda(z) = z + O(1) \}.
\end{align}
A point $\hat{\lambda} = (\lambda(z), \bar{\lambda}(z)) \in  M$ can be represented by  the Laurent series at $\infty$ and $0$ of its components
\begin{align}
\lambda(z) = z + \sum_{k \leq 0} u_k z^k, \qquad \bar{\lambda}(z) = \sum_{k \geq -1} \bar{u}_k z^k.
\end{align}
We identify the tangent space at a point $\hat{\lambda}$ with the vector space underlying the affine space $M$
\begin{align}
T_{\hat{\lambda}} M = \mathcal{H} (D_\infty) \oplus \frac{1}{z} \mathcal{H} (D_0).
\end{align}

\subsection{The manifold $M_0$}

We define $M_0$ 
as the open subset of $M$ consisting of the pairs $(\lambda(z), \bar{\lambda}(z) )$ satisfying the following conditions:
\begin{enumerate}[label=(T\arabic*)]
\item The leading coefficient $\bar{u}_{-1}$ of $\bar\lambda(z)$ is nonzero.
\label{item:T1}
\item The derivative of $w(z) := \lambda(z) + \bar{\lambda}(z)$ does not vanish on $S^1$.
\label{item:T2}
\item The curve parameterized by $w(z)$ for $z\in S^1$ is positively oriented, non-selfintersecting and encircles the origin $w=0$.
\label{item:T3}
\item The map $\sigma(z) := \frac{\lambda^{\prime}(z)}{\lambda^{\prime}(z) + \bar{\lambda}'(z)}$ has non-vanishing derivative on $S^1$. \label{item:T5}
\item The functions $\lambda'(z)$, $\bar{\lambda}'(z)$ are non-vanishing for $z\in S^1$; equivalently, the curve $\sigma: S^1 \to \C$ does not pass through the points $0$ and $1$.
\label{item:T6}
\end{enumerate}

\begin{remark}
These conditions were introduced in the literature in different places~\cite{CDM11, WZ14, CM15} mainly to avoid non-generic cases and to simplify some of the definitions and the proofs. Conditions~\ref{item:T2} and~\ref{item:T3} are used in the definition of the metric and the flat coordinates. Conditions~\ref{item:T5} and~\ref{item:T6} are used in the definition of canonical coordinates and in the computation of the spectrum of the operator $\cU$. 
\end{remark}

\subsection{The $w$-coordinates }

Sometimes it is more convenient to represent $M_0$ as a two-dimensional bundle over the space $M_{\textrm{red}} \subset \cH(S^1)$ of parameterized simple analytic curves:
\begin{align}
M_0&\longrightarrow M_{\textrm{red}} \oplus \mathbb{C} \oplus \mathbb{C} \nonumber \\
(\lambda(z), \bar{\lambda}(z)) &\longmapsto (w(z), v, u), \notag
\end{align}
where $w(z) = \lambda(z) + \bar{\lambda}(z)$, $v = \bar{u}_0 = (\bar{\lambda})_0$ and $e^u = \bar{u}_{-1} = (\bar{\lambda})_1$. The map can be inverted by
\begin{align}
\lambda(z) = w_{\leq 0} (z) + z - v -e^u z^{-1}, \qquad \bar{\lambda}(z) = w_{\geq 1} (z) - z +v +e^u z^{-1}.
\end{align}
We refer to the triples $(w(z), v, u)$ as $w$-coordinates. In these coordinates the tangent vectors are represented as elements of $\cH(S^1)\oplus \C^2$ via the map
\begin{equation}
\begin{split}
T_{\hat{\lambda}} M = \cH (D_{\infty}) \oplus \frac{1}{z} \cH (D_0) &\longrightarrow \cH(S^1) \oplus \mathbb{C}^2  \\
(X(z), \bar{X}(z)) &\longmapsto (W(z), X_v, X_u),
\end{split}
\end{equation}
where
\begin{align}
&W(z) = X(z) + \bar{X}(z), \quad X_v = \bar{X}_0, \quad X_u = e^{-u} \bar{X}_{-1}, \label{eqn:pairstows}\\
&X(z) = W_{\leq 0}(z) - X_v - e^u X_u z^{-1}, \quad \bar{X}(z) = W_{\geq 1}(z) + X_v + e^u X_u z^{-1}.
\end{align}

\begin{remark}
Recall that the projections $( \hspace{1mm} )_{\geq p} : \mathcal{H}(S^1) \rightarrow z^p \mathcal{H} (D_0)$, $(  \hspace{1mm})_{\leq p-1} : \mathcal{H}(S^1) \rightarrow z^{p-1} \mathcal{H} (D_\infty)$ and $( \hspace{1mm} )_{p} : \mathcal{H}(S^1) \rightarrow \mathbb{C}$ are defined by
\begin{align}
(f)_{\geq p}(z) &= \sum_{k \geq p} f_k z^k = \frac{z^p}{2 \pi \ii} \oint_{|z| < |w|} \frac{w^{-p} f(w)}{w-z} dw,  \\
(f)_{\leq p-1}(z)&= \sum_{k \leq p-1} f_k z^k = -\frac{z^p}{2 \pi \ii} \oint_{|z| > |w|} \frac{w^{-p} f(w)}{w-z} dw,  \\
(f)_p &= f_p = \frac{1}{2 \pi \ii} \oint_{|z| = 1} f(z) z^{-p} \frac{dz}{z} ,
\end{align}
where $f(z) = \sum_{k \in \mathbb{Z}} f_k z^k$ and $p \in \mathbb{Z}$. 
\end{remark}

\subsection{The metric and the contangent bundle} 
 
On the tangent spaces we define a symmetric non-degenerate bilinear form $\eta$, called the metric, by 
\begin{equation}
\eta(\hat{X}, \hat{Y}) = \frac{1}{2 \pi \ii} \oint_{|z|=1} \frac{X(z) Y(z) }{z^2 w^{\prime}(z)} dz + X_v Y_u + X_u Y_v,  \label{eqn:metricdefn}
\end{equation}
where $\hat{X}, \hat{Y} \in T_{\hat{\lambda}} M$ are represented as triples in $\cH(S^1) \oplus \mathbb{C}^2$.
By explicitly constructing the flat coordinates, it was proved in~\cite{CDM11} that the metric $\eta$ is flat. 

The cotangent space $T^*_{\hat{\lambda}} M$ is defined as the algebraic dual to the tangent space, i.e., as the space $(T_{\hat{\lambda}}M)^*$ of all linear functionals on $T_{\hat{\lambda}}M$.
The metric defines an injection $\eta_*$ of $T_{\hat{\lambda}} M$ into $T^*_{\hat{\lambda}} M$ by
\begin{equation}
\hat{X} \mapsto \eta_* (\hat{X}) = \eta(\hat{X}, \cdot ).
\end{equation}
A cotangent vector $\xi \in T^*_{\hat{\lambda}} M$  that is in the image of $\eta_*$ is called representable, and we denote $\xi \in T^*_{\hat{\lambda}} M^{\mathrm{rep}}$.

\begin{remark}
In this work we take a rather different approach to the cotangent bundle compared to~\cite{CDM11, CM15}. This is motivated by the fact that we need to consider functionals on $M_0$ whose differentials are not representable. For example, the differentials $d\lambda(p)$, $d\bar\lambda(p)$ and $du_p$ are not representable. 
\end{remark}

\subsection{The associative product}

The product on the tangent spaces is defined by 
\begin{align} \label{eq:protan}
\hat{X} \cdot \hat{Y} = 
&\Bigg( X(z) \left( Y_{>0}(z) - (z w^{\prime}(z))_{>0} \frac{Y(z)}{z w^{\prime}(z)} + \frac{Y(z)}{w^{\prime}(z)} + \frac{e^u}{z} \left( \frac{Y(z)}{z w^{\prime}(z)} + Y_u \right) + Y_v \right)    \\ 
&\quad + zw^{\prime}(z) \bigg( \left( X_{>0}(z) \frac{Y(z)}{z w^{\prime}(z)} \right)_{<0} - \left( X_{\leq 0} (z) \frac{Y(z)}{z w^{\prime}(z)} \right)_{\geq 0}   \nonumber \\ 
&\qquad \qquad \qquad  + \frac{e^u}{z} X_u \left( \frac{Y(z)}{z w^{\prime}(z)} + Y_u \right)  \left. + X_v \frac{Y(z)}{z w^{\prime}(z)} \right), \nonumber \\
&\left(  e^u (X(z) + zw^{\prime}(z) X_u) \left( \frac{Y(z)}{z w^{\prime}(z)} + Y_u \right) \right)_1 - e^u X_uY_u + X_v Y_v,  \nonumber \\
&\left( X(z) \frac{Y(z)}{z w^{\prime}(z)} \right)_0 + X_u Y_v + X_v Y_u \Bigg) \notag
\end{align}
for $\hat X, \hat Y \in T_{\hat{\lambda}}M$ represented as triples in $\cH(S^1) \oplus \mathbb{C}^2$.
It was proved in~\cite{CDM11} that the product is commutative, associative, with a unit vector field given by $e = (-1, 1)$ or, equivalently, by $e = (0, 1, 0)$.
Moreover, it is compatible with the metric $\eta$, namely
\begin{equation}
\eta(\hat X \cdot \hat Y, \hat Z ) = \eta( \hat X , \hat Y \cdot \hat Z ) 
\end{equation}
for any $\hat X, \hat Y, \hat Z \in T_{\hat{\lambda}}M$. If follows that $\eta(\hat X, \hat Y) =  \xi( \hat X \cdot \hat Y )$ for $\xi \in T^*_{\hat{\lambda}} M^{\mathrm{rep}}$, with $\xi = \eta_*(e) = du$.

\begin{remark}
Expression~\eqref{eq:protan} for the product of tangent vectors corrects a sign mistake in the literature, cf.~\cite[Lemma 15]{CM15}.
\end{remark}

Finally, the Euler vector field is defined by 
\begin{equation}
E = (\lambda(z) - z \lambda^{\prime}(z), \bar{\lambda}(z) - z \bar{\lambda}^{\prime}(z)), \qquad \text{or} \qquad 
E = (w(z) - z w^{\prime}(z), v, 2).
\end{equation}

In~\cite{CDM11} it is proved that 
\begin{theorem}
$(M_0, \eta,\cdot, e, E)$ is an infinite-dimensional Dubrovin--Frobenius manifold of charge $d = 1$.
\end{theorem}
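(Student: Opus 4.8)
\emph{Proof strategy.} The plan is to check the defining axioms of a Dubrovin--Frobenius manifold one at a time, using the structures recalled above together with the explicit flat coordinates constructed in~\cite{CDM11}. Recall that $(M_0,\eta,\cdot,e,E)$ is a Dubrovin--Frobenius manifold of charge $d$ provided: (i) $\eta$ is a flat metric; (ii) the product $\cdot$ is commutative, associative, with unit $e$, and $\eta$-invariant; (iii) the unit is flat, $\nabla e=0$; (iv) the $3$-tensor $c(\hat X,\hat Y,\hat Z)=\eta(\hat X\cdot\hat Y,\hat Z)$ satisfies the potentiality condition, i.e.\ $\nabla c$ is totally symmetric; (v) the Euler vector field carries the correct weights, $\mathrm{Lie}_E(\cdot)=\cdot$, $\mathrm{Lie}_E\,\eta=(2-d)\eta$, and $\nabla\nabla E=0$. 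Items (i) and (ii) have been recorded above — flatness of $\eta$ by exhibiting flat coordinates, and commutativity, associativity, the unit $e=(0,1,0)$, and $\eta$-invariance for the product~\eqref{eq:protan} — so the work concentrates on (iii), (iv), (v), and on pinning down $d=1$.

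For (iii): since the product is $\eta$-compatible we have $\eta(\hat X,\hat Y)=\xi(\hat X\cdot\hat Y)$ with the representable covector $\xi=\eta_*(e)$, and it was observed that $\xi=du$. As $du$ is exact, hence closed, and $\eta$ is flat, the dual vector field $e=\eta_*^{-1}(du)$ is covariantly constant; concretely, in any flat chart $(t^\alpha)$ for $\eta$ the components of $\eta_*(e)$ are constant, so $e$ is a constant vector field and $\nabla e=0$. Equivalently, one identifies $e=\partial/\partial t^1$ for the distinguished flat coordinate $t^1=u$ of~\cite{CDM11}.

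Item (iv) is the main obstacle. I would exploit the Landau--Ginzburg description: the pair $(\la(z),\bla(z))$ plays the role of a two-component superpotential, and the claim is that in flat coordinates the metric and structure constants are given by residue-type formulas, schematically
\[
\eta(\partial_\alpha,\partial_\beta)=\sum \res \frac{\partial_\alpha\la\,\partial_\beta\la}{d\la}+(\cdots),\qquad
c(\partial_\alpha,\partial_\beta,\partial_\gamma)=\sum \res \frac{\partial_\alpha\la\,\partial_\beta\la\,\partial_\gamma\la}{d\la}+(\cdots),
\]
with the analogous contributions of $\bla$ and of the discrete directions $v,u$, the residue pairing being taken against a suitable meromorphic differential (related to $\frac{dz}{z}$ and $dw$); reconciling these with~\eqref{eqn:metricdefn} and~\eqref{eq:protan} is a lengthy but routine contour-integral computation. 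Granting this, potentiality follows as in Dubrovin's treatment of Landau--Ginzburg Frobenius manifolds: the structure constants are third derivatives $c_{\alpha\beta\gamma}=\partial_\alpha\partial_\beta\partial_\gamma F$ of an explicit prepotential $F$, so $\partial_\delta c_{\alpha\beta\gamma}$ is symmetric in all four indices. As an alternative, coordinate-free check, one may instead prove that the deformed connection $\nabla^{(\zeta)}_{\hat X}\hat Y=\nabla_{\hat X}\hat Y+\zeta\,\hat X\cdot\hat Y$ has vanishing curvature for all $\zeta$: the $\zeta^0$ and $\zeta^2$ parts are flatness of $\eta$ and associativity, both already known, and the $\zeta^1$ part is precisely the potentiality condition, so (iv) reduces to a single curvature computation carried out either in flat coordinates or in the $w$-coordinates.

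For (v) and the charge: with $E=(w(z)-zw'(z),v,2)$ in $w$-coordinates, I would compute $\mathrm{Lie}_E$ of the product and of $\eta$ directly from~\eqref{eq:protan} and~\eqref{eqn:metricdefn}; the homogeneity of these formulas under the scaling generated by $E$ yields $\mathrm{Lie}_E(\cdot)=\cdot$ and $\mathrm{Lie}_E\,\eta=\eta$, the latter forcing $2-d=1$, i.e.\ $d=1$. Then $\nabla E$ is covariantly constant because the Euler field is affine in the flat coordinates of~\cite{CDM11}, $E^\alpha=(1-q_\alpha)t^\alpha+r^\alpha$ with constants $q_\alpha,r^\alpha$ read off from the weights of the $t^\alpha$, so $\nabla_\beta E^\alpha=(1-q_\alpha)\delta^\alpha_\beta$ is constant and $\nabla\nabla E=0$; the operator $\mathcal{V}=\nabla E-\tfrac{2-d}{2}\,\id$ is then $\eta$-antisymmetric, as required. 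Assembling (i)--(v) with $d=1$ gives the theorem. The genuinely delicate point throughout is the bookkeeping of the non-representable differentials and of the finite discrete directions $v,u$ alongside the continuous part, which is exactly why organizing the computation through the residue/contour-integral reformulation is safer than brute-force manipulation of~\eqref{eq:protan}.
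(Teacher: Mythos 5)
The paper does not actually prove this theorem: it is imported verbatim from~\cite{CDM11} (``In~\cite{CDM11} it is proved that\dots''), so there is no internal argument to compare against. Judged on its own terms, your proposal is a sensible axiom-by-axiom outline that matches in spirit how the result is established in~\cite{CDM11} (explicit flat coordinates, superpotential/residue formulas, homogeneity of $E$), but it is not yet a proof, because the two substantive verifications are deferred rather than carried out. For item (iv) you only write the residue formulas for $\eta$ and $c$ ``schematically'' with unspecified corrections ``$(\cdots)$'' for the $\bar\lambda$-contribution and the discrete $v,u$ directions, and you declare their reconciliation with~\eqref{eqn:metricdefn} and~\eqref{eq:protan} a ``routine contour-integral computation''. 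In this infinite-dimensional setting — two superpotentials, a continuous family of directions, and non-representable differentials — producing and verifying exactly those formulas (equivalently, constructing the prepotential and checking WDVV, or computing the curvature of the deformed connection) \emph{is} the content of the theorem, and nothing in your text pins down the correction terms or shows that the resulting $c$ is a third derivative of a potential. Similarly, for item (v) you assert that ``the homogeneity of these formulas'' yields $\mathrm{Lie}_E(\cdot)=\cdot$ and $\mathrm{Lie}_E\,\eta=\eta$; this is precisely the computation that must be done (the metric depends on $w'(z)$ and $E$ acts through $w\mapsto w-zw'$ together with shifts of $v$ and $u$), and the charge $d=1$ is its output, so it cannot be taken for granted.

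Two smaller points. First, the unit is $e=(0,1,0)=\partial_v$ in the $w$-coordinates, and its metric dual is $\eta_*(e)=du$; your identification ``$e=\partial/\partial t^1$ with $t^1=u$'' conflates the two mutually dual directions. Second, the inference ``$du$ is exact, hence closed, and $\eta$ is flat, so the dual vector field is covariantly constant'' is not valid as stated: a closed one-form need not be $\nabla$-parallel. What you actually need (and what~\cite{CDM11} supplies) is that $u$ and $v$ are themselves flat coordinates, so that $du$ is parallel and $e=\partial_v$ is a constant vector field in a flat chart. Note also that the paper's convention is $\mathcal{V}=\tfrac12-\nabla E$, the opposite sign to the operator you wrote at the end.
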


\subsection{The operators $\mathcal{U}$ and $\mathcal{V}$}

The operator $\mathcal{U}: T_{\hat{\lambda}}M \rightarrow T_{\hat{\lambda}}M$ of multiplication by the Euler vector field is defined on each tangent space as $\mathcal{U} (\hat{X}) = E \cdot \hat{X}$. Using~\eqref{eq:protan} one obtains 
\small
\begin{align}
\label{eqn:defnU}
\mathcal{U}(\hat{X}) = 
&\Bigg( (w(z) - zw^{\prime}(z)) \left( X_{>0}(z) - (z w^{\prime}(z))_{>0} \frac{X(z)}{z w^{\prime}(z)} + \frac{X(z)}{w^{\prime}(z)} + \frac{e^u}{z} \left( \frac{X(z)}{z w^{\prime}(z)} + X_u \right) + X_v \right)+  \\ 
&\quad  + zw^{\prime}(z) \Bigg( \left( (w(z) - zw^{\prime}(z))_{>0} \frac{X(z)}{z w^{\prime}(z)} \right)_{<0} \notag 
  - \left( (w(z)- zw^{\prime}(z))_{\leq 0}  \frac{X(z)}{z w^{\prime}(z)} \right)_{\geq 0} \nonumber  \\  
&\qquad \qquad \qquad    + 2 \frac{e^u}{z}  \left( \frac{X(z)}{z w^{\prime}(z)} + X_u \right) + v \frac{X(z)}{z w^{\prime}(z)} \Bigg), \nonumber \\
&\left(  e^u (w(z) + zw^{\prime}(z)) \left( \frac{X(z)}{z w^{\prime}(z)} + X_u \right) \right)_1 - 2e^u X_u + v X_v, \nonumber \\
&\left( (w(z) - z w^{\prime}(z)) \frac{X(z)}{z w^{\prime}(z)} \right)_0 + 2 X_v + v X_u \Bigg) \notag.
\end{align}
\normalsize
The grading operator $\mathcal{V}: T_{\hat{\lambda}}M \rightarrow T_{\hat{\lambda}}M$ is defined as 
\begin{align}
\mathcal{V} = \frac12 - \nabla E, \label{eqn:mathcalVdefn}
\end{align}
where $\nabla$ is the Levi-Civita connection of the metric $\eta$. Explicitly, see~\cite{CM15}, it is given by
\begin{align}
\mathcal{V}(\hat{X}) = \left( - \frac{X(z)}{2} + z \partial_z \left( X(z) \frac{w(z)}{z w^{\prime}(z)} \right), - \frac{X_v}{2}, \frac{X_u}{2} \right). \label{eqn:defnV}
\end{align}

\subsection{At a special point}

To simplify computations, we will specialize certain constructions to a two-dimensional submanifold of $M_0$ given by the points $\hat{\lambda}_0$ of the form 
\begin{align}
\lambda_0(z) = z - v - e^u z^{-1}, \qquad \bar{\lambda}_0 (z) = v + e^u z^{-1} 
\end{align} 
or, written as a triple, 
\begin{align}
\hat{\lambda}_0 = (z, v, u). 
\end{align}
Notice that conditions~\ref{item:T1}-\ref{item:T6} are satisfied if $|e^u| \not=1$.

At $\hat{\lambda}_0$ the operators $\cU$ and $\cV$ have the simpler form
\begin{align}
\mathcal{U} (\hat{X}) &= \left( (v + 2e^u z^{-1}) X(z) + 2e^u X_u, \ 
2e^u X_1 + v X_v, \ 2 X_v + v X_u \right), \label{eqn:Uatsimplepoint} \\ 
\mathcal{V}(\hat{X}) &= \left( - \frac{X(z)}{2} + z X^{\prime}(z), - \frac{X_v}{2}, \frac{X_u}{2} \right). \label{eqn:Vatsimplepoint}
\end{align}

\section{Spectrum of $\cU$ and canonical coordinates} \label{sec:canonical}

In this section, we compute the spectrum of the operator $\cU$ at an arbitrary point of the Dubrovin--Frobenius manifold and we show that the generalized eigenvalues correspond to the continuous canonical coordinates introduced in~\cite{CDM11}, while the discrete spectrum is given by the critical values of $\lambda$ and $\bar\lambda$.

\subsection{Canonical coordinates}

For a semisimple finite dimensional Dubrovin--Frobenius manifold with superpotential $\lambda(z)$, the canonical coordinates are typically given by the critical values of $\lambda(z)$. In the case of the infinite dimensional Dubrovin--Frobenius manifold $M_0$, however, it is not immediately clear what should take the place of the critical values, since one expects an infinite number of canonical coordinates and, instead of a single superpotential, there are two: $\lambda(z)$ and $\bar{\lambda}(z)$. 

In~\cite{CDM11} it was suggested to consider the following linear combination of the two superpotentials
\begin{align}
\lambda_\sigma(z) = \sigma \bar{\lambda}(z) + (\sigma-1) \lambda(z) \in \cH(S^1)
\end{align}
for a parameter $\sigma \in \C$. One should then look for the critical points of $\lambda_\sigma(z)$ that are located on $S^1$. 
The condition $\lambda_\sigma^{\prime}(z) = 0$ for $z\in S^1$ defines a curve $\Sigma = \{ \sigma(z)| \ z \in S^1 \}$, parameterized by
\begin{align} \label{eqn:curvesigma}
\sigma(z) = \frac{\lambda^{\prime}(z)}{\lambda^{\prime}(z) + \bar{\lambda}^{\prime}(z)} \in \cH(S^1),
\end{align}
which is holomorphic on $S^1$ as the denominator is non-vanishing for $\hat{\lambda} \in M_0$, and is non-singular, i.e. $\sigma'(z)\not=0$, if and only if
\begin{equation} \label{eq:nondeggg}
\lambda'(z) \bar\lambda''(z) - \lambda''(z) \bar\lambda'(z) \not= 0 
\end{equation}
for $z \in S^1$. For non self-intersecting $\Sigma$, we define the (continuous part of the) canonical coordinates at the point $\hat{\lambda}$ as the set of critical values
\begin{equation}
u_\sigma = \lambda_\sigma(z(\sigma))
\end{equation}
for $\sigma \in \Sigma$, where $z(\sigma): \Sigma \to S^1$ is the inverse of $\sigma(z)$, which is a critical point of $\lambda_\sigma(z)$. Since $\Sigma$ is parameterized by $z \in S^1$, we might as well index these coordinates by $p\in S^1$, denoting $u_p = u_{\sigma(p)} = \lambda_{\sigma(p)}(p)$. 

In the following, we show that this seemingly ad hoc definition of canonical coordinates emerges naturally from the spectrum of the operator $\cU$. Indeed, the generalized eigenvalues of $\cU$ are exactly given by the canonical coordinates defined above. 

The operator $\cU$ turns out to also have standard eigenvalues, which are given by the critical values of the superpotentials $-\lambda(z)$ and $\bar{\lambda}(z)$ on their respective domains of definition, $D_\infty$ and $D_0$. More precisely, consider a point of $M_0$ at which $\lambda(z)$, resp. $\bar{\lambda}(z)$, has $n$, resp. $\bar{n}$, critical points in $D_\infty$, resp. $D_0$. 
We define the following critical values:
\begin{align} \label{eq:uub}
&u_i = -\lambda(z_i), \qquad 
&\lambda^{\prime}(z_i) = 0, \qquad 
&z_i \in D_\infty, \qquad &i = 1, \dots, n, \\
 \label{eq:uub1}
&\bar{u}_i = \bar{\lambda}(\bar{z}_i), \qquad 
&\bar{\lambda}^{\prime}(\bar{z}_i) = 0, \qquad 
&\bar{z}_i \in D_0, \qquad &i = 1, \dots, \bar{n}.
\end{align}
The canonical coordinates on $M_0$ are given by the set of all critical values as defined above: 
\begin{equation}
\{ u_p, u_i, \bar{u}_j \}_{p \in S^1, i=1, \dots, n, j = 1, \dots, \bar{n}}.
\end{equation}

The differentials of the discrete canonical coordinates $u_i$, $\bar{u}_j$ are
\begin{equation}
du_i = - d\lambda(z_i), \qquad d\bar{u}_i = d\bar{\lambda}(\bar{z}_i), \label{eqn:diffdiscrete}
\end{equation}
which can be represented as vectors in $T_{\hat{\lambda}} M$ via the injection $\eta_*$ as follows
\begin{align} \label{eqn:duirep}
du_i = \left( zw^{\prime}(z) \frac{z_i}{z - z_i}, \frac{e^u}{z_i}, 1 \right), \qquad 
d\bar{u}_i = \left( z w^{\prime}(z) \frac{\bar{z}_i}{z- \bar{z}_i}, \frac{e^u}{\bar{z}_i}, 1 \right).
\end{align}
We will show below that these differentials are actually the eigenvectors corresponding to the eigenvalues $u_i$ and $\bar{u}_j$ of $\cU$. 

It turns out that the generalized eigenvectors of $\cU$, corresponding to the continuous family of canonical coordinates $u_p$, are given by 
\begin{align}
du_p := d\lambda_{\sigma}(p)|_{\sigma = \sigma(p)} 
=  (\sigma(p) - 1) d\lambda(p) + \sigma(p) d \bar{\lambda}(p),
\label{eqn:diffcanonical}
\end{align}
for $p\in S^1$. 

\begin{remark}
Notice that in the previous definition we have slightly abused the notation, since the last formula does not represent the differential of $u_p$, but the differential of $\lambda_\sigma(z)$ for fixed $\sigma$, later evaluated at $\sigma=\sigma(p)$. This is consistent with the fact that, as in the case of discrete canonical coordinates, the critical point should be allowed to vary as we differentiate along the Dubrovin--Frobenius manifold, but on the contrary it would be fixed at a point of $S^1$ if we differentiated directly $u_p$. 
\end{remark}

\begin{remark} 
The formula for the continuous canonical coordinates might be understood as the Legendre transform of the function $\lambda(w) = \lambda(z(w))$, where $z(w)$ is the inverse of the function $w(z) = \lambda(z) +\bar{\lambda}(z)$ defined on $S^1$. Denote by $w(\sigma)$ the inverse of 
\begin{equation}
\sigma(w) = \frac{\partial \lambda}{\partial w}(w) = \lambda'(z(w)) z'(w) =\left.  \frac{\lambda'(z)}{\lambda'(z) +\bar{\lambda}'(z)} \right|_{z=z(w)}.
\end{equation}
The Legendre transform of $\lambda(w)$ is indeed
\begin{align}
 \sigma w(\sigma) - \lambda(w(\sigma)) 
&= \sigma	w(z(\sigma)) - \lambda(w(z(\sigma))) =\\
&=\left[ \sigma (\lambda(z) + \bar{\lambda}(z) ) - \lambda(z) \right]_{z=z(\sigma)} 
= \lambda_\sigma (z(\sigma)) = u_\sigma ,
\end{align}
where $z(\sigma)$ is the inverse of~\eqref{eqn:curvesigma}. 
\end{remark}

\subsection{Spectrum of $\cU$}

Let us consider the operator $\cU$ of multiplication by the Euler vector field $E$, see~\eqref{eqn:defnU}, at an arbitrary point $\hat{\lambda}$ in $M_0$:
\begin{equation}
\cU : T_{\hat{\lambda}} M \to T_{\hat{\lambda}} M. 
\end{equation}
The generalized spectrum of the operator $\cU$ is defined as the spectrum of the transpose 
\begin{equation}
\cU^* : T^*_{\hat{\lambda}} M \to T^*_{\hat{\lambda}} M,
\end{equation}
defined by $<\cU^* \xi , \hat X> = < \xi, \cU \hat X>$ for all $\hat X \in T_{\hat{\lambda}} M$.
Explicitly, we say that $\xi \in  T^*_{\hat{\lambda}} M$ is a generalized eigenvector corresponding to the generalized eigenvalue $\mu$ if 
\begin{equation}
< \xi, \cU \hat X > = \mu <\xi ,  \hat X > 
\end{equation}
for all $\hat X \in T_{\hat{\lambda}} M$.
Since $\cU$ is symmetric w.r.t. the metric $\eta$, a standard eigenvector $\hat{X} \in T_{\hat{\lambda}} M$  with eigenvalue $\mu$ is mapped by the injection $\eta_*$ to a generalized eigenvector for the same eigenvalue $\mu$.

Notice that a family $E \subset T^*_{\hat{\lambda}} M$ of cotangent vectors defines a map from $T_{\hat{\lambda}} M$ to the space of functions over $E$. We say that $E$ is complete if such map is injective, i.e., it defines an isomorphism of $T_{\hat{\lambda}} M$ with the space of functions $E'$ given by its image. 

\begin{proposition}  \label{prop:canonical}
At an arbitrary point $\hat{\lambda}$ of $M_0$ the spectrum of the operator $\cU$ is given by 
\begin{enumerate}
\item the eigenvalues $u_i$ with eigenvectors $du_i$ for $i=1, \dots ,n$,
\item the eigenvalues $\bar u_j$ with eigenvectors $d\bar u_j$ for $j=1, \dots ,\bar n$, and
\item the generalized eigenvalues $u_p$ with generalized eigenvectors $d u_p$ for $p\in S^1$.
\end{enumerate}
Moreover, the set of all eigenvectors $\{du_p, du_i, d\bar{u}_j\}$ is a complete family in $T^*_{\hat{\lambda}} M$.
\end{proposition}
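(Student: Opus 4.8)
The plan is to verify the eigenvalue equations for the three families directly and then establish completeness by exhibiting an inverse to the evaluation map. For the discrete part, I would take the candidate eigenvectors $du_i$ and $d\bar u_j$ in their representable form~\eqref{eqn:duirep} and simply apply the explicit expression~\eqref{eqn:defnU} for $\cU$; since $\cU$ is $\eta$-symmetric, it suffices to check that $\eta_*^{-1}\cU\, du_i = u_i\, \eta_*^{-1}(du_i)$ as tangent vectors, i.e. that $E\cdot(\eta_*^{-1}du_i) = u_i\,\eta_*^{-1}(du_i)$. The key algebraic identity is that the triple $\left(zw'(z)\frac{z_i}{z-z_i},\frac{e^u}{z_i},1\right)$ behaves like a "delta function at $z_i$" under the product: the projections $(\,\cdot\,)_{>0}$, $(\,\cdot\,)_{<0}$, $(\,\cdot\,)_0$ appearing in~\eqref{eq:protan} collapse because multiplication by $\frac{z_i}{z-z_i}$ kernels against the Cauchy kernel, and the condition $\lambda'(z_i)=0$ (equivalently $w'(z_i)=\bar\lambda'(z_i)$, used via the splitting $X(z) = W_{\leq 0}(z) - X_v - e^uX_uz^{-1}$) is exactly what makes the remaining terms combine into the scalar $\lambda(z_i) - z_i\lambda'(z_i) = \lambda(z_i) = -u_i$. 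I expect this to be a somewhat lengthy but mechanical residue computation; the analogous statement for $d\bar u_j$ and $\bar u_j$ is symmetric.

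For the continuous family, the situation is genuinely different: $du_p$ is \emph{not} representable, so I would work on the cotangent side and verify $\langle du_p, \cU\hat X\rangle = u_p\,\langle du_p, \hat X\rangle$ for all $\hat X\in T_{\hat\lambda}M$. Here $\langle du_p, \hat X\rangle = (\sigma(p)-1)X(p) + \sigma(p)\bar X(p)$ is a pointwise evaluation at $p\in S^1$ (combining the two components of $\hat X$ with weights $\sigma(p)-1,\sigma(p)$), using~\eqref{eqn:diffcanonical}. The strategy is to pair this functional with each of the four components of $\cU\hat X$ from~\eqref{eqn:defnU}. All the projection operators in~\eqref{eqn:defnU} disappear after pairing because $du_p$ evaluates a function of $z$ at the single point $z=p$, and the two-superpotential structure is recovered since $X_v, X_u$ are recovered from $X(z),\bar X(z)$ via~\eqref{eqn:pairstows}. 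The upshot should be that $\langle du_p,\cU\hat X\rangle$ equals $\big[(\sigma(p)-1)(\lambda(p) - p\lambda'(p)) + \sigma(p)(\bar\lambda(p) - p\bar\lambda'(p))\big]\langle du_p,\hat X\rangle/\langle du_p,\hat X\rangle$ reorganized — more precisely, one checks $\cU^* du_p = \big[(\sigma(p)-1)\lambda(p) + \sigma(p)\bar\lambda(p) - p\big((\sigma(p)-1)\lambda'(p)+\sigma(p)\bar\lambda'(p)\big)\big]du_p$, and the coefficient of $p$ vanishes because $(\sigma(p)-1)\lambda'(p) + \sigma(p)\bar\lambda'(p) = \lambda_\sigma'(p)|_{\sigma=\sigma(p)} = 0$ by the very definition~\eqref{eqn:curvesigma} of $\sigma(p)$. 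What survives is $(\sigma(p)-1)\lambda(p) + \sigma(p)\bar\lambda(p) = \lambda_{\sigma(p)}(p) = u_p$, as desired. I expect this pairing computation to be the main obstacle: one must carefully track how each coupling $e^u/z$, $X_v$, $X_u$ term in~\eqref{eqn:defnU} contributes when paired against the weights $(\sigma(p)-1, \sigma(p))$, and confirm there is no leftover.

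Finally, for completeness I must show the evaluation map $\hat X\mapsto\big(p\mapsto\langle du_p,\hat X\rangle,\ \langle du_i,\hat X\rangle,\ \langle d\bar u_j,\hat X\rangle\big)$ is injective on $T_{\hat\lambda}M = \cH(D_\infty)\oplus\frac1z\cH(D_0)$. Suppose $\hat X=(X(z),\bar X(z))$ lies in the kernel: then $(\sigma(p)-1)X(p) + \sigma(p)\bar X(p) = 0$ for all $p\in S^1$. Since $X$ extends holomorphically to $D_\infty$ and $\bar X$ to $D_0$, and using conditions~\ref{item:T5},~\ref{item:T6} (so that $\sigma$ is an immersion of $S^1$ not hitting $0$ or $1$), I would argue that this single scalar relation on $S^1$ forces both $X$ and $\bar X$ to extend holomorphically across $S^1$ — roughly, solve for one in terms of the other and note that $\frac{\sigma(p)}{1-\sigma(p)} = \frac{\lambda'(p)}{\bar\lambda'(p)}$, so the relation reads $\bar\lambda'(p)X(p) = \lambda'(p)\bar X(p)$ on $S^1$; then $\lambda' X$ and $\bar\lambda'\bar X$ glue to an entire function on the sphere, hence a polynomial of controlled degree, which combined with the decay $X(z)=O(1)$ at $\infty$ and the vanishing of $\bar X$ at $0$ and the critical-point data pins $\hat X$ down. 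The discrete functionals $\langle du_i,\hat X\rangle = X(z_i) + \dots$ then kill the residual finite-dimensional ambiguity coming from the zeros of $\lambda',\bar\lambda'$. The delicate point is bookkeeping the exact dimension count so that the $n+\bar n$ discrete conditions exactly match the finite-dimensional space left undetermined by the $S^1$-family; I would handle this by reducing to a Riemann–Hilbert / Birkhoff factorization argument on $S^1$ with the explicit transition function $\lambda'/\bar\lambda'$.
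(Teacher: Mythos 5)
Your overall plan (check the eigenvalue equations by direct pairing, then prove completeness by inverting the evaluation map) is reasonable, and your completeness idea is genuinely different from the paper's: the paper exhibits an explicit inverse $\Psi^{-1}$, see~\eqref{eqn:inversepsi1}--\eqref{eqn:muY}, while you reduce injectivity to a scalar Riemann--Hilbert problem on $S^1$. That idea can be made to work, but not with the functions you name: $\lambda'X$ and $\bar\lambda'\bar X$ are neither equal on $S^1$ nor holomorphic on the respective discs. The kernel relation $\bar\lambda'(p)X(p)=\lambda'(p)\bar X(p)$ glues $X/\lambda'=\bar X/\bar\lambda'$ into a rational function on the sphere with at most simple poles at the critical points $z_i,\bar z_j$ and vanishing at $z=0$; the discrete conditions $X(z_i)=\bar X(\bar z_j)=0$ remove exactly those poles, forcing the function, hence $\hat X$, to vanish. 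Note, however, that injectivity is all this gives, whereas the paper's explicit $\Psi^{-1}$ also gives surjectivity, which is used below.

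The genuine gaps are in the eigenvalue computations and in the exhaustiveness of the spectrum. Your claim that ``all the projection operators in~\eqref{eqn:defnU} disappear after pairing because $du_p$ evaluates at the single point $z=p$'' is false: evaluation at $p$ does not commute with $(\,\cdot\,)_{\geq 0}$, $(\,\cdot\,)_{<0}$, etc., and the projected terms do not cancel pointwise. What actually happens --- and this is the content of the paper's Lemma~\ref{lem:keylemma}, a page-long computation --- is that $\langle d\lambda_\sigma(z),\cU\hat X\rangle-\lambda_\sigma(z)\langle d\lambda_\sigma(z),\hat X\rangle$, corrected by the third term in~\eqref{key-for}, equals an explicit multiple of $z\lambda_\sigma'(z)$, so the leftovers vanish only at a critical point $z=p$, $\sigma=\sigma(p)$; without establishing such an identity your pairing argument has no content. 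Relatedly, your discrete-case heuristic predicts the scalar $\lambda(z_i)-z_i\lambda'(z_i)=\lambda(z_i)=-u_i$, which contradicts the eigenvalue $u_i=-\lambda(z_i)$ you set out to verify; this is not a typo-level slip but a symptom of the fact that the ``delta function at $z_i$'' intuition fails --- the relevant superpotential on $D_\infty$ is $\lambda_0=-\lambda$, and the correct sign is again delivered by the key lemma (set $\sigma=0$, $z=z_i$). Finally, the proposition asserts that the spectrum is \emph{exhausted} by $\{u_p,u_i,\bar u_j\}$; you never show that no other $\mu$ admits a (generalized) eigenvector. Since a generalized eigenvector must annihilate the image of $\cU-\mu$, this requires surjectivity of $\cU-\mu$ for $\mu$ outside the listed set, which the paper obtains by transporting $\cU$ to the diagonal model $U$ via the isomorphism $\Psi$ (so surjectivity of $\Psi$, not just injectivity, is needed) and using invertibility of multiplication by $u_z-\mu$ on $\cH(S^1)$; this part is simply missing from your proposal.
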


Actually, the completeness of the set of eigenvectors is realized via an explicit isomorphism
\begin{align} \label{eq:isoooo}
\Psi : T_{\hat{\lambda}} M  &\longrightarrow \cH(S^1) \oplus \C^n \oplus \C^{\bar{n}} \nonumber \\
\hat{X} &\longmapsto (\langle du_z, \hat{X} \rangle, \langle du_i, \hat{X} \rangle, \langle d\bar{u}_j, \hat{X} \rangle ). 
\end{align}

\begin{corollary} \label{cor:representation}
The operator $\cU$ in the representation given by $\Psi$, i.e. $U := \Psi \mathcal{U} \Psi^{-1}$, is diagonal 
\begin{align}
U: \cH (S^1) \oplus \C^n \oplus \C^{\bar{n}} &\longrightarrow \cH (S^1) \oplus \C^n \oplus \C^{\bar{n}} \nonumber \\
\hat{Y} = (Y(z), \{ Y_i \}_{i = 1, \dots, n}, \{ \bar{Y}_j \}_{j = 1, \dots, \bar{n}}) &\longmapsto U (\hat{Y}) = (u_z Y(z), \{ u_i Y_i \}_i, \{ \bar{u}_j \bar{Y}_j \}_j ).\label{eqn:diagU}
\end{align}
\end{corollary}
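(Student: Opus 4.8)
\textbf{Plan for the proof of Corollary~\ref{cor:representation}.}
The statement to be proved is purely formal once Proposition~\ref{prop:canonical} is in hand: it asserts that transporting $\cU$ through the isomorphism $\Psi$ of~\eqref{eq:isoooo} yields the diagonal multiplication operator~\eqref{eqn:diagU}. The plan is to unwind the definition of $\Psi$ and the definition of the generalized spectrum, and to observe that the pairing with an eigenvector of $\cU^*$ is, by construction, an eigenfunctional. Concretely, for $\hat X \in T_{\hat\lambda}M$ write $\Psi(\hat X) = (Y(z), \{Y_i\}, \{\bar Y_j\})$ with $Y(p) = \langle du_p, \hat X\rangle$, $Y_i = \langle du_i, \hat X\rangle$, $\bar Y_j = \langle d\bar u_j, \hat X\rangle$. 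Then I would compute the three components of $\Psi(\cU \hat X)$ one at a time.

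For the continuous component, $(\Psi \cU \hat X)(p) = \langle du_p, \cU \hat X\rangle$. By Proposition~\ref{prop:canonical}(3), $du_p$ is a generalized eigenvector of $\cU$ with generalized eigenvalue $u_p$, i.e.\ $\langle du_p, \cU \hat X\rangle = u_p \langle du_p, \hat X\rangle = u_p Y(p)$ for all $\hat X$; this is precisely the first component of $U(\Psi \hat X)$ in~\eqref{eqn:diagU}. For the two discrete blocks, $(\Psi \cU \hat X)_i = \langle du_i, \cU \hat X\rangle$, and by Proposition~\ref{prop:canonical}(1) the functional $du_i$ is an eigenvector of $\cU^*$ with eigenvalue $u_i$ (equivalently, $du_i$ as an element of $T^*_{\hat\lambda}M$ is fixed by $\cU^* - u_i$), so $\langle du_i, \cU \hat X\rangle = u_i \langle du_i, \hat X\rangle = u_i Y_i$; identically $\langle d\bar u_j, \cU \hat X\rangle = \bar u_j \bar Y_j$ using part~(2). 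Assembling the three blocks gives $\Psi(\cU \hat X) = (u_p Y(p), \{u_i Y_i\}, \{\bar u_j \bar Y_j\}) = U(\Psi \hat X)$, hence $\Psi \cU = U \Psi$, i.e.\ $U = \Psi \cU \Psi^{-1}$; here I use that $\Psi$ is a bijection, which is exactly the completeness clause of Proposition~\ref{prop:canonical} (so $\Psi^{-1}$ exists and the conjugation makes sense on all of $\cH(S^1)\oplus\C^n\oplus\C^{\bar n}$).

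The only mild subtlety — and the single point I would make sure to state carefully rather than leave implicit — is the distinction between $\cU$ acting on the tangent space and its transpose $\cU^*$ acting on the cotangent space: the identity $\langle du_p, \cU \hat X\rangle = u_p\langle du_p,\hat X\rangle$ is literally the \emph{defining} relation $\langle \cU^* du_p, \hat X\rangle = \langle u_p\, du_p, \hat X\rangle$ of a generalized eigenvector, valid for every $\hat X$, so no density or continuity argument is needed; the equality of the resulting functions on $S^1$ (resp.\ vectors in $\C^n$, $\C^{\bar n}$) is pointwise and immediate. Thus there is no real obstacle: the content is entirely in Proposition~\ref{prop:canonical}, and the corollary is a one-line consequence of combining the eigenvector relations with the bijectivity of $\Psi$. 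I would write it as a short displayed chain of equalities for each of the three components, followed by the sentence that $\Psi$ intertwines $\cU$ and $U$.
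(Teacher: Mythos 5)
Your argument is correct and is essentially the paper's own proof: the paper also establishes the corollary by computing $\Psi \cU \hat{X} = (\langle du_p, \cU\hat{X}\rangle, \langle du_i, \cU\hat{X}\rangle, \langle d\bar{u}_j, \cU\hat{X}\rangle) = (u_p Y(p), u_i Y_i, \bar{u}_j \bar{Y}_j)$ via the (generalized) eigenvector relations of Proposition~\ref{prop:canonical} and then invoking the bijectivity of $\Psi$. Your explicit remark that the eigenvector identity is exactly the defining relation of $\cU^*$ on the cotangent space, valid for all $\hat{X}$, is a fine clarification but adds nothing beyond the paper's one-line observation.
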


We now proceed to prove Proposition~\ref{prop:canonical} first by an explicit approach at the special point in the following section, then in the general case in \S\ref{sec:proooof}. In \S\ref{sec:keyyy} we prove a key lemma that will also be used in later sections.

\subsection{Proof at the special point} 
\label{sec:canonicalspecial} 

At the special point $\hat{\lambda}_0 = (z, v, u)$, the operator $\cU$ takes the simpler form~\eqref{eqn:Uatsimplepoint}. This allows us to give an explicit proof of the proposition. It is evident in this proof that the formula for the canonical coordinates emerges from and is uniquely determined by the form of the operator $\cU$. The first part of Proposition~\ref{prop:canonical} can be restated as

\begin{lemma}
	The operator $\cU$ acting on $T_{\hat{\lambda}_0} M \cong \cH(S^1) \oplus \C^2$ has the following eigenvalues and eigenvectors
	\begin{align}
	u_{\pm} = v \pm 2 \ii e^{u/2}, \qquad du_{\pm} = ( \pm z (z \pm \ii e^{u/2})^{-1}, \mp 1, \ii e^{-u/2}  ),
	\end{align}
	iff $|e^u| > 1$ and the following generalized eigenvalues and eigenvectors
	\begin{align}
	u_p = v + \frac{2e^u}{p}, \qquad \langle du_p, \hat{X} \rangle = \frac{e^u}{p^2} X(p) + X_{\geq 1} (p) + X_v + \frac{e^u}{p} X_u,
	\end{align}
	for $p \in S^1$.
\end{lemma}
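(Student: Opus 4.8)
The plan is to verify directly that the claimed vectors and functionals satisfy the defining (generalized) eigenvalue equations using the explicit form~\eqref{eqn:Uatsimplepoint} of $\cU$ at $\hat\lambda_0$, and then to argue that these exhaust the spectrum. For the discrete part, I would take an ansatz $\hat X = (X(z), X_v, X_u)$ with $X(z) = c\, z/(z-z_0)$ for some $z_0 \in D_\infty$ (the pole lies outside $S^1$, so this is a legitimate element of $\cH(S^1)$, in fact of $\cH(D_\infty)$), plug it into $\cU(\hat X) = ((v+2e^u z^{-1})X(z) + 2e^u X_u,\ 2e^u X_1 + v X_v,\ 2X_v + v X_u)$, and demand that the result equal $\mu\,\hat X$. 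The first component forces $(v + 2e^u z^{-1})X(z) + 2e^u X_u = \mu X(z)$; expanding $z^{-1}X(z)$ in a Laurent series on $S^1$ and matching, the only way to cancel the $z^{-1}$-tail created by multiplication by $z^{-1}$ is to have $z_0$ be a critical point of $\lambda_0(z) = z - v - e^u z^{-1}$, i.e. $z_0 = \pm \ii e^{u/2}$, with $z_0 \in D_\infty$ precisely when $|e^u|>1$. The second and third scalar components then become two linear equations in $X_v, X_u$ that fix them up to scale, giving $\mu = u_\pm = v \pm 2\ii e^{u/2}$ and the stated $du_\pm$ after normalization. I would present this as: first guess the eigenvector shape from the finite-dimensional superpotential picture (critical values of $-\lambda_0$ on $D_\infty$, of $\bar\lambda_0$ on $D_0$), then check it solves the system.

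For the continuous part, the functional $\langle du_p, \hat X\rangle = \tfrac{e^u}{p^2}X(p) + X_{\geq 1}(p) + X_v + \tfrac{e^u}{p}X_u$ should be checked against $\langle du_p, \cU\hat X\rangle = u_p \langle du_p, \hat X\rangle$ with $u_p = v + 2e^u/p$. Here the key computational input is how the operations $X \mapsto (v + 2e^u z^{-1})X(z)$ and the projections interact with evaluation at $p \in S^1$ and with the projection $(\,\cdot\,)_{\geq 1}$; in particular one needs $\bigl((v+2e^u z^{-1})X(z)\bigr)_{\geq 1}$ and $X_1$ expressed via $X(p)$ and $X_{\geq 1}(p)$. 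I expect this to reduce, after collecting the $X(p)$, $X_{\geq 1}(p)$, $X_v$, $X_u$ coefficients separately, to the single scalar identity $u_p = v + 2e^u/p$ plus the fact that multiplication by $z^{-1}$ shifts the Laurent index by one. The functional $du_p$ is manifestly not representable (it involves the pointwise value $X(p)$ on $S^1$ with a distributional flavor), which is why it lives in $T^*_{\hat\lambda_0}M$ and not in its representable subspace — consistent with the remarks preceding the statement.

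Finally, for completeness and exhaustiveness I would invoke the isomorphism $\Psi$ of~\eqref{eq:isoooo}: the map $\hat X \mapsto (\langle du_z, \hat X\rangle,\ \langle du_+,\hat X\rangle,\ \langle du_-,\hat X\rangle)$ from $\cH(S^1)\oplus\C^2$ to $\cH(S^1)\oplus\C^2$ should be shown to be a linear isomorphism — injectivity gives completeness, and surjectivity (or a dimension/index count, writing $\langle du_z,\hat X\rangle$ in Laurent coefficients) shows there is no room for further spectrum. Concretely, from $\langle du_z, \hat X\rangle = \tfrac{e^u}{z}\bigl(z^{-1}X(z) + X_u\bigr) + X_{\geq 1}(z) + X_v$ and the two scalars $\langle du_\pm, \hat X\rangle$ one can solve back for $X_{\leq 0}(z)$, $X_{\geq 1}(z)$, $X_v$, $X_u$, recovering $\hat X$; this also makes $U = \Psi\,\cU\,\Psi^{-1}$ diagonal as in Corollary~\ref{cor:representation}. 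The main obstacle I anticipate is the bookkeeping in the continuous eigenvalue check: correctly handling the $z^{-1}$-multiplication together with the $(\,\cdot\,)_{\geq 1}$, $(\,\cdot\,)_0$, $(\,\cdot\,)_1$ projections so that all non-$u_p$ terms cancel, and making sure the boundary term $X_1$ appearing in the second component of $\cU$ is matched by the $X_{\geq 1}(p)$ piece of $du_p$ rather than spuriously contributing. Everything else is linear algebra in two dimensions plus elementary contour manipulations.
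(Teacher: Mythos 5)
Your verification half is sound: at $\hat\lambda_0$ the direct check that $\langle du_p,\cU\hat X\rangle=u_p\langle du_p,\hat X\rangle$ does go through exactly along the lines you anticipate (the $-2e^uX_1$ produced by $(z^{-1}X)_{\geq1}(p)=p^{-1}X_{\geq1}(p)-X_1$ cancels against the $+2e^uX_1$ coming from the second component of $\cU$), and your rational ansatz for the discrete eigenvectors is in fact forced, since the first component $(\mu-v-2e^uz^{-1})X(z)=2e^uX_u$ gives $X(z)=p\,X_u\,z/(z-p)$ with $p=2e^u/(\mu-v)$. However, your stated mechanism for the discrete quantization is wrong: the first component does not force the pole to be a critical point of $\lambda_0$ by ``cancelling a $z^{-1}$-tail''; it only ties the pole location to $\mu$ and the constant to $X_u$. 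The condition $p^2=-e^u$ --- and with it the whole ``iff $|e^u|>1$'' --- comes from the two scalar components combined with the value of the Laurent coefficient $X_1$ of $X(z)=pX_u z/(z-p)$ on $S^1$, and that value depends on where the pole sits: $X_1=-X_u$ if $|p|>1$, whereas $X_1=0$ if $|p|<1$, in which case the scalar equations force $X_v=X_u=0$ and hence $\hat X=0$. A pole inside the unit disc is a perfectly admissible singularity for an element of $\cH(S^1)$, so your remark that ``$z_0\in D_\infty$ precisely when $|e^u|>1$'' does not by itself rule out eigenvalues when $|e^u|<1$; as written, your argument only proves the ``if'' direction of the iff.

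The second gap is exhaustiveness. Since the lemma is a restatement of the first part of Proposition~\ref{prop:canonical}, it asserts that these are \emph{all} the eigenvalues and generalized eigenvalues (and, for each $p\in S^1$, that $du_p$ spans the generalized eigenspace). You delegate this to the claim that $\Psi$ is an isomorphism conjugating $\cU$ to the diagonal operator $U$; that route can be completed, but it is not free: you must actually construct $\Psi^{-1}$ (or quote~\eqref{eqn:inversepsi1}--\eqref{eqn:inversepsi2}, which only appear later in \S\ref{sec:proooof}) and then determine the generalized spectrum of $U$, which needs the invertibility of multiplication by $u_z-\lambda$ on $\cH(S^1)$ for $\lambda\notin\{u_p\}$ and the identification of the evaluation functionals as the only solutions when $\lambda=u_p$ --- essentially the analysis the paper performs anyway. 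The paper's own proof avoids this detour: it solves $\cU\hat X=\mu\hat X$ and the transposed equation $\langle\omega,(\cU-\mu)\hat X\rangle=0$ for \emph{arbitrary} $\mu$, with the trichotomy $|p|<1$, $|p|>1$, $|p|=1$ for $p=2e^u/(\mu-v)$, thereby obtaining existence, the iff, the identification of the $|p|\neq1$ generalized eigenvectors with $\eta_*du_\pm$, and the uniqueness of $du_p$ for $|p|=1$ (via the decomposition of any $X$ as $X(p)$ plus a function vanishing at $p$) in a single pass. To repair your write-up, either redo the discrete case tracking $X_1$ as above and add the uniqueness argument for $|p|=1$, or fully establish the $\Psi$-conjugation (injectivity and the spectrum of $U$) before invoking it.
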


\begin{proof}
First, let us compute the eigenvalues and eigenvectors. The equation $\mathcal{U} (\hat{X}) = \mu \hat{X}$ takes the explicit form
\begin{align}
&(v + 2 e^u z^{-1}) X (z) + 2 e^u X_u = \mu X(z),  \\
&2e^u X_1 + v X_v = \mu X_v,  \\
&2 X_v + vX_u = \mu X_u .
\end{align}
For $\mu = v$ the system becomes 
\begin{align}
&z^{-1} X(z) +  X_u = 0,  \\
&X_1 = 0,  \\
&X_v = 0 .
\end{align}
The first equation implies that the only possibly non-zero coefficient of the Laurent expansion $X (z) = \sum_{k \in \mathbb{Z}} X_k z^k$ is $X_1$, which is zero by the second equation. Thus, $X_u$ also vanishes and $\hat{X} = 0$, so $\mu = v$ is not an eigenvalue. Therefore, we can assume $\mu \not= v$. 

Let $p = \frac{2 e^u}{\mu - v}$. The system becomes
\begin{align}
&(z-p) X(z) = p^3 e^{-u} z X_1, \\
& p X_1 = X_v, \\
& p^2 e^{-u} X_1 = X_u.
\end{align}
We rewrite the first equation as
\begin{align}
\frac{X(z)}{z} = \frac{p^3 e^{-u} X_1}{z - p} .
\end{align}
If $|p| = 1$, the function $X(z)$ defined as above would have a single pole at $p$, so it would not be an element of $\cH (S^1)$. Extracting the zeroth coefficient of the Laurent expansion of the left-hand side yields
\begin{align}
1 = \frac{1}{2 \pi \ii} \oint_{|z|=1} \frac{p^3 e^{-u}}{z-p} \frac{dz}{z} .
\end{align}
If $|p| < 1$, the two poles of the integrand lie inside the unit circle, so the integral vanishes and the equation admits no solutions. If $|p| > 1$, we obtain $e^u = -p^2$, which has two solutions iff $|e^u| > 1$, namely $p_{\pm} = \mp \ii e^{u/2}$, which correspond to the eigenvalues $u_{\pm}$ and the eigenvectors $du_{\pm}$.
	
Let us now compute the generalized eigenvalues. Let $\rho = \mu - v$, then the generalized eigenvalue equation takes the form
\begin{align}
\langle \omega_z, (2e^u z^{-1} - \rho) X(z) + 2e^u X_u  \rangle + \langle \omega_v,  2 e^u X_1 - \rho X_v  \rangle + \langle \omega_u, 2X_v - \rho X_u \rangle = 0,
\end{align}
for a functional $\omega = \omega_z + \omega_v + \omega_u$. If $\rho = 0$, then the previous equation becomes
\begin{align}
\langle \omega_z, 2e^u z^{-1}  X(z) + 2e^u X_u  \rangle + \langle \omega_v,  2 e^u X_1  \rangle + \langle \omega_u, 2X_v \rangle = 0 .
\end{align}
Choosing $\hat{X} = (0, X_v, 0)$ implies $\omega_u = 0$. Choosing $\hat{X} = (0, 0, X_u)$ implies $\omega_z$ is zero on constants. Then choosing $\hat{X} = (X_1 z, 0, 0)$ shows that $\langle \omega_z, 2 e^u z^{-1} X(z) \rangle = \langle \omega_z, 2 e^u X_1 \rangle = 0$ because the argument is constant, so we can conclude that $\omega_v = 0$. Finally, we choose $\hat{X} = (X(z), 0, 0)$, which shows $\omega_z = 0$. Therefore, we can assume $\rho \not= 0$.

Let $p = \frac{2e^u}{\rho}$. Substituting in the equation above, we obtain
\begin{align}
\left\langle \omega_z,  \left( \frac{p}{z} - 1 \right) X(z) + p X_u \right\rangle + \langle \omega_v,  p X_1 - X_v \rangle + \langle \omega_u, e^{-u} p X_v - X_u \rangle = 0 .
\end{align}
Choosing $\hat{X} = (0, 0, X_u)$ implies $\omega_u = p \langle \omega_z, 1 \rangle$. Choosing $\hat{X} = (0, X_v, 0)$ implies $\omega_v = e^{-u} p \omega_u = e^{-u} p^2 \langle \omega_z, 1 \rangle$. Substituting and setting $\hat{X}= (X(z),0,0)$ yields
\begin{align}
\left\langle \omega_z,  \left( \frac{p}{z} - 1 \right) X(z) \right\rangle+ e^{-u} p^3 X_1 \langle \omega_z, 1 \rangle = 0 \label{eqn:omegaz}.
\end{align}

Consider first the case $|p| \not= 1$. Multiplication by $\left( \frac{p}{z} - 1 \right) $ is then invertible in $\cH (S^1)$, so we obtain
\begin{align}
\langle \omega_z,  X(z) \rangle = - e^{-u} p^3 \left( \frac{z}{p-z} X(z) \right)_1 \langle \omega_z, 1 \rangle .
\end{align}
Clearly $\omega_z = 0$ iff $\langle \omega_z, 1 \rangle = 0$. Thus, we can assume $\langle \omega_z, 1 \rangle  \not= 0$ and, without loss of generality, take $\langle \omega_z, 1 \rangle = 1$. Setting $X(z) = 1$ gives the equation
\begin{align}
1 = -e^{-u} p^3 \left( \frac{z}{p-z} \right)_1 = -e^{-u} p^3 \left( \frac{1}{p-z} \right)_0 .
\end{align}
If $|p| < 1$, the right-hand side vanishes, so there is no solution. If $|p| > 1$, the equation becomes $p^2 = -e^u$, which admits the solutions $p_{\pm } = \mp \ii e^{u/2}$ when $|e^u|>1$. 
The generalized eigenvectors $\omega_\pm$ associated with $p_{\pm}$ have eigenvalues $u_{\pm}$ and correspond to the eigenvectors $du_{\pm}$ computed above,
more precisely $\eta_* du_\pm = -\ii e^{-u/2} \omega_\pm$. 

Finally, let us consider the case $|p| = 1$. One can check that the functional given by
\begin{align}
\langle \omega_z, X(z) \rangle = e^{-u} p^2 X_{\geq 1} (p) + X(p) \label{eqn:omegazdef}
\end{align}
satisfies $\langle \omega_z, 1 \rangle  = 1$ and equation~\eqref{eqn:omegaz}. Let us now show that it is the only solution for fixed $p$ with $|p|=1$. Let $\alpha_z$ be a solution of~\eqref{eqn:omegaz} with $\langle \alpha_z, 1 \rangle = 0$. Then $\alpha_z$ is zero on the subspace $\left( \frac{p}{z} - 1 \right) \cH (S^1)$, which is the subspace of $\cH (S^1)$ of functions vanishing at $z = p$. Therefore, 
\begin{align}
\langle \alpha_z, X(z) \rangle = \langle \alpha_z,  X(p) \rangle + \langle \alpha_z, (X(z) - X(p)) \rangle = X(p) \langle \alpha_z, 1 \rangle = 0, 
\end{align}
so $\alpha_z = 0$. Now let $\omega^{\prime}_z$ be a solution of~\eqref{eqn:omegaz} with $\langle \omega^{\prime}_z, 1 \rangle \not=0$. We can renormalize it and consider the case $\langle \omega^{\prime}_z, 1 \rangle = 1$. Then $\omega_z - \omega^{\prime}_z$ is a solution of~\eqref{eqn:omegaz} vanishing on 1, so it must be identically zero, hence $\omega^{\prime}_z = \omega_z$. The result follows by noting that $\omega = e^{-u} p^2 du_p$.
\end{proof}

\begin{remark}
Notice that in this case we have
\begin{equation}
\langle du_p, \hat{X} \rangle = \left( \left( \frac{e^u}{p^2} +1 \right) X \right)_{\geq1} + \frac{e^u}{p^2} X_{\leq0} 
+\frac{e^u}{p} (X_1 + X_u ) + (X_v + e^u X_2  ).
\end{equation}
In the case $|e^u| <1$, one can easily check that knowing $Y(p) = \langle du_p, \hat{X} \rangle$ is sufficient to reconstruct $\hat{X}$, showing completeness. However, in the case $|e^u|>1$, we also need to know $Y_{\pm} = \langle du_{\pm}, \hat{X} \rangle$ to invert~\eqref{eq:isoooo}. In Section~\ref{sec:proooof}, we will give a general formula for $\Psi^{-1}$.
\end{remark}

\subsection{A key lemma} 
\label{sec:keyyy}

The following lemma will be used in the general proof of Proposition~\ref{prop:canonical}  and also in Section~\ref{sec:analyticsolutions}.

\begin{lemma} \label{lem:keylemma}
Let $\hat{X} = (X(z), X_v, X_u) \in \cH(S^1) \oplus \C^2$. The function
\begin{align} \label{key-for}
\langle d\lambda_\sigma(z), \mathcal{U} \hat{X} \rangle - \lambda_\sigma(z) \langle d\lambda_\sigma(z), \hat{X} \rangle + z \lambda^{\prime}_\sigma(z) \left\langle d\lambda_\sigma(z), \left( \frac{w(z)}{z w^{\prime}(z)} X(z), 0, -X_u \right) \right\rangle
\end{align}
is a scalar multiple of $z \lambda_\sigma^{\prime}(z)$, namely it is equal to
\begin{align}
z \lambda_{\sigma}^{\prime}(z) \left[ \left( \left(1-\frac{w(z)}{z w^{\prime}(z)} \right) X(z) \right)_0 - X_v \right].
\end{align}
\end{lemma}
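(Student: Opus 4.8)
The plan is to turn the identity into a bounded computation with the explicit formula~\eqref{eqn:defnU} for $\cU$, after exploiting that the asserted equality is affine in $\sigma$. First I would record the pairing with $d\lambda_\sigma(z)$ in $w$-coordinates: since $\lambda_\sigma=(\sigma-1)\lambda+\sigma\bar\lambda=-\lambda+\sigma w$, differentiating the inversion formulas of~\eqref{eqn:pairstows} gives, for every $\hat Y=(Y(z),Y_v,Y_u)\in\cH(S^1)\oplus\C^2$,
\[
\langle d\lambda(z),\hat Y\rangle=Y_{\leq0}(z)-Y_v-e^uY_uz^{-1},\qquad
\langle d\bar\lambda(z),\hat Y\rangle=Y_{\geq1}(z)+Y_v+e^uY_uz^{-1},
\]
hence $\langle d\lambda_\sigma(z),\hat Y\rangle=(\sigma-1)Y_{\leq0}(z)+\sigma Y_{\geq1}(z)+Y_v+e^uY_uz^{-1}$ and, in particular, $\langle dw(z),\hat Y\rangle=Y(z)$. (Setting $\sigma=\sigma(p)$ at the special point recovers the formula for $\langle du_p,\hat X\rangle$ of~\S\ref{sec:canonicalspecial}.)

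Next I would note that both sides of the claimed identity are affine in $\sigma$. On the right this is clear, since $\lambda'_\sigma=-\lambda'+\sigma w'$ and $\big((1-\tfrac{w}{zw'})X\big)_0-X_v$ is independent of $\sigma$. On the left, $\langle d\lambda_\sigma(z),\cU\hat X\rangle$ is affine because $d\lambda_\sigma$ is, while the two remaining terms are products of affine functions of $\sigma$ whose $\sigma^2$-coefficients are opposite: that of $-\lambda_\sigma\langle d\lambda_\sigma,\hat X\rangle$ is $-w(z)X(z)$, and that of $z\lambda'_\sigma\big\langle d\lambda_\sigma,\big(\tfrac{w}{zw'}X,0,-X_u\big)\big\rangle$ is $zw'(z)\tfrac{w(z)}{zw'(z)}X(z)=w(z)X(z)$, the $\cH(S^1)$-component of $\big(\tfrac{w}{zw'}X,0,-X_u\big)$ being $\tfrac{w}{zw'}X$. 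Two affine functions of $\sigma$ agreeing at $\sigma=0$ and $\sigma=1$ coincide, so it is enough to check the identity at those two values, where $\lambda_\sigma$ equals $-\lambda$ and $\bar\lambda$ respectively; writing $V:=\big(\tfrac{w}{zw'}X,0,-X_u\big)$ these read
\[
\langle d\lambda(z),\cU\hat X\rangle=-\lambda(z)\langle d\lambda(z),\hat X\rangle+z\lambda'(z)\Big[\langle d\lambda(z),V\rangle+\big((1-\tfrac{w}{zw'})X\big)_0-X_v\Big],
\]
\[
\langle d\bar\lambda(z),\cU\hat X\rangle=\bar\lambda(z)\langle d\bar\lambda(z),\hat X\rangle+z\bar\lambda'(z)\Big[\big((1-\tfrac{w}{zw'})X\big)_0-X_v-\langle d\bar\lambda(z),V\rangle\Big].
\]

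These two ``pure'' identities I would prove by plugging the three components of $\cU\hat X$ from~\eqref{eqn:defnU} into $\langle d\lambda(z),\cU\hat X\rangle=(\cU\hat X)_{\leq0}(z)-(\cU\hat X)_v-e^uz^{-1}(\cU\hat X)_u$ and into the $(\,\cdot\,)_{\geq1}$-analogue for $\bar\lambda$, then simplifying. The working tools are the elementary projector identities $f=f_{\geq p}+f_{\leq p-1}$ and $(z^{\pm1}f)_{\geq p}=z^{\pm1}f_{\geq p\mp1}$, the vanishing $(zw')_0=0$, the expansion $\lambda=w_{\leq0}+z-v-e^uz^{-1}$ (and the corresponding one for $\lambda'$), and, most importantly, the fact that multiplying by a factor with a single zero or pole on $S^1$ and then projecting differs from projecting first by one ``anomaly'' residue, such as $(z^{-1}X)_{\geq1}-z^{-1}X_{\geq1}=-X_1$. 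Once everything is expanded, the terms carrying the factor $w-zw'$ together with the $v$-terms assemble into $-\lambda(z)\langle d\lambda(z),\hat X\rangle$ (resp.\ $\bar\lambda(z)\langle d\bar\lambda(z),\hat X\rangle$), the nested $(\,\cdot\,)_{<0}$ and $(\,\cdot\,)_{\geq0}$ projections of~\eqref{eqn:defnU} reproduce $\pm z\lambda'(z)\langle d\lambda(z),V\rangle$, and the single contribution left over is precisely $z\lambda'(z)\big[\big((1-\tfrac{w}{zw'})X\big)_0-X_v\big]$.

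The only real difficulty is the bookkeeping in this last step: formula~\eqref{eqn:defnU} has many terms with nested projections, and it is the constant-term residues surviving all cancellations that carry the content, so signs and which projection acts where must be tracked carefully. Two reductions keep this under control. By the affine-in-$\sigma$ argument one never has to play the $(\,\cdot\,)_{\geq1}$ projections against each other, i.e.\ only the $\sigma$-independent part of the computation is needed. And, exactly as in~\S\ref{sec:canonicalspecial}, one can verify each ``pure'' identity separately on $\hat X=(0,X_v,0)$, on $\hat X=(0,0,X_u)$, and on $\hat X=(X(z),0,0)$: the first isolates the $X_v$-anomaly, the second is a short computation, and only the last requires genuine work. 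The special-point computation of~\S\ref{sec:canonicalspecial}, which is the $w(z)=z$ shadow of this argument (where $w-zw'=0$ collapses most of~\eqref{eqn:defnU}), serves as a consistency check throughout.
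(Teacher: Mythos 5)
Your structural reductions are all correct, and the underlying route is essentially the paper's: a direct componentwise evaluation of the expression using the explicit formula~\eqref{eqn:defnU} for $\cU$ and the pairing $\langle d\lambda_\sigma(z),\hat Y\rangle=(\sigma-1)Y_{\leq 0}(z)+\sigma Y_{\geq 1}(z)+Y_v+e^uY_uz^{-1}$ (equivalent to the form $(\sigma-1)Y(z)+Y_{\geq1}(z)+Y_v+e^uY_u z^{-1}$ used in the paper), split over $\hat X=(X(z),0,0)$, $(0,X_v,0)$, $(0,0,X_u)$. The one genuine difference is organizational: you note that both sides are affine in $\sigma$ — the $\sigma^2$-contributions $-w(z)X(z)$ and $+w(z)X(z)$ of the second and third terms cancel, since $\langle dw(z),\hat Y\rangle=Y(z)$ — so it suffices to verify the identity at $\sigma=0$ and $\sigma=1$, i.e.\ for the pure superpotentials $-\lambda$ and $\bar\lambda$, and your two specialized identities are indeed the correct specializations. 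The paper instead expands at general $\sigma$ and cancels the $(\sigma-1)^2$, $(\sigma-1)e^u$ and $v$ cross-terms by hand, which is the same cancellation kept inside a single computation; your device buys slightly cleaner bookkeeping (each check pairs $\cU\hat X$ against only one of the projections $(\,\cdot\,)_{\leq0}$ or $(\,\cdot\,)_{\geq1}$), at the price of two separate verifications whose total content matches the paper's $\sigma$-independent and $\sigma$-linear groupings. Be aware, however, that as written the substantive core — actually expanding $\langle d\lambda(z),\cU\hat X\rangle$ and $\langle d\bar\lambda(z),\cU\hat X\rangle$ via~\eqref{eqn:defnU} and checking that the nested projections assemble as you describe, leaving only the residue $\bigl(\bigl(1-\tfrac{w(z)}{zw'(z)}\bigr)X(z)\bigr)_0-X_v$ — is only sketched, not carried out; this is precisely the lengthy computation the paper writes out, and it does go through, but your argument is complete only once that expansion is executed.
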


\begin{proof}
		Let us rewrite $\lambda_\sigma(z)$ as a triple in $\cH(S^1) \oplus \mathbb{C}^2$ 
		\begin{align}
		\lambda_\sigma(z) &= (\sigma - 1) w(z) + w_{\geq 1} (z) - z + v + \frac{e^u}{z}, \\ 
		\langle d\lambda_\sigma(z), \hat{X} \rangle &= (\sigma - 1) X(z) + X_{\geq 1} (z) + X_v + \frac{e^u}{z} X_u.
		\end{align}
		We proceed componentwise. Let $E(z)$ denote expression~\eqref{key-for}, and let us expand $E(z)$ for $\hat{X} = (X(z), 0, 0)$
		\begin{align*}
		E(z) =&(\sigma-1) (w(z)-zw^{\prime}(z)) \left( X_{\geq 1}(z) - (z w^{\prime}(z))_{\geq 1} \frac{X(z)}{z w^{\prime}(z)} + \frac{X(z)}{w^{\prime}(z)} + \frac{e^u}{z} \frac{X(z)}{z w^{\prime}(z)} \right) \\ 
		+& (\sigma - 1) z w^{\prime} (z) \left( \left( (w(z) - z w^{\prime}(z))_{>0} \frac{X(z)}{z w^{\prime}(z)} \right)_{<0} - \left( (w(z) - zw^{\prime}(z))_{\leq 0} \frac{X(z)}{z w^{\prime}(z)} \right)_{\geq 0} \right) \\ 
		+& (\sigma - 1) \left( 2 \frac{e^u}{z} X(z) + v X(z) \right) + \left(2 \frac{e^u}{z} X(z) + v X(z)  \right)_{\geq 1} \\
		+& \left( (w(z)-zw^{\prime}(z)) \left( X_{\geq 1}(z) - (z w^{\prime}(z))_{\geq 1} \frac{X(z)}{z w^{\prime}(z)} + \frac{X(z)}{w^{\prime}(z)} + \frac{e^u}{z} \frac{X(z)}{z w^{\prime}(z)} \right) \right)_{\geq 1} \\
		+& \left( z w^{\prime} (z) \left( \left( (w(z) - z w^{\prime}(z))_{>0} \frac{X(z)}{z w^{\prime}(z)} \right)_{<0} - \left( (w(z) - zw^{\prime}(z))_{\leq 0} \frac{X(z)}{z w^{\prime}(z)} \right)_{\geq 0} \right) \right)_{\geq 1} \\
		+& e^u \left( (w(z) + z w^{\prime}(z )) \frac{X(z)}{z w^{\prime}(z)} \right)_1 + \frac{e^u}{z} \left( (w(z) - z w^{\prime}(z)) \frac{X(z)}{z w^{\prime}(z)} \right)_0 \\
		-& \left( (\sigma - 1) w(z) + w_{\geq 1} (z) - z + v + \frac{e^u}{z} \right) \left( (\sigma-1) X(z) + X_{\geq 1}(z) \right) \\
		+& \left( (\sigma - 1) z w^{\prime}(z) + (z w^{\prime}(z))_{\geq 1} - z - \frac{e^u}{z} \right) \left( (\sigma-1) \frac{w(z)}{z w^{\prime}(z)} X(z) + \left(\frac{w(z)}{z w^{\prime}(z)} X(z ) \right)_{\geq 1} \right).
		\end{align*}
		It is immediate to see that the terms with $(\sigma -1 )^2$, $v$, and $(\sigma-1)e^u$ cancel out. First, we simplify the rest of the terms with $e^u$, which equal
		\begin{align*}
		-\frac{e^u}{z} \left( \left(1-\frac{w(z)}{z w^{\prime}(z)} \right) X(z) \right)_0.
		\end{align*}
		Second, one can similarly see that the terms with $(\sigma-1)$ equal
		\begin{align}
		(\sigma - 1) z w^{\prime}(z) \left( \left(1-\frac{w(z)}{z w^{\prime}(z)} \right) X(z) \right)_0. \notag
		\end{align}
		Third, we split the remaining terms of $E(z)$ into two groups, the first one being
		\begin{align*}
		&z X_{\geq 1}(z) - z \left( \frac{w(z)}{z w^{\prime}(z)} X(z)  \right)_{\geq 1} + \left( \frac{w(z)}{w^{\prime}(z)} X(z) \right)_{\geq 1} - (z X(z))_{\geq 1} \\ &= -z  \left( \left(1-\frac{w(z)}{z w^{\prime}(z)} \right) X(z) \right)_0.
		\end{align*}
		Finally, we are left with
		\begin{align*}
		-& w_{\geq 1}(z) X_{\geq 1}(z) + (w(z) X_{\geq 1}(z))_{\geq 1} + (z w^{\prime}(z))_{\geq 1} \left( \frac{w(z)}{z w^{\prime}(z)} X(z) \right)_{\geq 1} \\
		-&\left( (zw^{\prime}(z))_{\geq 1} \frac{w(z)}{z w^{\prime}(z)} X(z ) \right)_{\geq 1} - (z w^{\prime} (z) X(z))_{\geq 1} + \left( ( zw^{\prime}(z))_{\geq 1} X(z) \right)_{\geq 1} \\
		+& \left( z w^{\prime} (z) \left( \left( (w(z) - z w^{\prime}(z))_{>0} \frac{X(z)}{z w^{\prime}(z)} \right)_{<0} - \left( (w(z) - zw^{\prime}(z))_{\leq 0} \frac{X(z)}{z w^{\prime}(z)} \right)_{\geq 0} \right) \right)_{\geq 1} \\ 
		=& (z w^{\prime}(z))_{\geq 1} \left( \left(1-\frac{w(z)}{z w^{\prime}(z)} \right) X(z) \right)_0.
		\end{align*}
		Putting everything together, 
		\begin{align}
		E(z) &= \left( (\sigma - 1) z w^{\prime}(z) + (z w^{\prime}(z))_{\geq 1} - z - \frac{e^u}{z} \right) \left( \left(1-\frac{w(z)}{z w^{\prime}(z)} \right) X(z) \right)_0 \\ &= z \lambda_\sigma^{\prime}(z) \left( \left(1-\frac{w(z)}{z w^{\prime}(z)} \right) X(z) \right)_0 \notag.
		\end{align}
		Let $\hat{X} = (0,1,0)$. In this case, it is immediate to see
		\begin{align}
		E(z) = -z \lambda_\sigma^{\prime}(z).
		\end{align}
		Finally, for $\hat{X} = (0,0,1)$, it is also a straightforward computation to check
		\begin{align}
		E(z) = 0,
		\end{align}
		concluding the proof.
\end{proof}

\subsection{Proof of Proposition~\ref{prop:canonical} and Corollary~\ref{cor:representation}} 
\label{sec:proooof}

Let $\hat{\lambda} = (\lambda(z), \bar{\lambda}(z)) \in M_0$ be such that $\lambda(z)$ has $n$ critical points in the interior of $D_{\infty}$ and $\bar{\lambda}(z)$ has $\bar{n}$ critical points in the interior of $D_0$. We take $\hat{\lambda}$ to be generic, i.e., none of the critical points is degenerate. 

The fact that the functionals $d\lambda(z_i)$, $d \bar\lambda(\bar{z}_i)$ and $d \lambda_\sigma (p)$ for $\sigma = \sigma(p)$ are generalized eigenvectors of $\cU$ simply follows from Lemma~\ref{lem:keylemma}. 
Indeed, let $z_i$ be one of the critical points of $\lambda(z)$, i.e. $\lambda'(z_i) =0$; substituting $\sigma=0$ and $z=z_i$ in~\eqref{key-for}, we get at once that
\begin{equation}
\langle d\lambda(z_i) , \cU \hat{X} \rangle = - \lambda(z_i) \langle d\lambda(z_i), \hat{X} \rangle
\end{equation}
for all $\hat{X}$, namely $d\lambda(z_i)$ is a generalized eigenvector corresponding to the eigenvalue $u_i = -\lambda(z_i)$. Similarly, setting $\sigma=1$ and $z=\bar{z}_i$, resp. $\sigma=\sigma(p)$ and $z=p$, we obtain the analogous statement for $d\bar\lambda(\bar{z}_i)$ and $\bar{u}_i$, resp. $(d\lambda_\sigma (p))|_{\sigma=\sigma(p)}$ and $u_p$.
By~\eqref{eqn:diffdiscrete} and~\eqref{eqn:diffcanonical}, we have
\begin{equation}
du_i = d \lambda(z_i), \qquad 
d\bar{u}_i = d \bar\lambda(\bar{z}_i), \qquad 
du_p = d\lambda_\sigma (p)|_{\sigma=\sigma(p)} .
\end{equation}
One can easily check that $du_i$ and $d\bar{u}_i$ are representable as~\eqref{eqn:duirep}, therefore they are eigenvectors. 

Let us now prove that this family of generalized eigenvectors is complete. For that, we will prove that the map 
\begin{align}
\Psi: T_{\hat{\lambda}}M &\longrightarrow \cH (S^1) \oplus \C^n \oplus \C^{\bar{n}} \\
\hat{X}  &\longmapsto ( \langle du_p, \hat{X} \rangle, \langle  du_i, \hat{X} \rangle, \langle d\bar{u}_j, \hat{X} \rangle )
\end{align}
defines an isomorphism of vector spaces.
Let us consider tangent vectors as pairs $\hat{X} = (X(z), \bar{X}(z)) \in  \cH(D_{\infty}) \oplus \frac{1}{z} \cH(D_0)$, and let 
\begin{equation} \label{eq:utt}
Y(p) = \langle du_p, \hat{X} \rangle,\qquad  
Y_i = \langle du_i, \hat{X} \rangle, \qquad 
\bar{Y}_i = \langle d\bar{u}_i, \hat{X} \rangle. 
\end{equation}
Explicitly, 
\begin{align}
Y(p) = \frac{\lambda^{\prime}(p)}{\lambda^{\prime}(p) + \bar{\lambda}^{\prime}(p)} \bar{X}(p) -  \frac{\bar{\lambda}^{\prime}(p)}{\lambda^{\prime}(p) + \bar{\lambda}^{\prime}(p)} X(p), \qquad
Y_i = -X(z_i), \qquad \bar{Y}_i = \bar{X}(\bar{z}_i).
\end{align}
It is enough to observe that the inverse $\Psi^{-1}$  is given by
\begin{align}
\bar{X}(p) &= \bar{\lambda}^{\prime}(p) \left[ \mu_{\hat{Y}}(p) + \left( \frac{\lambda^{\prime}(p) + \bar{\lambda}^{\prime}(p)}{\lambda^{\prime}(p)\bar{\lambda}^{\prime}(p)} Y(p) \right)_{\geq 1}  \right], \label{eqn:inversepsi1} \\
X(p) &= -\lambda^{\prime}(p) \left[ -\mu_{\hat{Y}}(p) + \left( \frac{\lambda^{\prime}(p) + \bar{\lambda}^{\prime}(p)}{\lambda^{\prime}(p)\bar{\lambda}^{\prime}(p)} Y(p) \right)_{\leq 0}  \right], \label{eqn:inversepsi2}
\end{align}
where
\begin{align}
\mu_{\hat{Y}}(p) =  \sum_{i=1}^{n} \frac{Y_i}{z_i \lambda''(z_i)} \frac{p}{z_i - p} - \sum_{i=1}^{\bar{n}} \frac{\bar{Y}_i}{\bar{z}_i \bar\lambda''(\bar{z}_i)} \frac{p}{\bar{z}_i - p}. \label{eqn:muY}
\end{align}

Let $\hat Y = (Y(p), Y_i, \bar{Y}_i) \in  \cH (S^1) \oplus \C^n \oplus \C^{\bar{n}} $ and $\hat{X} = \Psi^{-1} \hat{Y}$. 
Corollary~\ref{cor:representation} follows from observing that
\begin{equation}
\Psi \cU \hat{X} =  ( \langle du_p, \cU\hat{X} \rangle, \langle  du_i, \cU\hat{X} \rangle, \langle d\bar{u}_j, \cU\hat{X} \rangle ) =
(u_p Y(p),  u_i Y_i,  \bar{u}_j \bar{Y}_j  ).
\end{equation}

To conclude, we notice that the (generalized) eigenvalues of $\cU$ coincide with those of $U$, the eigenvectors being related by the isomorphism $\Psi$.
Notice that, since $\lambda_\sigma'(p)=0$ for $\sigma = \sigma(p)$, 
we have 
\begin{equation}
\frac{d u_p}{dp} = \sigma'(p) w(p).  \label{eqn:dupdp}
\end{equation}
Therefore, because of axioms~\ref{item:T3} and~\ref{item:T5}, $\frac{d u_p}{dp}$ is non-vanishing on $S^1$. 
This implies that the generalized eigenspaces are only those given in the proposition, see the following remarks for further details. 
\qed

\begin{remark}
Consider the operator $U$ on $\cH(S^1) \oplus \C^n \oplus \C^{\bar{n}}$ given in Corollary~\ref{cor:representation}, namely
\begin{equation}
U(X(z), X_i, \bar{X}_j ) = ( u_z X(z) , u_i X_i , \bar{u}_j \bar{X}_j ) 
\end{equation}
for $u_z \in \cH(S^1)$.
Clearly, the set of generalized and standard eigenvalues corresponds to the set of $u_p$, $u_i$ and $\bar{u}_j$, namely

\begin{lemma}
The spectrum of $U$ is given by 
$\{ u_p, u_i, \bar{u}_j \}_{p \in S^1, i=1, \dots, n, j = 1, \dots, \bar{n}}$.
\end{lemma}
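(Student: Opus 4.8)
The plan is to use the block-diagonal form of $U$ provided by Corollary~\ref{cor:representation}: with respect to the decomposition $\cH(S^1)\oplus\C^n\oplus\C^{\bar n}$ we have
$U = M_{u_z}\oplus\diag(u_1,\dots,u_n)\oplus\diag(\bar u_1,\dots,\bar u_{\bar n})$,
where $M_{u_z}$ is the operator of multiplication by the function $u_z\in\cH(S^1)$, i.e. the function $p\mapsto u_p$ on $S^1$. Since there are finitely many blocks, the algebraic dual splits as $\cH(S^1)^*\oplus(\C^n)^*\oplus(\C^{\bar n})^*$ and the transpose $U^*$ is again block-diagonal, with the transposed blocks. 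Hence a nonzero functional $\xi=(\xi_0,\xi_1,\xi_2)$ satisfies $\langle\xi,U\hat Y\rangle=\mu\langle\xi,\hat Y\rangle$ for all $\hat Y$ if and only if each nonzero component $\xi_k$ is a generalized eigenvector, with the same $\mu$, of the corresponding block. Therefore the (generalized) spectrum of $U$ is the union of the (generalized) spectra of the three blocks, and it suffices to treat each of them.

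The two finite-dimensional blocks are diagonal matrices, hence symmetric, so their generalized spectra coincide with their usual spectra, namely $\{u_1,\dots,u_n\}$ and $\{\bar u_1,\dots,\bar u_{\bar n}\}$. It remains to show that the generalized spectrum of $M_{u_z}$ on $\cH(S^1)$ is exactly $u_z(S^1)=\{u_p\mid p\in S^1\}$. A functional $\xi_0$ on $\cH(S^1)$ is a generalized eigenvector with eigenvalue $\mu$ precisely when $\langle\xi_0,(u_z-\mu)X\rangle=0$ for all $X\in\cH(S^1)$, i.e. when $\xi_0$ annihilates the subspace $(u_z-\mu)\cH(S^1)$. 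If $\mu\notin u_z(S^1)$, then $u_z-\mu$ is holomorphic and nowhere vanishing on some open neighbourhood of the compact set $S^1$, hence invertible in $\cH(S^1)$; consequently $(u_z-\mu)\cH(S^1)=\cH(S^1)$ and only $\xi_0=0$ annihilates it, so $\mu$ is not in the spectrum. Conversely, if $\mu=u_{p_0}$ for some $p_0\in S^1$, then the evaluation functional $X\mapsto X(p_0)$ is nonzero and annihilates $(u_z-\mu)\cH(S^1)$, since $(u_z(p_0)-\mu)X(p_0)=0$; hence $\mu$ belongs to the spectrum. (By~\eqref{eqn:dupdp} together with axioms~\ref{item:T3} and~\ref{item:T5} the function $u_z$ is non-constant, so $M_{u_z}$ has no standard eigenvectors, and the spectrum just found is purely generalized.)

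Combining the three computations yields that the spectrum of $U$ is $\{u_p\}_{p\in S^1}\cup\{u_i\}_{i=1,\dots,n}\cup\{\bar u_j\}_{j=1,\dots,\bar n}$, as claimed. The only point that is not pure bookkeeping is the invertibility in $\cH(S^1)$ of an element non-vanishing on $S^1$, which holds because such a function stays non-zero on an open neighbourhood of $S^1$ and its reciprocal is holomorphic there; the rest is the block decomposition of $U$ and of $U^*$ together with the elementary description of generalized eigenvectors of a multiplication operator.
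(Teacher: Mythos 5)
Your proof is correct and follows essentially the same route as the paper's: it exploits the diagonal form of $U$, identifies the finite-dimensional blocks' eigenvalues directly, exhibits the evaluation functionals $\mathrm{ev}_p$ as generalized eigenvectors for $u_p$, and excludes all other $\mu$ by the invertibility in $\cH(S^1)$ of multiplication by the nowhere-vanishing function $u_z-\mu$. The only difference is organizational — you package the componentwise analysis as a splitting of the algebraic dual and of $U^*$ into blocks, which is a clean but equivalent way of saying what the paper does with the decomposition $\xi=(\xi_z,\xi_i,\bar\xi_j)$.
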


\begin{proof}
From the diagonal form of $U$, it is immediately clear that the standard eigenvalues are $\{ u_i \}_{i = 1, \dots, n}$ and $ \{ \bar{u}_j \}_{j = 1, \dots, \bar{n}}$ with eigenvectors $(0,e_i, 0)$ and $(0, 0, e_j)$, respectively, where $e_i$ denotes the canonical basis vector which is $1$ at the $i$-th entry and $0$ everywhere else. 

In order to find its generalized eigenvalues, we look for $\lambda \in \C$, $0 \not= \xi \in T^*_{\hat{\lambda}} M$ such that
\begin{align}
\langle \xi, \left( (u_z - \lambda) X(z), (u_i - \lambda) X_i, (\bar{u}_j - \lambda) \bar{X}_j \right) \rangle = 0, \qquad \forall \hat{X} \in \cH(S^1) \oplus \C^n \oplus \C^{\bar{n}}.
\end{align}
Consider the decomposition $\xi = (\xi_z, \xi_i, \bar{\xi}_j)$ given by 
\begin{align}
\langle \xi, \hat{X} \rangle = \langle \xi_z, X(z) \rangle + \sum_{i=1}^n X_i \langle \xi_i, e_i \rangle + \sum_{j=1}^{\bar{n}} \bar{X}_j \langle \bar{\xi}_j, e_j \rangle.
\end{align}
For $p \in S^1$, one can check that $u_p$ is a generalized eigenvalue with generalized eigenvector $(\textrm{ev}_{p}, 0, 0)$, where the functional $\textrm{ev}_{p}$ is defined by
\begin{align}
\langle \mathrm{ev}_{p} , X(z) \rangle  = X(p). \label{eqn:defnev}
\end{align}

Finally, let $\lambda \not= u_i, \bar{u}_j, u_p$ for any $i, j, p$. Since $\lambda \not= u_i, \bar{u}_j$, then we have $\xi_i = \bar{\xi}_j = 0$ for all $i, j$, so we are left with
\begin{align}
\langle \xi_z, (u_z - \lambda) X(z) \rangle =0, \qquad \forall X \in \cH(S^1). \label{eqn:equationduplambda}
\end{align}
Since $\lambda \not= u_p$ for any $p \in S^1$, then multiplication by $(u_z - \lambda)$ is an invertible operator in $\cH(S^1)$, so $\xi_z = 0$, hence $\lambda$ is not an eigenvalue.
\end{proof}

Let us now compute the dimension of the (generalized) eigenspaces. Since $\frac{du_z}{dz}$ does not vanish on $S^1$, we have the following
\begin{lemma}
Suppose exactly $s+ k+ \ell$ generalized eigenvalues coincide, namely
\begin{align}
u_{p_1} = \dots = u_{p_s} = u_{i_1} = \dots = u_{i_k} = \bar{u}_{j_1} = \dots = \bar{u}_{j_\ell}. \label{eqn:lambdaiseigen}
\end{align}
Then the corresponding eigenspace is $s + k +\ell$ dimensional. 
\end{lemma}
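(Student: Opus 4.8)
The plan is to compute everything in the diagonal model of Corollary~\ref{cor:representation}. There $\cU$ becomes the operator $U$ on $\cH(S^1)\oplus\C^n\oplus\C^{\bar n}$ acting by multiplication by the fixed element $u_z\in\cH(S^1)$ on the first factor and by the scalars $u_i$, $\bar u_j$ on the remaining factors, and the isomorphism $\Psi$ carries (generalized) eigenvectors of $\cU$ to (generalized) eigenvectors of $U$ for the same eigenvalue. Fix the common value $\lambda$ of~\eqref{eqn:lambdaiseigen}. By definition the eigenspace attached to $\lambda$ is $\{\xi:\langle\xi,(U-\lambda)\hat X\rangle=0\ \forall\hat X\}=\operatorname{Ann}\big(\operatorname{Im}(U-\lambda)\big)$ inside the algebraic dual $(\cH(S^1)\oplus\C^n\oplus\C^{\bar n})^*$. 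Since $U-\lambda$ is block-diagonal, its image is the direct sum of the images in the three factors, so its annihilator is the direct sum of the three factor-wise annihilators, and the dimension we want is the sum of the three contributions.

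On the $\C^n$ factor the $i$-th coordinate of $U-\lambda$ is multiplication by the scalar $u_i-\lambda$. This vanishes exactly for $i\in\{i_1,\dots,i_k\}$, where the image is $\{0\}$ and the annihilator is all of $\C$, contributing one dimension each; for the other indices the image is $\C$ and the annihilator is $\{0\}$. Hence the $\C^n$ factor contributes $k$, and the identical computation on $\C^{\bar n}$ contributes $\ell$.

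The essential point is the $\cH(S^1)$ factor, where $U-\lambda$ is multiplication by $g(z):=u_z-\lambda\in\cH(S^1)$. By~\eqref{eqn:dupdp} together with axioms~\ref{item:T3} and~\ref{item:T5}, $g'(z)=\tfrac{du_z}{dz}=\sigma'(z)w(z)$ is non-vanishing on $S^1$; hence $g$ is non-constant, its zeros on $S^1$ are simple and finite in number, and by hypothesis they are exactly $p_1,\dots,p_s$. After shrinking the neighborhood of $S^1$ to a thin annulus containing no zero of $g$ other than $p_1,\dots,p_s$, a standard division argument for holomorphic functions shows $g\,\cH(S^1)=\{f\in\cH(S^1):f(p_1)=\dots=f(p_s)=0\}$: the inclusion $\subseteq$ is clear, and conversely if $f$ vanishes at every $p_m$ then $f/g$ is holomorphic on the annulus, each simple zero of $g$ being cancelled and $g$ having no other zeros. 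Thus $\operatorname{Im}(g\,\cdot)$ has codimension $s$ in $\cH(S^1)$, the quotient being identified with $\C^s$ via $f\mapsto(f(p_1),\dots,f(p_s))$ (surjective by interpolation), so its annihilator is $s$-dimensional, spanned by the evaluation functionals $\operatorname{ev}_{p_1},\dots,\operatorname{ev}_{p_s}$ of~\eqref{eqn:defnev}. Summing the three contributions yields dimension $s+k+\ell$; transported back through $\Psi$, the eigenspace is spanned by $du_{p_1},\dots,du_{p_s},du_{i_1},\dots,du_{i_k},d\bar u_{j_1},\dots,d\bar u_{j_\ell}$.

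The one place that needs care — and where the count genuinely uses the infinite-dimensional structure — is the $\cH(S^1)$ factor: one must argue that in the germ sense only the zeros of $g$ on $S^1$ are relevant (hence the passage to a thin annulus), that these zeros are simple and finite in number (axioms~\ref{item:T3}, \ref{item:T5} and compactness of $S^1$), and that $\dim\operatorname{Ann}(W)=\operatorname{codim}W=\dim(V/W)^*$ for a finite-codimension subspace $W$ of an infinite-dimensional space $V$. The discrete factors and the final bookkeeping are routine.
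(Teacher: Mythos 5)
Your proof is correct and follows essentially the same route as the paper: both use the simplicity of the zeros of $u_z-\lambda$ on $S^1$ (via~\eqref{eqn:dupdp} and axioms~\ref{item:T3}, \ref{item:T5}) to identify the image of $u_z-\lambda$ acting on $\cH(S^1)$ with the codimension-$s$ subspace of functions vanishing at $p_1,\dots,p_s$, and then count the eigenspace as the span of the evaluation functionals plus the $k+\ell$ discrete contributions. The only difference is in the final bookkeeping: where you divide directly by $u_z-\lambda$ on a thin annulus and invoke $\dim\operatorname{Ann}(W)=\operatorname{codim}W$, the paper factors $u_z-\lambda=(z-p_1)\cdots(z-p_s)g(z)$ with $g$ invertible and bounds the solution space by $s$ via the explicit decomposition~\eqref{eqn:Xdecomp}.
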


\begin{proof}
Let $\lambda$ denote~\eqref{eqn:lambdaiseigen}. Then the generalized eigenspace of $\lambda$ splits into two subspaces, the $(k+\ell)$-dimensional subspace corresponding to the standard eigenvectors $\{(0, e_{i_r}, 0),  (0,e_{j_l}, 0) \}_{r = 1, \dots, k}^{l = 1, \dots, \ell}$ mentioned before, and the subspace given by $\xi = (\xi_z, 0, 0)$ with $\xi_z$ satisfying equation~\eqref{eqn:equationduplambda}.
Let us compute the latter for $s \geq 1$. By~\eqref{eqn:dupdp} and axioms~\ref{item:T3} and~\ref{item:T5}, the function $\frac{du_z}{dz}$ does not vanish on $S^1$, so $u_z - \lambda$ does not have double zeros on $S^1$, i.e.,
\begin{align}
u_z - \lambda = (z-p_1)\dots (z-p_s) g(z),  \label{eqn:lambdadoesnotdouble}
\end{align}
where $g(z)$ is a non-vanishing holomorphic function on $S^1$. Therefore, since multiplication by $g(z)$ is invertible on $\cH(S^1)$, equation~\eqref{eqn:equationduplambda} becomes
\begin{align}
\langle \xi_z, (z-p_1) \dots (z-p_s)  X(z) \rangle =0, \qquad \forall X \in \cH(S^1), \label{eqn:z-pz-q1}
\end{align}
or, equivalently, $\xi_z$ vanishes on the subspace of $\cH(S^1)$ given by functions with zeros at the distinct points $p_1, \dots, p_s$. It is clear that the functionals $ \textrm{ev}_{p_1}, \dots, \mathrm{ev}_{p_s}$ defined in~\eqref{eqn:defnev} are linearly independent and solve~\eqref{eqn:z-pz-q1}. Let us show that they span the whole space of solutions of~\eqref{eqn:z-pz-q1}. 

For that, we need the following decomposition formula: for any $X \in \cH(S^1)$, $s \geq 1$, we can write
	\begin{align}
	X(z) = X(p_1) + (z-p_1) Y_1 + (z-p_1)(z-p_2) Y_2 + \dots +(z-p_1) \dots (z-p_s) Y_s \label{eqn:Xdecomp}
	\end{align}
	where $Y_{<s} \in \C$, $Y_s \in \cH(S^1)$. This statement can be easily proved by induction. For $s=1$, it is clear by taking
	\begin{align}
	Y_1 (z) = \frac{1}{z-p_1} \left( X(z) - X(p_1) \right).
	\end{align}
	Assuming it holds for $s-1 \geq 1$, we write 
	\begin{align}
	X(z) &= X(p_1) + (z-p_1) Y_1 + (z-p_1)(z-p_2) Y_2 + \dots +(z-p_1) \dots (z-p_{s-1}) Y_{s-1}(z) \\ &= X(p_1) + (z-p_1) Y_1 + (z-p_1)(z-p_2) Y_2 + \dots +(z-p_1) \dots (z-p_{s-1}) Y_{s-1}(p_s) \notag \\ &+ (z-p_1) \dots (z-p_{s}) \frac{Y_{s-1} (z) - Y_{s-1}(p_s) }{z-p_s},  \notag
	\end{align}
	where we have split
	\begin{align}
	Y_{s-1} (z) = Y_{s-1} (p_s) + (z-p_s) \frac{Y_{s-1} (z) - Y_{s-1}(p_s) }{z-p_s}.
	\end{align}

Applying~\eqref{eqn:Xdecomp}, we write
\begin{align}
\langle \xi_z, X(z) \rangle &= X(p_1) \langle \xi_z, 1 \rangle  + Y_1 \langle \xi_z, z-p_1 \rangle + \dots + Y_{s-1} \langle \xi_z, (z-p_1) \dots (z-p_{s-1}) \rangle \\ &+  \langle \xi_z, (z-p_1) \dots (z-p_s) Y_s(z) \rangle \notag.
\end{align}
The last summand vanishes because $\xi_z$ satisfies equation~\eqref{eqn:z-pz-q1}. Therefore, $\xi_z$ is completely determined by the numbers
\begin{align}
\langle \xi_z, 1 \rangle, \langle \xi_z, z \rangle, \dots, \langle \xi_z, z^{s-1} \rangle,
\end{align}
so the space of solutions of~\eqref{eqn:z-pz-q1} is at most $s$-dimensional, hence it must be the span of $\textrm{ev}_{p_1}, \dots, \mathrm{ev}_{p_s}$.
\end{proof}
\end{remark}

\begin{remark}
Notice that relaxing the axioms~\ref{item:T3} and~\ref{item:T5} in the definition of $M_0$ would imply, by relation~\eqref{eqn:dupdp}, dropping  the non-vanishing assumption of the derivative of $u_z$ on $S^1$.
In such case the function $u_z - \lambda$ might have higher order zeros, i.e.,
\begin{align}
u_z - \lambda = (z-p_1)^{N_1} (z-p_2)^{N_2} \dots (z-p_s)^{N_s} g(z)
\end{align}
with $N_i \geq 1$. Then the subspace determined by equation~\eqref{eqn:equationduplambda} is $(N_1 + \dots + N_s)$-dimensional, generated by the functionals
\begin{align}
\langle \mathrm{ev}_{p_i}^{(m)}, X(z) \rangle = \frac{d^{m}}{dz^{m}}\bigg|_{z = p_i} X(z) , \qquad m = 0, \dots, N_i - 1.
\end{align}
\end{remark}

\subsection{Metric in canonical coordinates}

Thanks to the explicit expression for $\Psi^{-1}$, we can derive the following diagonal form of the metric in canonical coordinates:
\begin{proposition} \label{coro:metriccanonical}
The metric $\eta$ in the representation given by $\Psi$, i.e. $\tilde{\eta}(\hat{X}, \hat{Y}) := \eta( \Psi^{-1}\hat{X}, \Psi^{-1}\hat{Y}) $ has the diagonal form
\begin{align}
\tilde{\eta}(\hat{X}, \hat{Y}) &= - \frac{1}{2 \pi \ii} \oint_{|z| = 1} \frac{w'(z)}{\lambda'(z) \bar{\lambda}'(z)} X(z) Y(z) \frac{dz}{z^2} - \label{eqn:tildeetaincanonical}\\
&-\sum_{i=1}^n \frac{1}{z_i^2 \lambda''(z_i)} X_i Y_i + \sum_{j = 1}^{\bar{n}} \frac{1}{\bar{z}_j^2 \bar{\lambda}''(\bar{z}_j)} \bar{X}_j \bar{Y}_j \notag
\end{align}
for $\hat{X}, \hat{Y} \in \cH (S^1) \oplus \C^n \oplus \C^{\bar{n}}$.
\end{proposition}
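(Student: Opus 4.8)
The plan is to transport the Frobenius--compatibility identity through the isomorphism $\Psi$ of~\eqref{eq:isoooo}. Recall that for tangent vectors $\hat a,\hat b\in T_{\hat\lambda}M$ one has $\eta(\hat a,\hat b)=\langle\eta_*(e),\hat a\cdot\hat b\rangle=\langle du,\hat a\cdot\hat b\rangle$, hence $\tilde\eta(\hat X,\hat Y)=\langle du,(\Psi^{-1}\hat X)\cdot(\Psi^{-1}\hat Y)\rangle$ for $\hat X,\hat Y\in\cH(S^1)\oplus\C^n\oplus\C^{\bar n}$. First I would observe that $\Psi(e)$ is the ``all ones'' vector $\mathbf 1=(1,\{1\}_{i=1}^n,\{1\}_{j=1}^{\bar n})$: writing $e=(-1,1)$ as a pair and using the explicit formulas for $\langle du_p,\cdot\rangle$, $\langle du_i,\cdot\rangle$, $\langle d\bar u_j,\cdot\rangle$ from~\S\ref{sec:proooof}, one gets $\langle du_p,e\rangle=\tfrac{\la'(p)+\bla'(p)}{\la'(p)+\bla'(p)}=1$ and $\langle du_i,e\rangle=\langle d\bar u_j,e\rangle=1$. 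Therefore $\tilde\eta(\hat X,\mathbf 1)=\eta(\Psi^{-1}\hat X,e)=\langle du,\Psi^{-1}\hat X\rangle$, and the statement reduces to two points: (i) $\tilde\eta$ is ``diagonal'', i.e. $\tilde\eta(\hat X,\hat Y)=\tilde\eta(\hat X\odot\hat Y,\mathbf 1)$ with $\hat X\odot\hat Y:=(X(z)Y(z),X_iY_i,\bar X_j\bar Y_j)$; and (ii) the linear functional $\hat X\mapsto\langle du,\Psi^{-1}\hat X\rangle$ equals the right-hand side of~\eqref{eqn:tildeetaincanonical}.

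For (i): $\cU$ is $\eta$-symmetric and, by Corollary~\ref{cor:representation}, is carried by $\Psi$ to multiplication by $(u_z,u_i,\bar u_j)$, so $\tilde\eta(U\hat X,\hat Y)=\tilde\eta(\hat X,U\hat Y)$. At a generic point the (generalized) eigenvalues $u_p$ ($p\in S^1$), $u_i$, $\bar u_j$ are pairwise distinct, and the standard eigenvalue-separation argument kills all mixed pairings; for instance $Z\mapsto\tilde\eta((0,e_i,0),(Z,0,0))$ is annihilated by multiplication by $u_z-u_i$, which is invertible on $\cH(S^1)$ since $u_i\notin\{u_p\}$, hence it vanishes. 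This leaves the three diagonal blocks: the finite ones contribute the scalars $\tilde\eta((0,e_i,0),(0,e_i,0))$ and $\tilde\eta((0,0,e_j),(0,0,e_j))$, while for the continuous block one uses that $z\mapsto u_z$ is a biholomorphism of a neighbourhood of $S^1$ onto a neighbourhood of the simple closed curve $\gamma=\{u_p:p\in S^1\}$ --- its derivative $\sigma'(z)w(z)$ is non-vanishing by~\ref{item:T3} and~\ref{item:T5}, cf.~\eqref{eqn:dupdp} --- so by Runge's theorem every $X\in\cH(S^1)$ is a limit of rational functions of $u_z$ with poles off $\gamma$; since multiplication by each such function is self-adjoint for $\tilde\eta$ and $\tilde\eta$ is continuous, $\tilde\eta(X,Y)=\tilde\eta(XY,1)$. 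This is the infinite-dimensional avatar of the diagonality of the product in canonical coordinates; equivalently one can argue that $\Psi L_{\hat W}\Psi^{-1}$, with $L_{\hat W}:=\hat W\cdot\,$, commutes with the diagonal $U$, hence is itself diagonal with multiplier $\Psi(\hat W)=\Psi(L_{\hat W}e)$.

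For (ii): by definition of the $w$-coordinates, $\langle du,\Psi^{-1}\hat X\rangle=e^{-u}\bar X_{-1}$, where $\bar X(z)$ is the $\bla$-component of $\Psi^{-1}\hat X$ given in~\eqref{eqn:inversepsi1} and~\eqref{eqn:muY}, so $\bar X_{-1}=\tfrac{1}{2\pi\ii}\oint_{|z|=1}\bar X(z)\,dz$. Inside the unit disc $\bar X(z)$ has only a simple pole at $z=0$: the would-be poles of $\mu_{\hat X}$ at the interior critical points $\bar z_j$ are cancelled by the simple zeros of $\bla'$ there --- this is exactly why $z_i$, $\bar z_j$ are required to be critical points. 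Using $\bla'(z)=-e^uz^{-2}+O(1)$ near $z=0$, one reads off that $\bar X_{-1}$ is $-e^u$ times the coefficient of $z^1$ in $\mu_{\hat X}(z)+\bigl(\tfrac{w'(z)}{\la'(z)\bla'(z)}X(z)\bigr)_{\geq1}$, and a one-line residue computation at $z=0$ gives
\begin{align*}
\langle du,\Psi^{-1}\hat X\rangle=-\frac{1}{2\pi\ii}\oint_{|z|=1}\frac{w'(z)}{\la'(z)\bla'(z)}X(z)\frac{dz}{z^2}-\sum_{i=1}^n\frac{X_i}{z_i^2\la''(z_i)}+\sum_{j=1}^{\bar n}\frac{\bar X_j}{\bar z_j^2\bla''(\bar z_j)}.
\end{align*}
Combining with (i), $\tilde\eta(\hat X,\hat Y)=\langle du,\Psi^{-1}(\hat X\odot\hat Y)\rangle$ is exactly~\eqref{eqn:tildeetaincanonical} at generic points, and the general case follows since both sides depend continuously on $\hat\lambda$ and on the critical points $z_i$, $\bar z_j$.

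I expect the main obstacle to be part (i), specifically the reduction to a functional of the pointwise product on the infinite-dimensional continuous block: this is where the density/continuity argument is needed (equivalently, the fact that an operator commuting with multiplication by $u_z$ is itself a multiplication operator), together with the passage from a generic to an arbitrary point of $M_0$. The contour integrals in (ii) are routine, the only delicate point being the cancellation of the $\bar z_j$-poles. A purely computational alternative is to substitute~\eqref{eqn:inversepsi1}--\eqref{eqn:inversepsi2} into~\eqref{eqn:metricdefn} directly, expand using the projections $(\,\cdot\,)_{\geq1}$, $(\,\cdot\,)_{\leq0}$, and collect terms; this bypasses (i) but replaces it with a longer bookkeeping whose non-trivial cancellations reproduce the same residues at $z_i$, $\bar z_j$.
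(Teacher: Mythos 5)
Your route is genuinely different from the paper's: the paper proves the proposition by brute force, substituting the explicit inverse \eqref{eqn:inversepsi1}--\eqref{eqn:inversepsi2} into the metric \eqref{eqn:metricdefn} and evaluating all block pairings by residues (this is exactly the ``purely computational alternative'' you mention at the end), whereas you factor the statement through the Frobenius compatibility $\eta(\hat a,\hat b)=\langle du,\hat a\cdot\hat b\rangle$ plus diagonality of the product in the $\Psi$-representation. Your step (ii) is correct and checks out: $\Psi(e)=\mathbf 1$, and the residue computation of $\langle du,\Psi^{-1}\hat X\rangle=e^{-u}\bar X_{-1}$ does give the right-hand side of \eqref{eqn:tildeetaincanonical}, the cancellation of the poles of $\mu_{\hat X}$ at $\bar z_j$ by the zeros of $\bar\lambda'$ being precisely the point that makes the only interior singularity sit at $z=0$.

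The genuine gap is in step (i), and specifically in the passage from ``generic'' to arbitrary points of $M_0$. Your spectral-separation/Runge argument needs $u_i,\bar u_j\notin\{u_p\}_{p\in S^1}$, pairwise distinct discrete values, and, crucially, that $z\mapsto u_z$ be injective near $S^1$ (a simple curve of canonical values); the axioms \ref{item:T3}, \ref{item:T5} via \eqref{eqn:dupdp} only give $u_z'\neq 0$, not injectivity, and the paper explicitly allows coincidences $u_{p_1}=\dots=u_{p_s}=u_{i_1}=\dots$ (see the lemma on eigenspace dimensions). Your closing claim that ``the general case follows by continuity'' presupposes that such generic points are dense in $M_0$, which is false: a transversal self-intersection of the curve $p\mapsto u_p$ (or a transversal crossing $u_i=u_p$) persists under all small perturbations of $\hat\lambda$, so the non-generic locus has nonempty interior and cannot be exhausted by limits of generic points. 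Moreover, at such points the abstract argument genuinely cannot conclude: if $u_{p_1}=u_{p_2}$ with $p_1\neq p_2$, the bilinear form $B(X,Y)=X(p_1)Y(p_2)+X(p_2)Y(p_1)$ is symmetric and compatible with multiplication by $u_z$ but is not of multiplication type, so commutation with $U$ alone does not force diagonality of the continuous block. To repair this one would need either the paper's direct computation (which uses no hypothesis on the configuration of canonical values, only non-degeneracy of the critical points) or a genuine analyticity-in-$\hat\lambda$/connectedness argument within the strata of fixed $(n,\bar n)$, which you do not supply; the unproven continuity of $\tilde\eta$ and of $\Psi^{\pm1}$ in the inductive-limit topology of $\cH(S^1)$ is a lesser, fixable, omission.
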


\begin{proof}
First, using~\eqref{eqn:metricdefn},~\eqref{eqn:inversepsi1}--\eqref{eqn:inversepsi2} and~\eqref{eqn:pairstows}, we compute 
\begin{align}
\tilde{\eta}( (0, e_i, 0) , (0, e_j, 0)) &= \frac{1}{z_i \lambda''(z_i) z_j \lambda''(z_j)} \left( \frac{1}{2 \pi \ii} \oint_{|z| = 1} w'(z) \frac{1}{z_i-z} \frac{1}{z_j - z} dz \right. \label{eqn:tildeXiYj} + \\ &+ e^{-u} \left. \left[ \left(  \frac{z \bar{\lambda}'(z)}{z_i-z} \right)_0 \left(  \frac{z \bar{\lambda}'(z)}{z_j-z} \right)_{-1} + \left(  \frac{z \bar{\lambda}'(z)}{z_i-z} \right)_{-1} \left(  \frac{z \bar{\lambda}'(z)}{z_j-z} \right)_{0}  \right] \right) \notag.
\end{align}
Notice that $z(z_i-z) = (z(z_i-z))_{\geq 1}$ and $\bar{\lambda}'(z) = - e^u z^{-2} + (\bar{\lambda}'(z))_{\geq 0}$, therefore
\begin{align}
&\left(  \frac{z \bar{\lambda}'(z)}{z_i-z} \right)_0 = -e^u \left(  \frac{z}{z_i-z} \right)_2 = - \frac{e^u}{z_i^2}, \\ 
&\left(  \frac{z \bar{\lambda}'(z)}{z_i-z} \right)_{-1} = -e^u \left(  \frac{z}{z_i-z} \right)_1 = - \frac{e^u}{z_i}. \label{eqn:residues01}
\end{align}
On the other hand, we can split the integral of~\eqref{eqn:tildeXiYj} by decomposing $w'(z) = \bar{\lambda}'(z) + \lambda'(z)$. The first summand equals
\begin{align}
&\frac{1}{2 \pi \ii} \oint_{|z| = 1} \bar{\lambda}'(z) \frac{1}{z_i-z} \frac{1}{z_j - z} dz = \res_{z=0} \bar{\lambda}'(z) \frac{1}{z_i-z} \frac{1}{z_j - z} \label{eqn:residuelambdabar} \\ &= -e^{u} \res_{z=0} \frac{1}{z^2} \frac{1}{z_i-z} \frac{1}{z_j - z} 
= - \frac{e^u}{z_i z_j} \left( \frac{1}{z_i} + \frac{1}{z_j} \right). \notag
\end{align}
Plugging~\eqref{eqn:residues01} and~\eqref{eqn:residuelambdabar} in~\eqref{eqn:tildeXiYj} yields
\begin{align}
&\tilde{\eta}( (0, e_i, 0) , (0, e_j, 0)) = \frac{1}{z_i \lambda''(z_i) z_j \lambda''(z_j)}  \frac{1}{2 \pi \ii} \oint_{|z| = 1} \lambda'(z) \frac{1}{z_i-z} \frac{1}{z_j - z} dz = \\ 
&= \frac{1}{z_i \lambda''(z_i) z_j \lambda''(z_j)} \begin{cases}
- \res_{z=z_i} \lambda'(z) \frac{1}{z_i-z} \frac{1}{z_j - z} - \res_{z=z_j} \lambda'(z) \frac{1}{z_i-z} \frac{1}{z_j - z}, & i \not=j \\
- \res_{z=z_i} \lambda'(z) \frac{1}{(z_i-z)^2}, & i=j
\end{cases} \notag \\
&= \frac{1}{(z_i \lambda''(z_i))^2} \begin{cases}
0, & i \not=j \\
- \lambda''(z_i), & i=j \notag
\end{cases} \\ &= -\frac{1}{z_i^2 \lambda''(z_i)} \delta_{ij} \notag.
\end{align}
Analogously, one obtains
\begin{align}
&\tilde{\eta}( (0, e_i, 0) , (0, 0, e_j)) = \tilde{\eta}( (0, 0, e_i) , (0, e_j, 0)) = 0, \\ 
&\tilde{\eta}( (0, 0, e_i) , (0, 0, e_j)) =  \frac{1}{\bar{z}_i^2 \bar{\lambda}''(\bar{z}_i)} \delta_{ij}.
\end{align}

As before, we use formulas~\eqref{eqn:metricdefn},~\eqref{eqn:inversepsi1}--\eqref{eqn:inversepsi2} and~\eqref{eqn:pairstows} to compute
\begin{align}
\label{eqn:tildeetaxy} 
&\tilde{\eta} ((X(z), 0, 0), (Y(z), 0, 0)) =\\ 
&=\frac{1}{2 \pi \ii} \oint_{|z| = 1} \left[ \lambda'(z)^2 \left( \frac{w' X}{\lambda' \bar{\lambda}'} \right)_{\leq 0} \hspace{-4mm} (z) \left( \frac{w' Y}{\lambda' \bar{\lambda}'}  \right)_{\leq 0} \hspace{-4mm} (z) + \bar{\lambda}'(z)^2 \left( \frac{w' X}{\lambda' \bar{\lambda}'} \right)_{\geq 1} \hspace{-4mm} (z) \left( \frac{w' Y}{\lambda' \bar{\lambda}'}  \right)_{\geq 1} \hspace{-4mm} (z) \right. \notag\\
&\qquad \quad - \left. \lambda'(z) \bar{\lambda}'(z) \left( \left( \frac{w' X}{\lambda' \bar{\lambda}'} \right)_{\leq 0} \hspace{-4mm} (z) \left( \frac{w' Y}{\lambda' \bar{\lambda}'}  \right)_{\geq 1} \hspace{-4mm} (z) + \left( \frac{w' X}{\lambda' \bar{\lambda}'} \right)_{\geq 1} \hspace{-4mm} (z) \left( \frac{w' Y}{\lambda' \bar{\lambda}'}  \right)_{\leq 0} \hspace{-4mm} (z) \right)  \right] \frac{dz}{z^2 w'(z)}  \notag \\
&+e^{-u} \left[ \left(\bar{\lambda}' \left( \frac{w'X}{\lambda' \bar{\lambda}'}  \right)_{\geq 1} \hspace{-4mm} (z)  \right)_{\hspace{-1mm}0} \left(\bar{\lambda}' \left( \frac{w'Y}{\lambda' \bar{\lambda}'}  \right)_{\geq 1} \hspace{-4mm} (z)  \right)_{\hspace{-1mm}-1} \notag 
\hspace{-4mm}+ \left(\bar{\lambda}' \left( \frac{w'X}{\lambda' \bar{\lambda}'}  \right)_{\geq 1} \hspace{-4mm} (z)  \right)_{\hspace{-1mm}-1} \left(\bar{\lambda}' \left( \frac{w'Y}{\lambda' \bar{\lambda}'}  \right)_{\geq 1} \hspace{-4mm} (z)  \right)_{\hspace{-1mm}0}  \right]. \notag
\end{align}
Since $\bar{\lambda}'(z) = -e^u z^{-2} + (\bar{\lambda}'(z))_{\geq 0}$, we have
\begin{align}
&\left(\bar{\lambda}'(z) \left( \frac{w'X}{\lambda' \bar{\lambda}'}  \right)_{\geq 1} \hspace{-4mm} (z)  \right)_0 = - e^u  \left( \frac{w'X}{\lambda' \bar{\lambda}'}  \right)_{2}, \\
&\left(\bar{\lambda}'(z) \left( \frac{w'X}{\lambda' \bar{\lambda}'}  \right)_{\geq 1} \hspace{-4mm} (z)  \right)_{-1} = - e^u  \left( \frac{w'X}{\lambda' \bar{\lambda}'}  \right)_{1}, \label{eqn:lambdaX0}
\end{align}
and the same for $Y(z)$.
Let us consider the integral in~\eqref{eqn:tildeetaxy}. We can rewrite the first summand as
\begin{align}
\label{eqn:metriccan1}
&\frac{1}{2 \pi \ii} \oint_{|z| = 1} \lambda'(z)(w'(z) - \bar{\lambda}'(z)) \left( \frac{w' X}{\lambda' \bar{\lambda}'} \right)_{\leq 0} \hspace{-4mm} (z) \left( \frac{w' Y}{\lambda' \bar{\lambda}'}  \right)_{\leq 0} \hspace{-4mm} (z) \frac{dz}{z^2 w'(z)} = \\ 
&= \frac{1}{2 \pi \ii} \oint_{|z| = 1} \lambda'(z) \left( \frac{w' X}{\lambda' \bar{\lambda}'} \right)_{\leq 0} \hspace{-4mm} (z) \left( \frac{w' Y}{\lambda' \bar{\lambda}'}  \right)_{\leq 0} \hspace{-4mm} (z) \frac{dz}{z^2}- \notag\\
&- \frac{1}{2 \pi \ii} \oint_{|z| = 1} \lambda'(z)\bar{\lambda}'(z) \left( \frac{w' X}{\lambda' \bar{\lambda}'} \right)_{\leq 0} \hspace{-4mm} (z) \left( \frac{w' Y}{\lambda' \bar{\lambda}'}  \right)_{\leq 0} \hspace{-4mm} (z) \frac{dz}{z^2 w'(z)} \notag.
\end{align}
Notice the first summand on the right-hand side vanishes because $\lambda'(z) = (\lambda'(z))_{\leq 0}$, so the integrand has no residue. Similarly, 
\begin{align}
&\frac{1}{2 \pi \ii} \oint_{|z| = 1} \bar{\lambda}'(z)^2 \left( \frac{w' X}{\lambda' \bar{\lambda}'} \right)_{\geq 1} \hspace{-4mm} (z) \left( \frac{w' Y}{\lambda' \bar{\lambda}'}  \right)_{\geq 1} \hspace{-4mm} (z) \frac{dz}{z^2 w'(z)} \label{eqn:metriccan2} \\
&= \frac{1}{2 \pi \ii} \oint_{|z| = 1} \bar{\lambda}'(z) \left( \frac{w' X}{\lambda' \bar{\lambda}'} \right)_{\geq 1} \hspace{-4mm} (z) \left( \frac{w' Y}{\lambda' \bar{\lambda}'}  \right)_{\geq 1} \hspace{-4mm} (z) \frac{dz}{z^2 } \notag -\\
&- \frac{1}{2 \pi \ii} \oint_{|z| = 1} \lambda'(z) \bar{\lambda}'(z) \left( \frac{w' X}{\lambda' \bar{\lambda}'} \right)_{\geq 1} \hspace{-4mm} (z) \left( \frac{w' Y}{\lambda' \bar{\lambda}'}  \right)_{\geq 1} \hspace{-4mm} (z) \frac{dz}{z^2 w'(z)} \notag.
\end{align}
The first summand on the right-hand side equals
\begin{align}
&-\frac{1}{2 \pi \ii} \oint_{|z| = 1} \frac{e^u}{z^2} \left( \frac{w' X}{\lambda' \bar{\lambda}'} \right)_{\geq 1} \hspace{-4mm} (z) \left( \frac{w' Y}{\lambda' \bar{\lambda}'}  \right)_{\geq 1} \hspace{-4mm} (z) \frac{dz}{z^2 }=\\
 &= -e^u \left[ \left( \frac{w' X}{\lambda' \bar{\lambda}'} \right)_{2} \left( \frac{w' Y}{\lambda' \bar{\lambda}'}  \right)_{1} +  \left( \frac{w' X}{\lambda' \bar{\lambda}'} \right)_{1} \left( \frac{w' Y}{\lambda' \bar{\lambda}'}  \right)_{2} \right], \notag
\end{align}
which cancels out with the third line of~\eqref{eqn:tildeetaxy} by~\eqref{eqn:lambdaX0}.
By replacing~\eqref{eqn:metriccan1} and~\eqref{eqn:metriccan2} in~\eqref{eqn:tildeetaxy} and noting that $A(z) B(z) = A_{\leq 0} (z) B_{\leq 0}(z) + A_{\leq 0} (z) B_{\geq 1}(z) + A_{\geq 1} (z) B_{\leq 0}(z) + A_{\geq 1} (z) B_{\geq 1}(z)$ for any $A, B \in \cH(S^1)$, we have
\begin{align}
\tilde{\eta} ((X(z), 0, 0), (Y(z), 0, 0))  = -\frac{1}{2 \pi \ii} \oint_{|z| = 1} \frac{w'(z)}{\lambda'(z) \bar{\lambda}'(z)} X(z) Y(z) \frac{dz}{z^2}. \label{eqn:tildemetric2}
\end{align}
Finally, let us compute 
\begin{align}
&\tilde{\eta} ((X(z), 0, 0), (0, e_i, 0)) = \frac{1}{z_i \lambda''(z_i)} \left( \frac{1}{2 \pi \ii} \oint_{|z| = 1} \bar{\lambda}'(z) \left( \frac{w' X}{\lambda' \bar{\lambda}'} \right)_{\geq 1} \hspace{-4mm} (z) \frac{1}{z_i - z} \frac{dz}{z} \right. \label{eqn:X(z)withei}
\\ &- \frac{1}{2 \pi \ii} \oint_{|z| = 1} \lambda'(z) \left( \frac{w' X}{\lambda' \bar{\lambda}'} \right)_{\leq 0} \hspace{-4mm} (z) \frac{1}{z_i - z} \frac{dz}{z}
+ \left. \frac{e^u}{z_i} \left[ \frac{1}{z_i} \left( \frac{w' X}{\lambda' \bar{\lambda}'} \right)_{1} + \left( \frac{w' X}{\lambda' \bar{\lambda}'} \right)_{2} \right] \right), \notag
\end{align}
where we have used~\eqref{eqn:residues01} and~\eqref{eqn:lambdaX0}. Note $|z_i| > 1$, so 
\begin{align}
\frac{1}{z_i - z} = \frac{1}{z_i} \sum_{k=0}^{\infty} \left( \frac{z}{z_i} \right)^k,
\end{align}
and the first integral of~\eqref{eqn:X(z)withei} becomes
\begin{align}
- \frac{1}{2 \pi \ii} \oint_{|z| = 1} \frac{e^u}{z^2} \left( \frac{w' X}{\lambda' \bar{\lambda}'} \right)_{\geq 1} \hspace{-4mm} (z) \frac{1}{z_i - z} \frac{dz}{z} = - e^u \left[ \frac{1}{z_i^2} \left( \frac{w' X}{\lambda' \bar{\lambda}'} \right)_{1} + \frac{1}{z_i} \left( \frac{w' X}{\lambda' \bar{\lambda}'} \right)_{2} \right],
\end{align}
which cancels with the third summand of~\eqref{eqn:X(z)withei}. The remaining term
\begin{align}
- \frac{1}{2 \pi \ii} \oint_{|z| = 1} \lambda'(z) \left( \frac{w' X}{\lambda' \bar{\lambda}'} \right)_{\leq 0} \hspace{-4mm} (z) \frac{1}{z_i - z} \frac{dz}{z} = \res_{z = z_i} \lambda'(z) \left( \frac{w' X}{\lambda' \bar{\lambda}'} \right)_{\leq 0} \hspace{-4mm} (z) \frac{1}{z_i - z} \frac{1}{z}
\end{align}
vanishes because $\lambda'(z_i) = 0$, so $\tilde{\eta} ((X(z), 0, 0), (0, e_i, 0))=0$. Analogously, one obtains
\begin{align}
\tilde{\eta} ((X(z), 0, 0), &(0, 0, e_j)) = \tilde{\eta} ((0, e_i, 0), (Y(z),  0, 0))=\\ &= \tilde{\eta} ((0, 0, e_i), (Y(z),  0, 0)) = 0, \notag
\end{align}
concluding the proof.
\end{proof}

\section{Dubrovin equation} 
\label{sec:Dubrovinequation}

It is well known that the geometric structure of a Dubrovin--Frobenius manifold is (almost) completely encoded in the flatness of the so-called deformed flat connection $\widetilde{\nabla}$, which is an extension to $M_0 \times  \C^*$ of the Levi-Civita connection of the metric $\eta$ obtained by deforming it using the associative product on the tangent bundle. 
In our case, if $\nabla$ denotes the Levi-Civita connection of the metric $\eta$, then the deformed flat connection $\widetilde{\nabla}$ on $M_0 \times \mathbb{C}^{*}$ is defined by~\cite{CM15}
\begin{align}
\widetilde{\nabla}_{\hat{X}} \hat{Y} &=  \nabla_{\hat{X}} \hat{Y} + \zeta \hat{X} \cdot \hat{Y}, \\ 
\widetilde{\nabla}_{\frac{d}{d \zeta} } \hat{X} &= \partial_\zeta \hat{X} + \mathcal{U} (\hat{X}) - \frac{1}{\zeta} \mathcal{V} (\hat{X}), \label{deformedflatangent} \\
\widetilde{\nabla}_{\hat{X}} \frac{d}{d \zeta} &= \widetilde{\nabla}_{\frac{d}{d \zeta}} \frac{d}{d \zeta} = 0,
\end{align}
for $\hat{X}, \hat{Y}\in T_{\hat{\lambda}} M$, where the operators $\mathcal{U}$ and $\mathcal{V}$ are given by~\eqref{eqn:defnU} and~\eqref{eqn:defnV}, respectively. 

In Dubrovin--Frobenius manifold theory, one is interested in looking for differentials $dy \in T^*_{\hat{\lambda}} M$ that are covariantly constant w.r.t. the deformed flat connection $\widetilde{\nabla}$. A basis of solutions adapted to $\zeta \sim 0$ provides a family of so-called deformed flat coordinates, the coefficients of which define the Hamiltonian densities of the principal hierarchy associated with the Dubrovin--Frobenius manifold. See~\cite{DZ01} for the general construction and~\cite{CM15} for the derivation of the  principal hierarchy of $M_0$. 

In this paper we focus on the Dubrovin equation, namely the flatness equation in the $\frac{d}{d\zeta}$ direction, corresponding to~\eqref{deformedflatangent} in the definition of the deformed flat connection. 
The covariant derivative w.r.t. $\frac{d}{d\zeta}$  
on an element of the cotangent space $\alpha \in T^*_{\hat{\lambda}}M$ depending on the deformation parameter $\zeta$ is given by
\begin{equation}
\widetilde{\nabla}_{\frac{d}{d \zeta}} (\alpha) = \partial_\zeta \alpha - \cU^* \alpha + \frac1\zeta \cV^*\alpha,
\end{equation}
where $\cU^*$ and $\cV^*$ denote the trasposes of $\cU$ and $\cV$. In other words, the cotangent vector $\widetilde{\nabla}_{\frac{d}{d \zeta}} (\alpha) $ is defined by 
\begin{align}
\langle\widetilde{\nabla}_{\frac{d}{d \zeta}} (\alpha), \hat{X} \rangle 
= \partial_\zeta \langle \alpha, \hat{X} \rangle - \langle \alpha, \left(\mathcal{U} - \frac{1}{\zeta} \mathcal{V} \right) \hat{X} \rangle, 
\end{align}
for all $\hat{X} \in T_{\hat{\lambda}}M$.

The Dubrovin equation $\widetilde{\nabla}_{\frac{d}{d \zeta}} (\alpha) = 0$ is therefore given by
\begin{equation} \label{eqn:generalizedDubrovin}
\partial_\zeta \langle \alpha, \hat{X} \rangle = \langle \alpha, \left(\mathcal{U} - \frac{1}{\zeta} \mathcal{V} \right) \hat{X} \rangle ,
\end{equation}
for all $\hat{X} \in T_{\hat{\lambda}}M$.
We look for deformed flat functionals $y(\zeta): M_0 \times \mathbb{C}^* \rightarrow \mathbb{C}$, namely those whose differential $dy(\zeta) \in T^*_{\hat{\lambda}}M$ is covariantly constant w.r.t. $\widetilde{\nabla}$. In particular, they are solutions of the Dubrovin equation~\eqref{eqn:generalizedDubrovin}. 

As expected, if the cotangent vector $\alpha$ is representable, $\alpha = \eta_* \hat{Z}$ for $\hat{Z} \in T_{\hat{\lambda}} M$, then~\eqref{eqn:generalizedDubrovin} is written as
\begin{equation}
\eta( \partial_\zeta \hat{Z} , \hat{X} ) = \eta( \hat{Z} ,  \left(\mathcal{U} - \frac{1}{\zeta} \mathcal{V} \right) \hat{X} ) ,
\end{equation}
which implies,  since $\mathcal{U}$ is symmetric and $\mathcal{V}$ antisymmetric with respect to the metric $\eta$, the usual form of Dubrovin equation
\begin{align}
\partial_\zeta \hat{Z} = \left( \mathcal{U} + \frac{1}{\zeta} \mathcal{V} \right) \hat{Z}, 
\label{eqn:repDub}
\end{align}
cf.~\cite[equation 20b]{CM15}.

\section{Formal solutions} 
\label{sec:formalsolutions}

Let us solve equation~\eqref{eqn:generalizedDubrovin} perturbatively at $\infty$. Recall that in the finite-dimensional case, the Dubrovin equation has a fundamental formal solution of the form (see~\cite{Dub99}) 
\begin{align}
\Xi (\zeta) = \Psi^{-1} R(\zeta) e^{U \zeta}, \quad R(\zeta)= \mathrm{Id} + R_1 \zeta^{-1} + R_2 \zeta^{-2} + \dots,
\end{align}
where $\Psi$ denotes the change of coordinates matrix from flat to normalized canonical. In the infinite-dimensional case, there is no natural analogue of the fundamental matrix $\Xi$, but nonetheless we can write functionals that generalize its columns, given by
\begin{align}
\xi_j (\zeta) = e^{\zeta u_j } \left( v_j^0 + v_j^1 \zeta^{-1} + \dots \right),
\end{align}
where $u_j$ is the $j$-th canonical coordinate and $v_j^k$ are constant column vectors. 
\begin{proposition} The following statements hold at any point $\hat{\lambda} \in M_0$: \label{prop:formal}
	\begin{enumerate}
		\item For each discrete canonical coordinate $u_i$, there exists a unique representable formal solution of the Dubrovin equation~\eqref{eqn:generalizedDubrovin} of the form \label{item:propformal1}
		\begin{align}
		\xi_i^{\textrm{formal}}(\zeta) = e^{\zeta u_i} \sum_{k=0}^\infty r_i^k \zeta^{-k}, \qquad r_i^0 = du_i, \ r_i^k \in (T^*_{\hat{\lambda}} M)^{\textrm{rep}}.
		\end{align}
		\item \label{item:propformal2} For each discrete canonical coordinate $\bar{u}_i$, there exists a unique representable formal solution of the Dubrovin equation~\eqref{eqn:generalizedDubrovin} of the form
		\begin{align}
		\bar{\xi}_i^{\textrm{formal}}(\zeta) = e^{\zeta \bar{u}_i} \sum_{k=0}^\infty \bar{r}_i^k \zeta^{-k}, \qquad \bar{r}_i^0 = d\bar{u}_i, \ \bar{r}_i^k \in (T^*_{\hat{\lambda}} M)^{\textrm{rep}}.
		\end{align}
		\item \label{item:propformal3} For each $p \in S^1$, the Dubrovin equation~\eqref{eqn:generalizedDubrovin} admits formal solutions of the form
		\begin{align}
		\xi_p^{\textrm{formal}}(\zeta) = e^{\zeta u_p} \sum_{k=0}^\infty r_p^k \zeta^{-k}, \qquad r_p^0 = du_p, \ r_p^k \in T_{\hat{\lambda}}^*M. \label{eqn:Ansatzformal}
		\end{align}
		These solutions are given by the functionals
		\begin{align}
		r_p^k &= \sum_{n = 0}^{k} a_p^n A^*_p (k-1-\cV)^* A_p^* (k-2-\cV)^* \dots A_p^* (n-\cV)^* du_p, \qquad a_p^0 = 1, \label{eqn:functionalsformal}
		\end{align}
		where $A_p$ is the left-inverse of $u_p - \cU$ with $A_p(0,1,0) = 0$, and depend on the choice of complex constants $a_p^n \in \C$ for $n \in \Z_{\geq 1}$.
	\end{enumerate}
\end{proposition}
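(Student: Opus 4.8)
Here is how I would approach the proof.

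My plan is to treat all three parts through a single recursion. First I would substitute the ansatz $\xi(\zeta)=e^{\zeta\mu}\sum_{k\geq 0}r^k\zeta^{-k}$, with $\mu$ the relevant canonical coordinate and the convention $r^{-1}=0$, into the Dubrovin equation~\eqref{eqn:generalizedDubrovin}, pair both sides with an arbitrary $\hat X\in T_{\hat\lambda}M$, and collect the coefficient of $e^{\zeta\mu}\zeta^{-k}$. Passing to transposes this produces the recursion
\[
(\mu-\cU)^*\, r^k=\bigl((k-1)\,\id-\cV\bigr)^*\, r^{k-1},\qquad k\geq 0 .
\]
At order $k=0$ this forces $r^0\in\ker(\mu-\cU)^*$, i.e.\ $r^0$ is a generalized eigenvector of $\cU$ for the eigenvalue $\mu$. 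By Proposition~\ref{prop:canonical} (assuming, as we may, that $\hat\lambda$ is generic so that the canonical coordinates are pairwise distinct) this eigenspace is one-dimensional, spanned respectively by $du_i$, $d\bar u_i$ or $du_p$, so the normalization of the leading term pins $r^0$ down exactly. The three parts then diverge only in how one solves the recursion for $k\geq 1$.

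For parts (1) and (2), $u_i$ (resp.\ $\bar u_i$) is a \emph{standard} eigenvalue, so a representable solution $\alpha=\eta_*\hat Z$ reduces the Dubrovin equation to~\eqref{eqn:repDub}, and I would pass to the $\Psi$-representation of Corollary~\ref{cor:representation}, where $\cU$ becomes the diagonal operator $U$, the metric becomes the diagonal form $\tilde\eta$ of Proposition~\ref{coro:metriccanonical}, and $\cV$ becomes an operator $\tilde V$ antisymmetric with respect to $\tilde\eta$. On $\cH(S^1)\oplus\C^n\oplus\C^{\bar n}$ the operator $u_i-U$ is invertible on every summand except the $i$-th copy of $\C$, where it vanishes; hence the order-$k$ equation is solvable precisely when the $i$-th component of $\bigl((k-1)+\tilde V\bigr)z^{k-1}$ vanishes, and then $z^k$ is determined up to adding a multiple $c_k e_i$ of the relevant basis vector. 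The crucial input is that the ``diagonal matrix element'' $(\tilde V e_i)_i$ vanishes, which follows from $\tilde\eta(\tilde V e_i,e_i)=-\tilde\eta(e_i,\tilde V e_i)$ together with $\tilde\eta(e_i,e_i)\neq 0$ (non-degeneracy of the critical point). Consequently the solvability condition at order $k+1$ reads $k\,c_k+(\text{already determined})=0$, and since $k\neq 0$ it fixes the previously free constant $c_k$. An induction on $k$ then yields a unique formal solution with leading term $du_i$; it is representable by construction, which also gives the claimed uniqueness.

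For part (3), $u_p$ is only a \emph{generalized} eigenvalue, so by Proposition~\ref{prop:canonical} the operator $u_p-\cU$ is injective on $T_{\hat\lambda}M$ and therefore admits left-inverses; the normalization $A_p(0,1,0)=0$ selects one, which is consistent because $\langle du_p,(0,1,0)\rangle=1\neq 0$ (from~\eqref{eqn:diffcanonical}, using $\langle d\lambda(p),(0,1,0)\rangle=-1$ and $\langle d\bar\lambda(p),(0,1,0)\rangle=1$), so the unit vector lies outside the image of $u_p-\cU$. Transposing, $(u_p-\cU)^*=u_p-\cU^*$ is surjective on the algebraic dual with right-inverse $A_p^*$, so the order-$k$ equation is \emph{always} solvable, with general solution
\[
r_p^k=A_p^*\,\bigl((k-1)\,\id-\cV\bigr)^*\, r_p^{k-1}+a_p^k\, du_p ,
\]
the kernel of $(u_p-\cU)^*$ being exactly $\C\, du_p$. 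Iterating from $r_p^0=du_p$ (so $a_p^0=1$) produces the closed expression~\eqref{eqn:functionalsformal}; conversely, any formal solution of the form~\eqref{eqn:Ansatzformal} differs at each order from one built this way by an element of $\ker(u_p-\cU)^*=\C\, du_p$, hence is of this form, and distinct sequences $(a_p^n)_{n\geq 1}$ give distinct solutions. The point is precisely that surjectivity of $(u_p-\cU)^*$ removes the solvability obstruction present in the discrete case, so nothing constrains the $a_p^n$.

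The step I expect to be hardest is the order-by-order bookkeeping in parts (1)--(2): making rigorous the interplay between the one-dimensional ambiguity at order $k$ and the solvability condition at order $k+1$, and checking that the induction closes at every step — in particular that the ``diagonal block'' of $\cV$ really vanishes in the canonical frame and that the divisions by $k$ are legitimate. For part (3) the subtle point is the purely algebraic one, namely that $A_p^*$ is a genuine right-inverse of $(u_p-\cU)^*$ on the whole algebraic dual and that its kernel is exactly $\C\, du_p$; this rests on the one-dimensionality of the generalized eigenspace from Proposition~\ref{prop:canonical}, which in turn uses the non-vanishing of $du_z/dz$ on $S^1$ (axioms~\ref{item:T3} and~\ref{item:T5}).
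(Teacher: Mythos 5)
Your proposal is correct and follows the same skeleton as the paper's proof: the same recursion obtained by inserting the exponential ansatz, the passage to the $\Psi$-representation (equation~\eqref{eqn:repDub} with diagonal $U$) for the discrete coordinates, the order-by-order interplay between the one-dimensional ambiguity and the next solvability condition, and, for part (3), the choice of the left-inverse $A_p$ with $A_p(0,1,0)=0$ together with the observation that the annihilator of $\mathrm{im}(u_p-\cU)$ is spanned by $du_p$, which leaves the constants $a_p^n$ free. The one genuine difference is how you establish the key vanishing $(\tilde V e_i)_i=0$: the paper proves this as an auxiliary lemma by an explicit computation, writing out $\cV\Psi^{-1}(0,e_i,0)$ and showing $\langle du_i,\cdot\rangle$ of it vanishes via a residue/integration-by-parts argument on $S^1$, whereas you deduce it abstractly from the antisymmetry of $\cV$ with respect to $\eta$ combined with the diagonal form of the metric in canonical coordinates (Proposition~\ref{coro:metriccanonical}) and the non-degeneracy $\lambda''(z_i)\neq 0$. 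Your route is shorter and more structural (it would apply verbatim to any frame diagonalizing both $\cU$ and $\eta$), at the price of invoking Proposition~\ref{coro:metriccanonical}; the paper's computation is self-contained and does not rely on the diagonal metric. Both arguments share the same implicit genericity assumptions (non-degenerate critical points, $u_i$ distinct from the other canonical coordinates, and for part (3) the codimension-one description of $\mathrm{im}(u_p-\cU)$), which you flag appropriately.
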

\begin{proof}
	To prove items~\ref{item:propformal1} and~\ref{item:propformal2}, let us solve~\eqref{eqn:repDub} perturbatively at $\infty$. First, we apply the change of variables $\hat{Y} = \Psi dy$, where $\Psi$ is defined as in Proposition~\ref{prop:canonical}, and obtain
	\begin{align}
	\hat{Y}_{\zeta} = \left( U + \frac{1}{\zeta} V \right) \hat{Y},
	\end{align}
	where $V = \Psi \cV \Psi^{-1}$ and $U = \Psi \cU \Psi^{-1}$, which takes the diagonal form~\eqref{eqn:diagU}. We propose an Ansatz of the form
	\begin{align}
	\hat{Y}_i^{\textrm{formal}} = e^{\zeta u_i} \left( \hat{Y}_i^0 + \hat{Y}_i^1 \zeta^{-1} + \dots \right), \qquad \hat{Y}_i^k \in \cH(S^1) \oplus \C^n \oplus \C^{\bar{n}},
	\end{align}
	which yields the recursion
	\begin{align}
	&(u_i - U) (\hat{Y}_i^0 )= 0, \\
	&(u_i - U) (\hat{Y}^{k+1}_i) = (k+V) (\hat{Y}_i^k ), \qquad k \geq 0.
	\end{align}
	We will show that these equations have a unique solution up to normalization. Let $e_i$ denote the canonical basis vector which is $1$ at the $i$-th entry and $0$ everywhere else. Then $\ker(u_i - U) = \langle (0, e_i, 0) \rangle$, so we choose 
	\begin{align}
	\hat{Y}^0_i = (0, -z_i^2 \lambda''(z_i) e_i, 0),
	\end{align}
	which satisfies
	\begin{align}
	\Psi^{-1}(\hat{Y}^0_i) = du_i.
	\end{align}
	Let us move on to the next equation, namely
	\begin{align}
	(u_i - U) (\hat{Y}_i^1) = V(\hat{Y}_i^0). \label{eqn:formaldiscrete1}
	\end{align}
	This equation is solvable if and only if $V(0, e_i, 0) \in \textrm{im}(u_i - U)$. Note the diagonal form of $U$~\eqref{eqn:diagU} allows us to decompose the space as $\cH(S^1) \oplus \C^n \oplus \C^{\bar{n}} = \ker(u_i - U) \oplus \textrm{im}(u_i - U)$, so it is enough to show that the projection of $V(0, e_i, 0)$ to the subspace $ \ker(u_i - U)$ is $0$. We prove it as an auxiliary lemma:
	\begin{lemma}
		The operator 
		\begin{align}
		V= \Psi \cV \Psi^{-1}: \cH(S^1) \oplus \C^n \oplus \C^{\bar{n}} \longrightarrow \cH(S^1) \oplus \C^n \oplus \C^{\bar{n}}
		\end{align}
		satisfies
		\begin{align}
		&\mathbb{P}_i \circ V(0,e_i, 0) = 0, \qquad &i = 1, \dots, n \label{eqn:Vantidiag1} \\ &\overline{\mathbb{P}}_j \circ V(0,0,e_j) = 0, \qquad &j = 1, \dots, \bar{n}, \label{eqn:Vantidiag2}
		\end{align}
		where $\mathbb{P}_i$ is the projection to the $i$-th entry of the second component, and $\overline{\mathbb{P}}_j$ is the projection to the $j$-th entry of the third component.
	\end{lemma}
	\begin{proof}
		The proofs of~\eqref{eqn:Vantidiag1} and~\eqref{eqn:Vantidiag2} are analogous, so we only perform the former. First, we compute
		\begin{align}
		\Psi^{-1}(0,e_i, 0) &= \frac{1}{z_i \lambda''(z_i)} \left( \frac{zw^{\prime}(z)}{z_i-z}, -\frac{e^u}{z_i^2}, -\frac{1}{z_i} \right) = -\frac{1}{z_i^2 \lambda''(z_i)} du_i, \\
		\hat{Y} = \cV \Psi^{-1}(0,e_i, 0) &= \frac{1}{z_i \lambda''(z_i)} \left( \frac12 \frac{z w^{\prime}(z)}{z_i-z} + \frac{z w(z)}{(z_i-z)^2}, \frac12 \frac{e^u}{z_i^2}, -\frac12 \frac{1}{z_i} \right).
		\end{align}
		To conclude the proof, we have to show that
		\begin{align}
		\langle du_i, \hat{Y} \rangle = - Y_{\leq 0} (z_i) + Y_v + \frac{e^u}{z_i} Y_u = 0.
		\end{align}
		It is clear that $Y_v + \frac{e^u}{z_i} Y_u = 0$. Let us compute
		\begin{align}
		Y_{\leq 0}(z_i) = \frac{1}{z_i \lambda''(z_i)} \left( \frac12 \frac{z_i}{2 \pi \ii} \oint_{|x| = 1} \frac{w^{\prime}(x)}{(z_i-x)^2} dx + \frac{z_i}{2 \pi \ii} \oint_{|x| = 1} \frac{w(x)}{(z_i - x)^3} dx \right) = 0,
		\end{align}
		where we have used integration by parts.
	\end{proof}
As an immediate corollary, $V(\hat{Y}_i^0) \in \textrm{im}(u_i - U)$ and equation~\eqref{eqn:formaldiscrete1} admits solutions for $\hat{Y}_i^1$. Regarding uniqueness, it is clear that two different solutions of~\eqref{eqn:formaldiscrete1} must differ by an element of $\ker(u_i - U)$. Therefore, we write
\begin{align}
\hat{Y}^1_i = \hat{R}^1_i + a^1_i \hat{Y}^0_i, \qquad \hat{R}^1_i \in \textrm{im}(u_i - U), \ a_i^1 \in \C.
\end{align}
Consider the second equation
\begin{align}
(u_i - U)(\hat{Y}^2_i) = (1+V)(\hat{Y}^1_i), \label{eqn:formaldiscrete2}
\end{align}
which has a solution for $\hat{Y}^2_i$ if and only if $(1+V)(\hat{Y}^1_i) \in \textrm{im}(u_i - U)$, which happens when
\begin{align}
a_i^1 \hat{Y}_i^0 + \mathbb{P}_i \circ V (\hat{R}^1_i) = 0.
\end{align}
This fixes uniquely the constant $a_i^1$ and ensures that~\eqref{eqn:formaldiscrete2} has a solution, which, as before, can be written as $\hat{Y}^2_i = \hat{R}^2_i + a_i^2 \hat{Y}_i^0$. Iterating gives a unique $\hat{Y}_i^{\textrm{formal}}$, which concludes the proof of item~\ref{item:propformal1} of the proposition. The proof of item~\ref{item:propformal2} is completely analogous, and we will not do it explicitly here.

To prove item~\ref{item:propformal3}, we insert the Ansatz~\eqref{eqn:Ansatzformal} in the Dubrovin equation~\eqref{eqn:generalizedDubrovin} and obtain the following recursion for the functionals $r^k_p$:
\begin{align}
&\langle r_p^0, (u_p - \cU)\hat{X} \rangle = 0, \qquad &\forall \hat{X} \in \cH(S^1) \oplus \C^2 \label{eqn:rprecursion0} \\
&\langle r_p^{k+1}, (u_p - \cU) \hat{X} \rangle = \langle r_p^k, (k- \cV) \hat{X} \rangle, \qquad &\forall \hat{X} \in \cH(S^1) \oplus \C^2, \ k \geq 0 \label{eqn:rprecursionk}
\end{align}
Equation~\eqref{eqn:rprecursion0} is the eigenspace equation for the eigenvalue $u_p$. By the results of Section~\ref{sec:canonical}, we have $r_p^0 = du_p$. Note that $u_p - \cU$ is injective (one can directly see this from the diagonal form~\eqref{eqn:diagU}, noting~\ref{item:T5} excludes the degenerate case of all the canonical coordinates $u_p$ being equal), but it fails to be surjective, as
\begin{align}
\textrm{im}(u_p - \cU) = \{ \hat{Y} \in \cH(S^1) \oplus \C^2| \ \langle du_p, \hat{Y} \rangle = 0 \}
\end{align}
is a subspace of $\cH(S^1) \oplus \C^2$ of codimension 1. Therefore, the operator $u_p - \cU$ admits left-inverses. Let $B_p, B^{\prime}_p$ be two left-inverses of $u_p - \cU$. Then
\begin{align}
(B_p - B^{\prime}_p) (\hat{X}) &= (B_p - B^{\prime}_p)  \left(\left( \hat{X} - \langle du_p, \hat{X} \rangle (0,1,0) \right) + \langle du_p, \hat{X} \rangle (0,1,0) \right) \\
&= \langle du_p, \hat{X} \rangle (B_p - B^{\prime}_p) (0,1,0), \notag
\end{align}
where we have used that $\hat{X} - \langle du_p, \hat{X} \rangle (0,1,0) \in \textrm{im}(u_p - \cU)$. Therefore, a left-inverse of $u_p - \cU$ is completely determined by its action on $(0,1,0)$, and we choose $A_p$ to be the one with $A_p (0,1,0) = 0$. Back to the system~\eqref{eqn:rprecursionk}, it is now clear that the recursively defined functionals
\begin{align}
&\hat{r}_p^0 = du_p, \\
&\langle \hat{r}_p^{k+1}, \hat{X} \rangle = \langle \hat{r}_p^k, (k-\cV) A_p \hat{X} \rangle
\end{align}
solve it. Written in terms of transpose operators, the functionals
\begin{align}
\hat{r}_p^k = A_p^* (k-1-\cV)^* A_p^* (k-2-\cV)^* \dots A_p^* (-\cV)^* du_p
\end{align}
give a formal solution of the form~\eqref{eqn:Ansatzformal} to equation~\eqref{eqn:generalizedDubrovin}. Let us now study the uniqueness of solutions. Let $r_p^0 = du_p, r_p^1, \dots, r_p^k$ be given, and suppose both $s^{k+1}_p$ and $t^{k+1}_p$ solve~\eqref{eqn:rprecursionk} for $r^{k+1}_p$. Then
\begin{align}
\langle s_p^{k+1} - t^{k+1}_p, (u_p - \cU) \hat{X} \rangle = 0,
\end{align}
so $s_p^{k+1} - t^{k+1}_p$ must be a scalar multiple of $du_p$. Thus, the most general solution of the next recursive step is
\begin{align}
r^{k+1}_p = A_p^* (k-\cV)^* r_p^k + a_p^{k+1} du_p, \label{eqn:solutionrecursivestep}
\end{align}
with $a_p^{k+1} \in \C$. From~\eqref{eqn:solutionrecursivestep} we can deduce the most general form of the functionals,~\eqref{eqn:functionalsformal}, thus completing the proof.
\end{proof}

\begin{remark}
	Since $A_p (0,1,0) = 0$ and $\langle du_p, (0,1,0) \rangle = 1$, it is easy to write the functionals of any given formal solution~\eqref{eqn:Ansatzformal} in the form~\eqref{eqn:functionalsformal} by setting 
	\begin{align}
	a_p^k = \langle r_p^k, (0,1,0) \rangle.
	\end{align}
\end{remark}
\begin{remark}
	At the special point $\hat{\lambda}_0$, we can compute the operator $A_p$ explicitly
	\begin{align}
	A_p \hat{X} = \left( \frac12 e^{-u} \left( \frac{1}{p} - \frac{1}{z} \right)^{-1} (X(z) - X(p)), \ - \frac12 \left( \frac1p X(p) + Y_u \right), \ -\frac12 e^{-u} X(p) \right)
	\end{align}
\end{remark}
\begin{remark}[Uniqueness of formal solutions]
	Let us explain why the functionals $r^k_i$ in the expansion of $\xi_{i}^{\textrm{formal}}$ are uniquely determined, whereas $r_p^k$ in the expansion of $\xi_p^{\textrm{formal}}$ are not. Assume we have $r^{0}_{i}, r^1_{i}, \dots, r^{k-1}_{i}$ and let $t^k_{i}$ be such that
	\begin{align}
	\langle t^{k}_{i}, (u_{i} - \mathcal{U}) \hat{X} \rangle = \langle r^{k-1}_{i}, (k-1-\mathcal{V}) \hat{X} \rangle .
	\end{align}
	Then the general solution of 
	\begin{align}
	\langle r^{k}_{i}, (u_{i} - \mathcal{U}) \hat{X} \rangle = \langle r^{k-1}_{i}, (k-1-\mathcal{V}) \hat{X} \rangle 
	\end{align}
	is given by $r^k_{i} = t^k_{i} + a^k_{i} du_{i}$. To fix the constant $a^k_{i}$ we consider the next equation
	\begin{align}
	\langle r^{k+1}_{i}, (u_{i} - \mathcal{U}) \hat{X} \rangle = \langle t^k_{i} + a^k_{i} du_{i}, (k-\mathcal{V}) \hat{X} \rangle ,
	\end{align}
	and choose $\hat{X}$ to be the vector representative of $du_i$, i.e., $\eta_*(\hat{X}) = du_i$ (here we do not denote $\hat{X} = du_i$ as usual because it might lead to confusion). In particular, $\hat{X}  \in \ker(u_{i} - \mathcal{U})$, which gives  
	\begin{align}
	a^k_{i} = - \frac{\langle t^k_{i}, (k-\mathcal{V}) \hat{X} \rangle}{\langle du_{i}, (k-\mathcal{V}) \hat{X} \rangle} .
	\end{align}
	Note that the denominator does not vanish since $\mathcal{V} (\hat{X}) \in \textrm{im} (u_{i} - \mathcal{U})$ and $\hat{X} \notin \textrm{im} (u_{i} - \mathcal{U})$. On the other hand, it is impossible to repeat this procedure to fix the constants appearing in $\xi_{p}^{\textrm{formal}}$, as the operator $(u_p - \mathcal{U})$ is injective.
\end{remark}

\section{Integral solutions and their asymptotics} 
\label{sec:analyticsolutions}

In this section, we find a family of solutions to the Dubrovin equation defined in terms of an exponential integral along the unit circle in the complex plane. We derive the asymptotic behaviour of such solutions at $\zeta\sim \infty$, obtaining this way formal solutions in the sense of the previous section.

\subsection{Integral solutions}

We define a family of functionals $y_\sigma(\zeta)$ on $M_0 \times \C$ and we prove explicitly that their differentials $dy_\sigma(\zeta)$ solve the Dubrovin equation~\eqref{eqn:generalizedDubrovin}.

\begin{proposition}
Let $\sigma \in \mathbb{C}$ and consider the functionals
\begin{align} \label{eq:ysol}
y_\sigma (\zeta) = \frac{\zeta^{-1/2}}{2 \pi \ii} \oint_{|z| = 1} e^{\zeta \lambda_\sigma(z) } \frac{dz}{z}.
\end{align}
Their differentials $dy_\sigma(\zeta)$ are solutions of the Dubrovin equation~\eqref{eqn:generalizedDubrovin}.
\end{proposition}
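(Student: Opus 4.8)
The plan is to verify directly that $\alpha = dy_\sigma(\zeta)$ satisfies the Dubrovin equation~\eqref{eqn:generalizedDubrovin}. Since the only dependence of $y_\sigma$ on a point of $M_0$ is through the function $\lambda_\sigma(z)$, differentiating under the integral sign gives
\[
\langle dy_\sigma(\zeta), \hat X\rangle = \frac{\zeta^{1/2}}{2\pi\ii}\oint_{|z|=1} e^{\zeta\lambda_\sigma(z)}\,\langle d\lambda_\sigma(z),\hat X\rangle\,\frac{dz}{z},
\]
where, in $w$-coordinates, $\langle d\lambda_\sigma(z),\hat X\rangle = (\sigma-1)X(z) + X_{\geq 1}(z) + X_v + e^u z^{-1}X_u$ for $\hat X = (X(z),X_v,X_u)$. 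Applying $\partial_\zeta$ produces a contribution proportional to $\zeta^{1/2}\lambda_\sigma(z)$ and one proportional to $\tfrac{1}{2}\zeta^{-1/2}$; expanding $\langle dy_\sigma(\zeta),(\cU - \tfrac{1}{\zeta}\cV)\hat X\rangle$ in the same way, the statement becomes the vanishing, for all $\hat X \in \cH(S^1)\oplus\C^2$, of
\[
\frac{\zeta^{1/2}}{2\pi\ii}\oint_{|z|=1}\!\bigl[\lambda_\sigma(z)\langle d\lambda_\sigma(z),\hat X\rangle - \langle d\lambda_\sigma(z),\cU\hat X\rangle\bigr]e^{\zeta\lambda_\sigma(z)}\frac{dz}{z} + \frac{\zeta^{-1/2}}{2\pi\ii}\oint_{|z|=1}\!\bigl[\tfrac{1}{2}\langle d\lambda_\sigma(z),\hat X\rangle + \langle d\lambda_\sigma(z),\cV\hat X\rangle\bigr]e^{\zeta\lambda_\sigma(z)}\frac{dz}{z}.
\]

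The second step is to feed Lemma~\ref{lem:keylemma} into the first integral. That lemma rewrites $\langle d\lambda_\sigma(z),\cU\hat X\rangle - \lambda_\sigma(z)\langle d\lambda_\sigma(z),\hat X\rangle$ as $z\lambda_\sigma'(z)$ times an explicit bracket whose only $z$-dependence sits in $\langle d\lambda_\sigma(z),(\tfrac{w}{zw'}X,0,-X_u)\rangle$, the remaining pieces $\bigl((1-\tfrac{w}{zw'})X\bigr)_0$ and $X_v$ being constant in $z$. Since $z\lambda_\sigma'(z)\tfrac{dz}{z} = \lambda_\sigma'(z)\,dz$ and $e^{\zeta\lambda_\sigma(z)}\lambda_\sigma'(z) = \zeta^{-1}\partial_z e^{\zeta\lambda_\sigma(z)}$, an integration by parts around the closed contour $|z|=1$ — legitimate with no boundary term because $\lambda_\sigma$, $w/(zw')$, and hence the whole integrand, are single-valued and holomorphic in an annulus around $S^1$ (here one uses axiom~\ref{item:T2} to guarantee $w'$ does not vanish there) — turns the first integral into $-\frac{\zeta^{-1/2}}{2\pi\ii}\oint_{|z|=1} e^{\zeta\lambda_\sigma(z)}\,\partial_z\langle d\lambda_\sigma(z),(\tfrac{w}{zw'}X,0,-X_u)\rangle\,dz$, since $\partial_z$ annihilates the constant-in-$z$ remainder.

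The final step is a pointwise identity of integrands. From the explicit form~\eqref{eqn:defnV} of $\cV$ one reads off $\tfrac{1}{2}\hat X + \cV\hat X = (z\partial_z(\tfrac{w}{zw'}X),0,X_u)$, and then, setting $g(z) = \tfrac{w(z)}{zw'(z)}X(z)$ and using $(z\partial_z g)_{\geq 1} = z\partial_z(g_{\geq 1})$, one checks the elementary identity
\[
\frac{1}{z}\bigl\langle d\lambda_\sigma(z),(z\partial_z g,0,X_u)\bigr\rangle = (\sigma-1)g'(z) + (g_{\geq 1})'(z) + e^u z^{-2}X_u = \partial_z\bigl\langle d\lambda_\sigma(z),(g,0,-X_u)\bigr\rangle.
\]
Hence the two surviving $\zeta^{-1/2}$ integrands cancel identically, the displayed sum vanishes, and $dy_\sigma(\zeta)$ solves the Dubrovin equation. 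I expect no genuine analytic difficulty here; the only delicate point is the bookkeeping in the second step — keeping track of which contributions carry $\zeta^{1/2}$ versus $\zeta^{-1/2}$ and confirming that after integration by parts precisely the term matching the last-step identity survives — while the integration by parts and the Laurent-coefficient computation are routine.
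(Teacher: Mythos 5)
Your proof is correct and takes essentially the same route as the paper: the same integral representation of $dy_\sigma(\zeta)$, the same appeal to Lemma~\ref{lem:keylemma}, and the same closed-contour mechanism by which $\oint e^{\zeta\lambda_\sigma(z)}\lambda_\sigma'(z)\,dz$ and its integration-by-parts relatives vanish. The only difference is that you spell out explicitly how the $\cV$-term and the $\partial_\zeta$-derivative of the $\zeta^{1/2}$ prefactor match the integrated-by-parts contribution, via $\tfrac12\hat X+\cV\hat X=\bigl(z\partial_z(\tfrac{w}{zw'}X),0,X_u\bigr)$, a step the paper compresses into its single displayed identity before invoking the lemma.
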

\begin{proof}
The differentials $dy_\sigma(\zeta) \in T^*_{\hat{\lambda}} M$ are given by
	\begin{align} 
	\langle dy_\sigma (\zeta), \hat{X} \rangle = \frac{\zeta^{1/2}}{2 \pi \ii} \oint_{|z| = 1} e^{\zeta \lambda_\sigma(z)} \langle d\lambda_\sigma(z), \hat{X} \rangle \frac{dz}{z}. \label{eqn:dysigmadfn}
	\end{align}
	Plugging~\eqref{eqn:dysigmadfn} in~\eqref{eqn:generalizedDubrovin} yields
	\begin{align}
	&\langle dy_\sigma (\zeta), \left( \mathcal{U} - \frac{1}{\zeta} \mathcal{V}  \right) \hat{X} \rangle - \langle dy_\sigma (\zeta), \hat{X} \rangle_\zeta = \notag \\ &= 
	\frac{\zeta^{1/2}}{2 \pi \ii} \oint_{|z| = 1} e^{\zeta \lambda_\sigma(z)} \Big( \langle d\lambda_\sigma(z), \mathcal{U} \hat{X} \rangle - \lambda_\sigma(z) \langle d\lambda_\sigma(z), \hat{X} \rangle +\\ 
	&\qquad + z \lambda_\sigma^{\prime}(z) \left\langle d\lambda_\sigma(z), \left( \frac{w(z)}{z w^{\prime}(z)} X(z), 0, - X_u \right) \right\rangle \Big) \frac{dz}{z}, \notag
	\end{align}
	which vanishes by Lemma~\ref{lem:keylemma}.
\end{proof}

\begin{remark}
The differentials $dy_\sigma(\zeta)$ are actually representable, see Remark~\ref{dyarerepresentable} below. We can therefore use Proposition 20 in~\cite{CM15} to prove that they are covariantly constant w.r.t. the full deformed flat connection $\widetilde{\nabla}$.
\end{remark}

\subsection{Asymptotics}

Let us study the asymptotics of the solutions $dy_\sigma(\zeta)$ for $|\zeta| \to \infty$. The usual approach to find the asymptotics of integrals of the form~\eqref{eq:ysol} or~\eqref{eqn:dysigmadfn} is by applying the steepest descent method, first by expressing the path of integration as a combination of the steepest descent paths passing through the critical points of the superpotential $\lambda_\sigma(z)$, and then by computing saddle point asymptotics, see e.g.~\cite{wittenAnalyticContinuationChernsimons2011}.

In our case, however, for generic values of $\sigma$, the path of integration cannot be deformed away from the domain of definition of $\lambda_\sigma(z)$, namely a neighbourhood of $S^1$. We will therefore restrict our analysis to those values of $\sigma$ such that the critical points of $\lambda_\sigma(\zeta)$ belong to $S^1$, and to $\sigma=0$ and $1$, for which $\lambda_\sigma$ coincides with $-\lambda$ and $\bla$, respectively.

Let us consider a point $\hat{\lambda} = (\lambda, \bar{\lambda})$ in the Dubrovin--Frobenius manifold $M_0$ such that $\lambda$  and $\bar{\lambda}$ have $n$ and $\bar{n}$ critical points, respectively. Denote by $z_i$ and $\bar{z}_j$ the critical points and by $u_i$ and $\bar{u}_j$ the critical values of $-\lambda$ and $\bar\lambda$, respectively, as in~\eqref{eq:uub}-\eqref{eq:uub1}. Recall that we define a curve $\Sigma$ via the function $\sigma(z)$ on $S^1$, see~\eqref{eqn:curvesigma}. For every $\sigma\in\Sigma$ the superpotential $\lambda_\sigma(z)$ has a finite number of critical points $p_1, \dots  , p_s$, which are non-degenerate because of~\eqref{eq:nondeggg}. 

For $\sigma$ belonging to the curve $\Sigma$, the path of integration $S^1$ passes through the points $p_1, \dots  , p_s$. For $|\zeta| \to \infty$ in a generic direction in the $\zeta$-plane the asymptotics of the integral will be dominated by the saddle point asymptotics of one of such points. 
More precisely, let us consider the lines in $\C$ passing through the origin and given by $\Re( \zeta (u_{p_i} - u_{p_j} )) =0$ for  $i, j = 1, \dots , s$. 
These lines divide the $\zeta$-plane in sectors $S(u_{p_j})$ for  $j=1, \dots, s$ such that in the sector $S(u_{p_j})$ the exponential $e^{\zeta u_{p_j}}$ has the dominant asymptotic behaviour as $|\zeta| \to \infty $.

For $\sigma=0$, the critical points $z_i$ of the exponent $-\lambda(z)$ belong to the exterior of the unit disc. In this case, however, the integrand is holomorphic in $D_\infty$, so we can deform the path of integration in such a way that it passes through all the critical points. As above, in each of the sectors $S(u_j)$ for $j=1, \dots , n$  determined by removing the lines $\Re( \zeta (u_i - u_j )) =0$ for  $i, j = 1, \dots , n$ from $\C$, the critical value $u_j$ will determine the asymptotics. 

Similarly, for $\sigma =1$, the path of integration can be deformed in such a way that it passes through all the critical points $\bar{z}_i$ in the interior of $D_0$ and the critical value $\bar{u}_j$ will dominate the asymptotics in a sector $S(\bar{u}_j)$, among the sectors obtained by removing the lines $\Re( \zeta (\bar{u}_i - \bar{u}_j )) =0$ for  $i, j = 1, \dots , \bar{n}$ from the $\zeta$-plane. 

For any $\hat{X} \in T_{\hat{\lambda}} M$ we have the following asymptotic behaviour.

\begin{proposition} 
\label{prop:asymptoticsdysigma} 
For $\sigma \in \Sigma$ and $p = p_j$ one of the critical points of $\lambda_\sigma$, we have 
\begin{align}
	\langle dy_\sigma, \hat{X} \rangle \sim e^{\zeta u_p } \sum_{k=0}^{\infty} \frac{1}{2 \Gamma \left( \frac12 - k \right)} \frac{1}{2 \pi \ii} \oint_p \frac{\langle d\lambda_\sigma (z), \hat{X} \rangle }{( \lambda_\sigma(z) - u_p)^{k + \frac12} } \frac{dz}{z} \ \zeta^{-k}, \ |\zeta| \to \infty , \ \zeta \in S(u_p).
\label{eqn:dypformal}
\end{align}
For $\sigma = 0$ and $z_j$ one of the critical points of $\lambda$, we have 
\begin{align}
	\langle dy_0, \hat{X} \rangle \sim -e^{ \zeta u_j } \sum_{k=0}^{\infty} \frac{1}{2 \Gamma \left( \frac12 - k \right)} \frac{1}{2 \pi \ii} \oint_{z_j} \frac{\langle d\lambda(z) , \hat{X} \rangle }{( - \lambda(z) - u_j )^{k + \frac12} } \frac{dz}{z} \ \zeta^{-k}, 
	 \ |\zeta| \to \infty , \ \zeta \in S(u_j).
	 \label{eqn:dy0formal}
\end{align}
For $\sigma = 1$ and $\bar{z}_j$ one of the critical points of $\bar\lambda$, we have 
\begin{align}
	\langle dy_1, \hat{X} \rangle \sim e^{ \zeta \bar{u}_j } \sum_{k=0}^{\infty} \frac{1}{2 \Gamma \left( \frac12 - k \right)} \frac{1}{2 \pi \ii} \oint_{\bar{z}_j} \frac{\langle d\bar{\lambda}(z) , \hat{X} \rangle }{( \bar{\lambda}(z) - \bar{u}_j )^{k + \frac12} } \frac{dz}{z} \ \zeta^{-k}, 
	 \ |\zeta| \to \infty , \ \zeta \in S(\bar{u}_j).
	\label{eqn:dy1formal}
\end{align}
\end{proposition}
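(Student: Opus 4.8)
\emph{Plan.} The three formulas are three incarnations of a single fact. For a fixed tangent vector $\hat X$, each of the quantities $\langle dy_\sigma(\zeta),\hat X\rangle$ is, by~\eqref{eqn:dysigmadfn}, an ordinary scalar integral of Laplace type, namely $\zeta^{1/2}$ times $\frac{1}{2\pi\ii}\oint_C e^{\zeta f(z)}g(z)\,dz$ with $f=\lambda_\sigma$ and $g(z)=\langle d\lambda_\sigma(z),\hat X\rangle/z$ holomorphic near the contour $C$, and the plan is to evaluate it by the classical method of steepest descent. Writing $\langle d\lambda_\sigma(z),\hat X\rangle=(\sigma-1)X(z)+X_{\geq1}(z)+X_v+\frac{e^u}{z}X_u$ as in the proof of Lemma~\ref{lem:keylemma}, one checks that for $\sigma=0$ the integrand of~\eqref{eq:ysol}, namely $e^{-\zeta\lambda(z)}\langle d\lambda_0(z),\hat X\rangle\frac{dz}{z}$, is a holomorphic $1$-form on $D_\infty\setminus\{\infty\}$, and that for $\sigma=1$ it is holomorphic on $D_0\setminus\{0\}$; for $\sigma\in\Sigma$ no such extension exists, but the critical points $p_1,\dots,p_s$ of $\lambda_\sigma$ already lie on $C=S^1$ and the integrand is holomorphic in an annulus about $S^1$.

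Fix $\sigma$ and the corresponding sector. First I would deform $C$, inside the region of holomorphy of the integrand, into a contour in steepest-descent position for $\Re(\zeta\lambda_\sigma)$ at every critical point: for $\sigma=0$ pushing $S^1$ outward through $z_1,\dots,z_n$ inside $D_\infty\setminus\{\infty\}$, for $\sigma=1$ pushing it inward through $\bar z_1,\dots,\bar z_{\bar n}$ inside $D_0\setminus\{0\}$, and for $\sigma\in\Sigma$ merely bending $S^1$ slightly off itself near each $p_j$ while staying in the annulus. Next comes the standard dominance argument: for $\zeta$ in the sector $S(u_p)$ — by definition the one in which $e^{\zeta u_p}$ dominates among the exponentials attached to the critical values — the passage of the deformed contour through the critical point $p$ contributes $e^{\zeta u_p}$ times an asymptotic series, the passages through the other critical points contribute terms smaller by factors $e^{\zeta(u_{p'}-u_p)}$, and the remaining arcs contribute $o(e^{\zeta u_p}\zeta^{-N})$ for every $N$. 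Finally I would carry out the local saddle expansion at $p$: since the critical point is nondegenerate (by~\eqref{eq:nondeggg} for $\sigma\in\Sigma$, and by genericity of $\hat\lambda$ for $\sigma=0,1$), introducing $t=\lambda_\sigma(z)-u_p$ turns the local integral into a Hankel-type integral in $t$; expanding $g(z)\,dz/dt$ in half-integer powers of $t$ and integrating termwise by means of the Hankel representation $\frac{1}{2\pi\ii}\int e^{t}t^{k-1/2}\,dt=\frac{1}{\Gamma(1/2-k)}$ produces exactly the series displayed, once its coefficients are rewritten as the loop integrals $\frac{1}{2\pi\ii}\oint_p\frac{\langle d\lambda_\sigma(z),\hat X\rangle}{(\lambda_\sigma(z)-u_p)^{k+1/2}}\frac{dz}{z}$; note that $(\lambda_\sigma(z)-u_p)^{-k-1/2}$ has a pole of order $2k+1$ at $p$, so this is a genuine residue. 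The prefactor $\zeta^{1/2}$ then converts the powers $\zeta^{-k-1/2}$ into $\zeta^{-k}$.

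The overall sign in~\eqref{eqn:dy0formal} is an orientation effect: $S^1$ is positively oriented about $0$ but negatively oriented as the boundary of a neighbourhood of $\infty$, so deforming it outward reverses the orientation of the small loops about the $z_i$ relative to the convention used in $\oint_{z_i}$; no such reversal occurs for $\sigma=1$, where we deform inward about $0$, nor for $\sigma\in\Sigma$. I expect the substantive points to be: (i) the bookkeeping of the branch of the square root and of the various orientations so that the numerical constant comes out as exactly $\frac{1}{2\Gamma(1/2-k)}$, and, in the case $\sigma\in\Sigma$, the verification that $S^1$ can be bent into steepest-descent position near $p$ without leaving the strip of holomorphy of $\lambda_\sigma$; and (ii) a clean estimate showing that truncating the series at order $N$ leaves a remainder $O(\zeta^{-N-1})$ uniformly on closed subsectors of $S(u_p)$, i.e. that the expansion is asymptotic in the sense of Poincar\'e (it is in general divergent, its Borel summability being the subject of Section~\ref{sec:resurgenceandstokes}). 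Since $\hat X$ is held fixed throughout, no difficulty stemming from the infinite dimension of $M_0$ enters the argument.
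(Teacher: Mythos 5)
Your plan is essentially the paper's own proof: the paper also treats $\langle dy_\sigma(\zeta),\hat X\rangle$ as a scalar Laplace-type integral for each fixed $\hat X$, deforms the contour outward through the $z_i$ (for $\sigma=0$), inward through the $\bar z_j$ (for $\sigma=1$), and not at all for $\sigma\in\Sigma$, and then applies the saddle-point Lemma~\ref{lem:saddlepoint} of the appendix, proved there via Watson's lemma; your Hankel-representation computation of the coefficient is the same calculation in different clothes, since $\Gamma\left(k+\tfrac12\right)\Gamma\left(\tfrac12-k\right)=(-1)^k\pi$ converts the lemma's $\Gamma(k+\tfrac12)$ times a residue into the displayed $\frac{1}{2\Gamma(1/2-k)}\frac{1}{2\pi\ii}\oint_p$. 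Your deferral of the remainder estimate and of the branch bookkeeping is at the level of rigour of the paper itself, which disposes of the branch issue by the normalization remark following Lemma~\ref{lem:saddlepoint}.

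One concrete point in your write-up is wrong: the explanation of the minus sign in \eqref{eqn:dy0formal}. The outward deformation is a homotopy inside $D_\infty\setminus\{\infty\}$, so it reverses no orientation; the deformed contour passes through (does not encircle) the $z_i$, and the loops $\oint_{z_j}$ appearing in the coefficients are counterclockwise residue integrals in all three cases. The sign is purely algebraic: the saddle expansion at $z_j$ yields the analogue of \eqref{eqn:dypformal} with $\lambda_\sigma=\lambda_0=-\lambda$ and numerator $\langle d\lambda_0(z),\hat X\rangle$, and \eqref{eqn:dy0formal} results from substituting $\langle d\lambda_0(z),\hat X\rangle=-\langle d\lambda(z),\hat X\rangle$ while the denominator $(-\lambda(z)-u_j)^{k+1/2}=(\lambda_0(z)-u_j)^{k+1/2}$ is unchanged; no sign appears for $\sigma=1$ because $d\lambda_1=+d\bar\lambda$. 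Since your integrand is written with $\langle d\lambda_0(z),\hat X\rangle$, you must perform this rewriting anyway, and imposing your claimed orientation flip on top of it would flip the sign twice and produce the wrong overall sign in \eqref{eqn:dy0formal}; drop the orientation argument and the proof matches the paper's.
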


In the above formulas the symbol $\oint_z$ denotes integration along a small counterclockwise simple path around $z$.

\begin{proof}
	Expression~\eqref{eqn:dypformal} follows from Lemma~\ref{lem:saddlepoint} applied to $\langle dy_\sigma, \hat{X} \rangle$. For~\eqref{eqn:dy0formal}, note the integrand of $\langle dy_0, \hat{X} \rangle$ is holomorphic on $D_\infty \setminus \{ \infty \}$, so we can deform the path of integration to one that passes through all critical points of $\lambda$, and then apply again Lemma~\ref{lem:saddlepoint}. Finally, for~\eqref{eqn:dy1formal}, we deform the path so that it passes through all critical points of $\bar{\lambda}$, and then we apply Lemma~\ref{lem:saddlepoint}.
\end{proof}

\begin{proposition}
	The asymptotic expansions of $\{dy_\sigma\}_{\sigma \in \Sigma \cup \{ 0, 1 \} }$ at $|\zeta| \rightarrow \infty$ given in Proposition~\ref{prop:asymptoticsdysigma} are formal solutions of the Dubrovin equation.
\end{proposition}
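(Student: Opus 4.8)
The content of the statement is that the series $e^{\zeta u_p}\sum_{k\ge0}r_p^k\,\zeta^{-k}$ whose coefficients are read off from~\eqref{eqn:dypformal}, and the analogous series coming from~\eqref{eqn:dy0formal} and~\eqref{eqn:dy1formal}, satisfy the recursion~\eqref{eqn:rprecursion0}--\eqref{eqn:rprecursionk} that characterises formal solutions in Proposition~\ref{prop:formal}~\ref{item:propformal3}. I would give this in two complementary ways, taking the second as the actual proof.

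\emph{Conceptual route.} Fix $\hat X\in T_{\hat\lambda}M$. Since $dy_\sigma(\zeta)$ is an exact solution of~\eqref{eqn:generalizedDubrovin}, we have the identity of genuine functions $\partial_\zeta\langle dy_\sigma,\hat X\rangle=\langle dy_\sigma,\mathcal{U}\hat X\rangle-\tfrac1\zeta\langle dy_\sigma,\mathcal{V}\hat X\rangle$. Applying Proposition~\ref{prop:asymptoticsdysigma} not only to $\hat X$ but also to $\mathcal{U}\hat X$ and $\mathcal{V}\hat X$, all three functions admit, in the sector $S(u_p)$, asymptotic expansions of the form $e^{\zeta u_p}\sum_k c_k\zeta^{-k}$; in particular the right-hand side of the identity is asymptotically developable, hence so is $\partial_\zeta\langle dy_\sigma,\hat X\rangle$, and its expansion is the term-by-term $\zeta$-derivative of the expansion of $\langle dy_\sigma,\hat X\rangle\sim e^{\zeta u_p}\sum_k\langle r_p^k,\hat X\rangle\zeta^{-k}$. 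Substituting into the identity and matching coefficients of $e^{\zeta u_p}\zeta^{-k}$ yields exactly $\langle r_p^0,(u_p-\mathcal{U})\hat X\rangle=0$ and $\langle r_p^{k+1},(u_p-\mathcal{U})\hat X\rangle=\langle r_p^k,(k-\mathcal{V})\hat X\rangle$, for all $\hat X$, which is the claim; the cases $\sigma=0,1$ are identical with $u_p$ replaced by $u_j$, resp.\ $\bar u_j$. The only delicate point is the licence to differentiate the asymptotic expansion, which is standard once one knows a priori that the derivative is asymptotically developable — and this is precisely what the Dubrovin equation supplies.

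\emph{Direct route (the one I would write out).} Using $r_p^k(\hat X)=\frac{1}{2\Gamma(\tfrac12-k)}\frac{1}{2\pi\ii}\oint_p\frac{\langle d\lambda_\sigma(z),\hat X\rangle}{(\lambda_\sigma(z)-u_p)^{k+1/2}}\frac{dz}{z}$, I expand $\langle r_p^{k+1},(u_p-\mathcal{U})\hat X\rangle$ and replace $\langle d\lambda_\sigma(z),\mathcal{U}\hat X\rangle$ via Lemma~\ref{lem:keylemma}, which gives $\langle d\lambda_\sigma(z),(u_p-\mathcal{U})\hat X\rangle=(u_p-\lambda_\sigma(z))\langle d\lambda_\sigma(z),\hat X\rangle+z\lambda_\sigma'(z)\big[\langle d\lambda_\sigma(z),(\tfrac{w}{zw'}X,0,-X_u)\rangle-\big((1-\tfrac{w}{zw'})X\big)_0+X_v\big]$. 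The term with $(u_p-\lambda_\sigma(z))$ cancels one power of $(\lambda_\sigma-u_p)$ in the denominator and, using $\Gamma(\tfrac12-k)=(-\tfrac12-k)\Gamma(-\tfrac12-k)$, produces $(k+\tfrac12)\langle r_p^k,\hat X\rangle$. In the remaining term one has $z\lambda_\sigma'(z)=\frac{d}{d\log z}\lambda_\sigma(z)$; since $\big((1-\tfrac{w}{zw'})X\big)_0-X_v$ is constant in $z$, only the $\langle d\lambda_\sigma(z),(\tfrac{w}{zw'}X,0,-X_u)\rangle$ part contributes after an integration by parts in $z$, and a direct check with~\eqref{eqn:defnV} shows $z\partial_z\langle d\lambda_\sigma(z),(\tfrac{w}{zw'}X,0,-X_u)\rangle=\langle d\lambda_\sigma(z),(\mathcal{V}+\tfrac12)\hat X\rangle$. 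Hence this part equals $-\langle r_p^k,(\mathcal{V}+\tfrac12)\hat X\rangle$, and altogether $\langle r_p^{k+1},(u_p-\mathcal{U})\hat X\rangle=(k+\tfrac12)\langle r_p^k,\hat X\rangle-\langle r_p^k,(\mathcal{V}+\tfrac12)\hat X\rangle=\langle r_p^k,(k-\mathcal{V})\hat X\rangle$, i.e.~\eqref{eqn:rprecursionk}. The base case $\langle r_p^0,(u_p-\mathcal{U})\hat X\rangle=0$ holds because $r_p^0=du_p$ (the leading saddle-point term, cf.\ Lemma~\ref{lem:saddlepoint}) and $du_p$ is a generalized eigenvector of $\mathcal{U}$ by Proposition~\ref{prop:canonical}. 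Setting $\sigma=0$ and replacing $u_p$ by $u_j$ (resp.\ $\sigma=1$ and $u_p$ by $\bar u_j$) gives~\eqref{eqn:dy0formal} and~\eqref{eqn:dy1formal} verbatim.

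The main obstacle is bookkeeping rather than conceptual: in the direct route one must carry the half-integer powers and the $\Gamma$-prefactors correctly through the integration by parts and recognise that $\mathcal{V}+\tfrac12$ acts on the relevant component exactly as $z\partial_z$, so that Lemma~\ref{lem:keylemma} feeds precisely into the $\mathcal{V}$-term of the recursion; in the conceptual route the single point requiring care — term-by-term differentiability of the asymptotic expansion — is guaranteed by the Dubrovin equation itself.
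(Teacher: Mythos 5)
Your direct route is correct and is essentially the paper's own proof: the paper likewise takes the explicit contour-integral formula for $r_p^k$, invokes Lemma~\ref{lem:keylemma} to replace $\langle d\lambda_\sigma(z),\mathcal{U}\hat X\rangle-\lambda_\sigma(z)\langle d\lambda_\sigma(z),\hat X\rangle$, and disposes of the remainder by the same power-shifting, $\Gamma$-identities and integration by parts around the nondegenerate critical point (your identity $z\partial_z\langle d\lambda_\sigma(z),(\tfrac{w}{zw'}X,0,-X_u)\rangle=\langle d\lambda_\sigma(z),(\mathcal{V}+\tfrac12)\hat X\rangle$ is exactly the ``computation similar to the previous one'' the paper leaves implicit, and your $\sigma=0,1$ reduction matches its closing remark). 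Your alternative conceptual route (term-by-term differentiation of the sectorial asymptotics of the exact solutions) is also sound but is not the argument the paper uses; the only cosmetic slip is that $r_p^0$ is a nonzero multiple of $du_p$ rather than $du_p$ itself, which does not affect the eigenvector argument for the base case.
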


\begin{proof}
	Let us prove that~\eqref{eqn:dypformal} defines a formal solution of the Dubrovin equation of the form~\eqref{eqn:Ansatzformal} with $r^k_p$ given by
	\begin{align}
	\langle r_p^k, \hat{X} \rangle =  \frac{1}{2 \Gamma \left( \frac12 - k \right)} \frac{1}{2 \pi \ii} \oint_p \frac{\langle d\lambda_\sigma (z), \hat{X} \rangle }{( \lambda_\sigma(z) - u_p)^{k + \frac12} } \frac{dz}{z}.
	\end{align}
	For that, we need to prove that the cotangent vectors $r_p^k$ satisfy the recursion relations~\eqref{eqn:rprecursion0}-\eqref{eqn:rprecursionk}. Let us first show~\eqref{eqn:rprecursion0}. We have
\begin{align}
	\langle r_p^0, (\mathcal{U} - u_p) \hat{X} \rangle  &= \frac{1}{2 \Gamma\left( \frac12 \right)} \frac{1}{2 \pi \ii} \oint_p \frac{\langle d\lambda_\sigma(z), \mathcal{U} \hat{X} \rangle - u_p \langle d\lambda_\sigma(z), \hat{X} \rangle}{(\lambda_\sigma(z) - u_p)^{\frac12}} \frac{dz}{z}. 
\end{align}
By adding and subtracting a term proportional to $\lambda_\sigma(z)$, the previous expression equals
\begin{align}  \label{eqn:proofpropasymp}
	&\frac{1}{2 \Gamma\left( \frac12 \right)} \frac{1}{2 \pi \ii} \oint_p \frac{\langle d\lambda_\sigma(z), \mathcal{U} \hat{X} \rangle - \lambda_\sigma(z) \langle d\lambda_\sigma(z), \hat{X} \rangle}{(\lambda_\sigma(z) - u_p)^{\frac12}} \frac{dz}{z} +\\
	&+ \frac{1}{2 \Gamma\left( \frac12 \right)} \frac{1}{2 \pi \ii} \oint_p \frac{ (\lambda_\sigma(z) -  u_p) \langle d\lambda_\sigma(z), \hat{X} \rangle}{(\lambda_\sigma(z) - u_p)^{\frac12}} \frac{dz}{z}.  \notag
	\end{align}
	The second summand equals
	\begin{align}
	\frac{1}{2 \Gamma\left( \frac12 \right)} \frac{1}{2 \pi \ii} \oint_p (\lambda_\sigma(z) -  u_p)^{1/2} \langle d\lambda_\sigma(z), \hat{X} \rangle \frac{dz}{z},
	\end{align}
	which vanishes because the integrand is holomorphic at $p$. By Lemma~\ref{lem:keylemma}, 
	\begin{align}
	\langle d\lambda_\sigma(z), \mathcal{U} \hat{X} \rangle - \lambda_\sigma(z) \langle d\lambda_\sigma(z), \hat{X} \rangle = (g(z) + C) z \lambda_\sigma^{\prime}(z),
	\end{align}
	where $C$ is a constant and 
	\begin{align}
	g(z) = -\left\langle d\lambda_\sigma(z), \left( \frac{w(z)}{z w^{\prime}(z) } X(z), 0, -X_u  \right) \right\rangle 
	\end{align}
	is holomorphic at $p$. Therefore, the first summand of~\eqref{eqn:proofpropasymp} equals
	\begin{align}
	\frac{1}{2 \Gamma\left( \frac12 \right)} \frac{1}{2 \pi \ii} \oint_p \frac{(g(z) + C) \lambda_\sigma^{\prime}(z)}{(\lambda_\sigma(z) - u_p)^{\frac12}} dz = - \frac{1}{ \Gamma\left( \frac12 \right)} \frac{1}{2 \pi \ii} \oint_p g^{\prime}(z)(\lambda_\sigma(z) - u_p)^{\frac12} dz,
	\end{align}
	which again vanishes by holomorphicity at $p$ of the integrand. 
To prove~\eqref{eqn:rprecursionk}, observe that by a computation similar to the previous one, we can write 
	\begin{multline}
	\langle r_p^{k+1}, (\mathcal{U} - u_p) \hat{X} \rangle - \langle r_p^k, (k - \mathcal{V}) \hat{X} \rangle = 
 \frac{1}{2 \Gamma\left( -k - \frac12 \right)} \times \\
 \times \frac{1}{2 \pi \ii} \oint_p \frac{\langle d\lambda_\sigma(z), \mathcal{U} \hat{X} \rangle - \lambda_\sigma(z) \langle d\lambda_\sigma(z), \hat{X} \rangle + z \lambda_\sigma^{\prime}(z) \left\langle d\lambda_\sigma(z), \left( \frac{w(z)}{z w^{\prime}(z)} X(z), 0, - X_u \right) \right\rangle}{(\lambda_\sigma(z) - u_p)^{k+\frac32}} \frac{dz}{z},
 	\end{multline}
	which vanishes by Lemma~\ref{lem:keylemma}. The proofs for~\eqref{eqn:dy0formal} and~\eqref{eqn:dy1formal} are completely analogous.
\end{proof}

\begin{remark} 
\label{dyarerepresentable}
The 1-forms~\eqref{eqn:dysigmadfn} are representable for any $\sigma \in \mathbb{C}$, i.e. $dy_\sigma(\zeta) \in T^*_{\hat{\lambda}} M^{\mathrm{rep}}$, with representative in the tangent given by	
\begin{align}
dy_\sigma = \zeta^{1/2} \Bigg( &\sigma z w'(z) e^{\zeta \lambda_\sigma (z)} - z w'(z) \left( e^{\zeta \lambda_\sigma(z)} \right)_{\geq 0},\\
&\frac{1}{2 \pi \ii} \oint_{|z| = 1} e^{\zeta \lambda_\sigma(z)} \frac{e^u}{z} \frac{dz}{z} ,  \frac{1}{2 \pi \ii} \oint_{|z| = 1} e^{\zeta \lambda_\sigma(z)} \frac{dz}{z}\Bigg). \notag
\end{align}
Notice, however, that the functionals $r^k_p$ in the asymptotic expansion are in general not representable, in particular the leading term $r^0_p$ is proportional to the non-representable functional $du_p$.
\end{remark}

\begin{remark} \label{remark:noncomplete}
	The family of solutions $\{ dy_\sigma (\zeta) \}_{\sigma \in \Sigma \cup \{ 0,1 \} }$ is not complete. For example, at the special point $\hat{\lambda}_0$, the tangent vector
	\begin{align}
	\hat{X} = \left( \left( 1 - e^{-\zeta \frac{e^u}{z} } \right)z, 0, -1 \right)
	\end{align}
	satisfies $	\langle dy_\sigma(\zeta), \hat{X} \rangle = 0$ for all $\sigma$.
\end{remark}

\begin{remark} \label{remark:monodromy}
	The monodromy of the solutions $dy_\sigma (\zeta)$ is trivial since it just originates from the $\zeta^{1/2}$ factor
\begin{equation}
dy_\sigma(\zeta e^{2 \pi \ii}) = - dy_\sigma (\zeta).
\end{equation}
\end{remark}

\section{Resurgence and Stokes phenomena} 
\label{sec:resurgenceandstokes}

In this section, we study the Stokes phenomenon at the irregular singularity $\zeta \sim \infty$ of the Dubrovin equation. 

In the finite dimensional case~\cite{Dub99}, the Dubrovin equation (written in the normalized canonical frame) has a unique formal fundamental solution of the following form
\begin{equation}
Y_{\mathrm{formal}}(\zeta) =  (\mathrm{Id} + R_1 \zeta^{-1} + R_2 \zeta^{-2} + \dots ) e^{\zeta U},
\end{equation}
where $U =  \diag(u_1, \cdots, u_n)$ with $u_i \not =u_j$ for $i\not = j$.
An admissible line $\ell$ through the origin in $\C$ is given by the choice of its positive direction $\phi$ such that it satisfies $\Re (e^{\ii \phi } (u_i -u_j)) \not= 0$ for any $i\not= j$. 
It can be shown that, given a choice of admissible line $\ell$ in $\C$, there exists a unique fundamental solution $Y_{\mathrm{right}}$ (resp. $Y_{\mathrm{left}}$) which is asymptotic to $Y_{\mathrm{formal}}$ for $\zeta \sim \infty$ on the open sector $\Pi^\epsilon_{\textrm{right}}$ (resp. $\Pi^\epsilon_{\mathrm{left}}$) of opening slightly larger than $\pi$ containing the right (resp. left) half-planes separated by $\ell$. 
The Stokes matrices $S_\pm$ relate such fundamental solutions on the intersection $\Pi^\epsilon_{\textrm{right}} \cap \Pi^\epsilon_{\textrm{left}} = \Pi^\epsilon_+ \cup \Pi^\epsilon_-$, namely
\begin{equation} \label{eq:stm}
Y_L(\zeta) =Y_R(\zeta) S_\pm, \qquad \zeta\in \Pi^\epsilon_\pm,
\end{equation}
where $\Pi^\epsilon_+$, resp. $\Pi^\epsilon_-$, is the sector containing the direction $\phi$, resp. $\phi + \pi$.

Notice that the columns of a fundamental solution give a basis of  solutions of the Dubrovin equation. In the infinite-dimensional case, we might consider the family of integral solutions $\{ dy_\sigma \}_{\sigma\in \Sigma \cup \{0,1\} }$ obtained in the previous section. 
Such family, however, is not complete and moreover has trivial monodromy, see Remarks~\ref{remark:noncomplete} and~\ref{remark:monodromy}, therefore it cannot be used to obtain the analogues of the Stokes matrices.
To find a larger family of solutions we adopt a different strategy, using resurgence theory to associate a family of ``weak'' solutions to a family of formal solutions like those studied in Section~\ref{sec:formalsolutions}. 
More precisely, we consider the family of formal solutions given by the asymptotic expansions of the integral solutions and we apply to it the Borel resummation procedure.

Resurgence theory~\cite{hardyDivergentSeries1973, ecalleFonctionsResurgentesVol1981, ramis1993series, balserDivergentPowerSeries1994} provides a method to associate analytic functions to formal series which are not convergent. The resummation procedure of a formal power series $\varphi(\zeta) = \sum_{k \geq0 } a_k \zeta^{-k}$ can be summarized, for our aims, in three steps: computation of the sum of its Borel transform $\widehat{\varphi}(\chi)$  obtained by the substitution $\zeta^{-k} \mapsto \chi^k/k!$, analytic continuation and identification of the resurgent structure of $\widehat{\varphi}(\chi)$, namely of its behaviour at singular points, and resummation to a function $s_\theta(\varphi)(\zeta)$ via  Laplace transform. 
For recent expositions of these methods we refer the reader to~\cite{Mar21, Sau14, Dor19, MSS04}.

For simplicity, we restrict to the special point $\hat{\lambda}_0$ in $M_0$. We also require $|e^u| <1$ so that there are no discrete canonical coordinates to consider.

\subsection{Weak solutions}

Recall that the cotangent space $T^*_{\hat{\lambda}}M$ at a point $\hat{\lambda} \in M_0$ is given by the algebraic dual of $T_{\hat{\lambda}}M$. Given a cotangent vector $\xi \in T^*_{\hat{\lambda}}M$, we define its coefficients as the numbers $\langle \xi, e_{\hat{m}} \rangle$ obtained by acting on the elements $e_{\hat{m}} \in \cH(S^1) \oplus \C^2$, given by 
\begin{equation} \label{eq:coeffs}
e_m =  (z^m,0,0), \quad  e_v=(0,1,0), \quad  \text{and} \quad  e_u =(0,0,1),
\end{equation}
where $\hat{m} \in \Z \cup \{v, u\}$. 
In general, an arbitrary choice of coefficients $C_{\hat{m}}$ does not define a cotangent vector $\xi$ with $C_{\hat{m}} = \langle \xi, e_{\hat{m}} \rangle$. 
However, it always defines an element in $T^*_{\hat{\lambda}_0} M^{\mathrm{weak}}$, which is the algebraic dual of 
\begin{align*}
		T_{\hat{\lambda}_0} M^{\mathrm{test}} = \{ \hat{X} = (X(z), X_v, X_u) \in T_{\hat{\lambda}_0} M | \ X(z) \in \C[z, z^{-1}] \} \cong \C[z, z^{-1}] \oplus \C^2,
\end{align*}
namely the subset of $T_{\hat{\lambda}_0} M$ consisting of Laurent polynomials in $z$, by the formula
\begin{align}
\left\langle \xi, (X(z), X_v, X_u) \right\rangle = \sum_n X_n C_{n} + X_v C_v + X_u C_u
\end{align}
for $X(z) = \sum X_n z^n \in \C[z, z^{-1}]$.

The motivation for introducing $T^*_{\hat{\lambda}_0} M^{\mathrm{weak}}$ is that we are going to obtain solutions to the Dubrovin equation by Borel resummation of the coefficients of the formal integral solutions, which will turn out not to be in $T^*_{\hat{\lambda}_0} M$.
Notice that, since the operators $\cU$ and $\cV$ at the special point preserve the subspace $T_{\hat{\lambda}_0} M^{\mathrm{test}}$, it is possible to define weak solutions to the Dubrovin equation~\eqref{eqn:generalizedDubrovin}, i.e., $\xi = \xi (\zeta) \in T^*_{\hat{\lambda}_0} M^{\mathrm{weak}}$ such that
\begin{align}
\langle \xi, \hat{X} \rangle_{\zeta} = \langle \xi, \left( \cU - \frac{1}{\zeta} \cV \right) \hat{X} \rangle, \qquad \forall \hat{X} \in T_{\hat{\lambda}_0} M^{\mathrm{test}}.
\end{align}

\subsection{Formal integral solutions} 

At the special point, we can give an explicit formula for the coefficients of the formal solutions corresponding to the asymptotic expansion of the integral solutions obtained in Section~\ref{sec:analyticsolutions}, namely (recall Proposition~\ref{prop:asymptoticsdysigma})
\begin{align} \label{eq:frml}
\langle dy_p^{\mathrm{formal}}, \hat{X} \rangle = e^{\zeta u_p } \sum_{k=0}^{\infty} \frac{1}{2 \Gamma \left( \frac12 - k \right)} \frac{1}{2 \pi \ii} \oint_p \frac{\langle d\lambda_\sigma (z), \hat{X} \rangle }{( \lambda_\sigma(z) - u_p)^{k + \frac12} } \frac{dz}{z} \ \zeta^{-k}.
\end{align}

\begin{lemma}
The coefficients of the formal integral solutions are given by
\begin{align}
\langle dy_p^{\textrm{formal}}, e_m \rangle 
&= e^{\zeta u_p} \frac{1}{2} \sigma(p) p^m  \sqrt{\frac{p}{e^u}}   \varphi_p^m(\zeta) \label{eqn:dyformal1}, \qquad &m \geq 1 \\
\langle dy_p^{\textrm{formal}}, e_m \rangle 
&= e^{\zeta u_p}  \frac{1}{2} (\sigma(p) - 1) p^m  \sqrt{\frac{p}{e^u}}  \varphi_p^m(\zeta), \label{eqn:dyformal2} \qquad &m \leq 0 \\
\langle dy_p^{\textrm{formal}}, e_v \rangle 
&= e^{\zeta u_p}  \frac{1}{2} \sqrt{\frac{p}{e^u}}  \varphi_p^0(\zeta), \label{eqn:dyformal3} \\
\langle dy_p^{\textrm{formal}}, e_u \rangle 
&=  e^{\zeta u_p} \frac{1}{2}  \sqrt{\frac{e^u}{p}}  \varphi_p^{-1}(\zeta), \label{eqn:dyformal4}
\end{align}
where 
\begin{align} 
\varphi_p^m(\zeta) = \sum_{k=0}^\infty \frac{1}{\Gamma\left(\frac12-k\right)} \binom{m+k-1/2}{2k} \left( \frac{p}{\zeta e^u} \right)^k .\label{eqn:nowhereconvfps}
\end{align}
\end{lemma}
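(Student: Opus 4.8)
The plan is to compute the coefficients $\langle dy_p^{\mathrm{formal}}, e_{\hat m}\rangle$ directly from formula~\eqref{eq:frml}, specializing everything to the point $\hat\lambda_0$. First I would record the relevant data at $\hat\lambda_0$. Since $w(z) = z$ there, one has $\lambda_\sigma(z) = (\sigma - 1) z + v + e^u z^{-1}$; the critical point equation $\lambda'_{\sigma}(p) = 0$ forces $\sigma = \sigma(p) = 1 + e^u p^{-2}$, and then $u_p = \lambda_{\sigma(p)}(p) = v + 2 e^u p^{-1}$, in agreement with Section~\ref{sec:canonical}. The key algebraic identity is the factorization
\begin{align}
\lambda_{\sigma(p)}(z) - u_p = \frac{e^u}{p^2} z + \frac{e^u}{z} - \frac{2 e^u}{p} = \frac{e^u}{p^2 z}(z - p)^2 ,
\end{align}
which exhibits the double zero at $z = p$ and lets us write $(\lambda_{\sigma(p)}(z) - u_p)^{k + 1/2} = \bigl(\tfrac{e^u}{p^2 z}\bigr)^{k + 1/2}(z - p)^{2k + 1}$ with the prefactor holomorphic and non-vanishing near $p$ (the small loop $\oint_p$ does not encircle the origin, since $p \in S^1$).

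Next I would evaluate the numerators $\langle d\lambda_\sigma(z), e_{\hat m}\rangle$ on the generators~\eqref{eq:coeffs}. Using the triple form $\langle d\lambda_\sigma(z), \hat X\rangle = (\sigma - 1) X(z) + X_{\geq 1}(z) + X_v + \tfrac{e^u}{z} X_u$ recorded in the proof of Lemma~\ref{lem:keylemma}, one gets $\langle d\lambda_\sigma(z), e_m\rangle = \sigma z^m$ for $m \geq 1$, $\langle d\lambda_\sigma(z), e_m\rangle = (\sigma - 1) z^m$ for $m \leq 0$, $\langle d\lambda_\sigma(z), e_v\rangle = 1$ and $\langle d\lambda_\sigma(z), e_u\rangle = e^u z^{-1}$. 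Inserting these together with the factorization above, each $k$-th coefficient collapses, up to an elementary prefactor in $p^{\pm 1/2}$, $e^{\mp u/2}$ and one of $\sigma(p)$, $\sigma(p) - 1$, $1$, $e^u$, to the single contour integral
\begin{align}
\frac{1}{2\pi\ii}\oint_p \frac{z^{m + k - 1/2}}{(z - p)^{2k + 1}}\, dz = \frac{1}{(2k)!}\frac{d^{2k}}{dz^{2k}}\bigg|_{z = p} z^{m + k - 1/2} = \binom{m + k - 1/2}{2k}\, p^{m - k - 1/2},
\end{align}
evaluated by the Cauchy integral formula for derivatives, the last equality being the elementary identity $\tfrac{1}{(2k)!}\tfrac{\Gamma(m + k + 1/2)}{\Gamma(m - k + 1/2)} = \binom{m + k - 1/2}{2k}$.

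Finally I would collect the powers of $p$ and $e^u$: one checks $p^{2k + 1}\, p^{m - k - 1/2}\, e^{-u(k + 1/2)} = p^m \sqrt{p/e^u}\,(p/e^u)^k$ for the $e_m$ and $e_v$ cases, whereas in the $e_u$ case the extra factor $e^u$ and the value $m = -1$ turn $\sqrt{p/e^u}$ into $\sqrt{e^u/p}$; together with the remaining factors $\tfrac{1}{2\Gamma(1/2 - k)}\zeta^{-k}$ this reassembles precisely the series $\varphi_p^m(\zeta)$ of~\eqref{eqn:nowhereconvfps} with the prefactors claimed in~\eqref{eqn:dyformal1}--\eqref{eqn:dyformal4}. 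The whole argument is elementary; the point requiring the most care is to fix once and for all a branch of the half-integer power $z^{m + k - 1/2}$ near $p$ (equivalently of $(e^u/p^2 z)^{1/2}$), compatible with the branch implicit in~\eqref{eq:frml} inherited from the saddle-point analysis of Section~\ref{sec:analyticsolutions}, so that the square roots $\sqrt{p/e^u}$ and $\sqrt{e^u/p}$ appear with the correct sign and not merely up to sign; everything else is substitution and the Gamma-function identity above.
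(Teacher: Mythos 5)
Your proposal is correct and follows essentially the same route as the paper: the same factorization $\lambda_{\sigma(p)}(z)-u_p=\tfrac{e^u}{p^2 z}(z-p)^2$ at $\hat\lambda_0$, the same evaluation of $\langle d\lambda_\sigma(z), e_{\hat m}\rangle$, and the same reduction to the residue $\binom{m+k-1/2}{2k}p^{m-k-1/2}$ (the paper computes it by shifting the contour to $0$ and expanding binomially, you by Cauchy's derivative formula, which is the same calculation). Your closing remark on fixing the branch of $z^{m+k-1/2}$ is a reasonable extra precaution that the paper handles implicitly.
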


\begin{proof}
The coefficients in~\eqref{eq:frml} are proportional to the integrals
\begin{align}
I_{k,m} = \frac{1}{2 \pi \ii} \oint_p \frac{z^m}{(\lambda_{\sigma(p)} (z) - \lambda_{\sigma(p)} (p))^{k+\frac12} } \frac{dz}{z}.
\end{align}
Since at the special point $\sigma(p) = 1 + \frac{e^u}{p^2}$, we have that
\begin{align}
\lambda_{\sigma(p)} (z) - \lambda_{\sigma(p)} (p) = \frac{e^u}{z} \left( \frac{z}{p}  - 1\right)^2,
\end{align}
therefore 
\begin{align}
I_{k,m} &= \frac{p^{2k+1}}{e^{(k+1/2)u}} \ \frac{1}{2 \pi \ii} \oint_p \frac{z^{m+k-1/2}}{(z-p)^{2k+1} } dz = \frac{p^{2k+1}}{e^{(k+1/2)u}} \ \frac{1}{2 \pi \ii} \oint_0 \frac{(z+p)^{m+k-1/2}}{z^{2k+1} } dz.
\end{align}
Expanding the numerator and computing the residue at 0 gives
\begin{align}
I_{k,m} = \binom{m+k-1/2}{2k} \left( \frac{p}{e^u} \right)^{k+1/2} p^m,
\end{align}
from which the lemma follows immediately.
\end{proof}

\begin{remark}
Notice that~\eqref{eqn:nowhereconvfps} is a nowhere convergent formal power series in $\zeta^{-1}$. 
\end{remark}

\subsection{Borel transform and resurgent structure}

Recall that a formal power series $\varphi(\zeta) = \sum_{k \geq0 } a_k \zeta^{-k}$ at $\zeta \sim \infty$ is called Gevrey-$1$ if $|a_k| \leq C^k k!$ for all $k > 0$ for some positive constant $C$. 
In such case, its Borel transform, namely the series 
\begin{equation}
\widehat{\varphi}(\chi) = \sum_{k\geq0} \frac{a_k}{k!} \chi^k ,
\end{equation}
is convergent in a neighbourhood of $\chi \sim 0$. 

The formal integral solutions $\varphi_p^m(\zeta)$ are are clearly Gevrey-$1$ and we can explicitly identify their Borel transform: 

\begin{proposition}
	The Borel transform of $\varphi^m_p(\zeta)$ converges for $|\chi| \leq \left| \frac{4e^u}{p}\right|$ and is given by
	\begin{align}
	\widehat{\varphi}^m_p(\chi) = \frac{1}{\sqrt{\pi}} \ {}_2F_1 \left( \frac12 - m, \frac12 + m; 1; \frac{p \chi}{4 e^u} \right), \label{eqn:Boreltransform}
	\end{align}
	where ${}_2F_1(a,b;c;z)$ denotes the Gauss hypergeometric function.
\end{proposition}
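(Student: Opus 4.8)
The plan is to compute the Borel transform of $\varphi_p^m(\zeta)$ term by term from the explicit formula~\eqref{eqn:nowhereconvfps} and to recognize the resulting series as the stated hypergeometric function via an elementary Gamma-function identity. Since the Borel transform is defined by the substitution $\zeta^{-k}\mapsto\chi^k/k!$, formula~\eqref{eqn:nowhereconvfps} gives
\begin{align}
\widehat{\varphi}^m_p(\chi)=\sum_{k\geq0}\frac{1}{k!}\,\frac{1}{\Gamma\!\left(\tfrac12-k\right)}\binom{m+k-1/2}{2k}\left(\frac{p}{e^u}\right)^{k}\chi^{k}.
\end{align}
Hence it suffices to verify, for every $k\geq 0$, the coefficient identity
\begin{align}
\frac{1}{k!\,\Gamma\!\left(\tfrac12-k\right)}\binom{m+k-1/2}{2k}=\frac{1}{\sqrt\pi}\,\frac{\left(\tfrac12-m\right)_k\left(\tfrac12+m\right)_k}{(k!)^2\,4^{k}},
\end{align}
since the right-hand side, multiplied by $(p/e^u)^k$, is precisely the coefficient of $\chi^k$ in $\frac{1}{\sqrt\pi}\,{}_2F_1\!\left(\tfrac12-m,\tfrac12+m;1;\tfrac{p\chi}{4e^u}\right)$ (recall $(1)_k=k!$, so the $4^{-k}$ is absorbed into the argument).

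To establish this identity I would rewrite both awkward factors as ratios of Gamma functions. Iterating $\Gamma(x)=\Gamma(x+1)/x$ from $\Gamma(\tfrac12)=\sqrt\pi$ (equivalently, using the Legendre duplication formula) yields the closed form $\Gamma\!\left(\tfrac12-k\right)=\frac{(-4)^k\,k!\,\sqrt\pi}{(2k)!}$, so $\frac{1}{\Gamma(1/2-k)}=\frac{(2k)!}{(-4)^k k!\sqrt\pi}$. On the other hand $\binom{m+k-1/2}{2k}=\frac{\Gamma\!\left(m+k+\tfrac12\right)}{(2k)!\,\Gamma\!\left(m-k+\tfrac12\right)}$. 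The two $(2k)!$ cancel, leaving
\begin{align}
\frac{1}{k!\,\Gamma\!\left(\tfrac12-k\right)}\binom{m+k-1/2}{2k}=\frac{1}{\sqrt\pi\,(-4)^{k}(k!)^2}\,\frac{\Gamma\!\left(m+k+\tfrac12\right)}{\Gamma\!\left(m-k+\tfrac12\right)}.
\end{align}
Finally $\frac{\Gamma(m+k+1/2)}{\Gamma(m+1/2)}=\left(\tfrac12+m\right)_k$ and $\frac{\Gamma(m+1/2)}{\Gamma(m-k+1/2)}=(-1)^k\left(\tfrac12-m\right)_k$, whence $\frac{\Gamma(m+k+1/2)}{\Gamma(m-k+1/2)}=(-1)^k\left(\tfrac12-m\right)_k\left(\tfrac12+m\right)_k$, and the sign $(-1)^k$ turns $(-4)^k$ into $4^k$, giving the claimed identity.

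For the convergence statement: since $m\in\Z$, the parameters $\tfrac12\pm m$ are half-integers and in particular never non-positive integers, so the hypergeometric series does not terminate; its radius of convergence in $z=p\chi/(4e^u)$ is therefore $1$, i.e. the series for $\widehat{\varphi}^m_p$ converges for $|\chi|<\left|\tfrac{4e^u}{p}\right|$. (Since $c-a-b=1-\left(\tfrac12-m\right)-\left(\tfrac12+m\right)=0$, a Dirichlet-test argument shows convergence extends to the whole boundary circle except at $\chi=\tfrac{4e^u}{p}$, which is the logarithmic branch point of the analytic continuation relevant for the resurgent analysis in the following subsections.) The only step requiring genuine care is keeping track of the reflection/duplication formula for $\Gamma\!\left(\tfrac12-k\right)$ and the signs it produces; everything else is routine bookkeeping with Pochhammer symbols.
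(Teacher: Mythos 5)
Your proof is correct and follows essentially the same route as the paper: apply the Borel substitution termwise and match the resulting coefficients with those of $\frac{1}{\sqrt\pi}\,{}_2F_1\!\left(\tfrac12-m,\tfrac12+m;1;\tfrac{p\chi}{4e^u}\right)$, your closed form $\Gamma\!\left(\tfrac12-k\right)=\frac{(-4)^k k!\,\sqrt\pi}{(2k)!}$ being just the duplication/reflection repackaging of the identity $4^k\,k!\,(1/2)^{(k)}=(2k)!$ that the paper proves by induction in its auxiliary Lemma~\ref{lemma:aux-hg}. Your closing remark on the boundary is in fact slightly sharper than the stated claim: since $c-a-b=0$, the series diverges precisely at the branch point $\chi=4e^u/p$, so convergence on the closed disc should be understood away from that point.
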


\begin{proof} Applying $\zeta^{-k} \rightarrow \chi^k/k!$ to~\eqref{eqn:nowhereconvfps} yields
\begin{align}
\widehat{\varphi}^m_p (\chi) &= \sum_{k=0}^\infty \frac{1}{k! \Gamma\left(\frac12-k\right)} \binom{m+k-1/2}{2k} \left( \frac{p\chi}{e^u} \right)^k \label{eqn:initialBoreltransform} \\
&= \frac{1}{\sqrt{\pi}} \sum_{k=0}^\infty \binom{-1/2}{k} \binom{m+k-1/2}{2k} \left( \frac{p\chi}{e^u} \right)^k .\notag
\end{align}
The desired result follows immediately from  Lemma~\ref{lemma:aux-hg}, see Appendix~\ref{sec:hypergeometric}.
\end{proof}

Let us now consider the so-called resurgent structure of the Borel transform. 
The Borel transform $\widehat{\varphi}^m_p (\chi)$ has a singularity at $\chi_p = 4e^u/p$ corresponding to the logarithmic branch point at $z=1$ of the hypergeometric function ${}_2F_1(a,b;a+b;z)$, see~\eqref{eqn:hypergeometriclog}. 
Near the singularity it takes the form
\begin{align} \label{eqn:alienderivative}
\widehat{\varphi}^m_p (\chi_p + \xi) &=
 \frac{   (-1)^{m+1}}{\pi } 
 \log\left(-\frac{p \xi}{4e^u}\right)  \frac{1}{\sqrt{\pi}} 
 \ {}_2F_1\left( \frac12 - m, \frac12 + m, 1; - \frac{p \xi}{4 e^u} \right) + 
f_{\mathrm{reg}}(\xi)
\end{align}
where $f_{\mathrm{reg}}(\xi)$ is holomorphic near $\xi \sim 0$. Here we have used the identity
\begin{align}
\Gamma \left( \frac12 - m \right) \Gamma \left( \frac12 + m \right) = \frac{\pi}{\sin \left( \frac12 + m \right) \pi} = (-1)^m \pi.
\end{align}

It is important to notice that the function multiplying the logarithm 
\begin{align}
\widehat{\varphi}^m_{-p}(\xi) = \frac{1}{\sqrt{\pi}} \ {}_2F_1\left( \frac12 - m, \frac12 + m, 1; - \frac{p \xi}{4 e^u} \right),
\end{align}
is actually the Borel transform of  the formal solution $\varphi^m_{-p} (\zeta)$ above for a different sign of $p$.

\subsection{Borel resummation} 

The Borel resummation $s_\theta(\varphi)(\zeta)$ of the formal series $\varphi(\zeta)$ is defined as the Laplace transform of its Borel transform
\begin{equation}
s_\theta(\varphi)(\zeta) = \zeta \int_{C^\theta} \widehat{\varphi}(\chi) e^{-\zeta \chi} d\chi,
\end{equation}
where the integral is along a ray $C^\theta = e^{\ii \theta} \mathbb{R}_{+}$ that does not contain any singularity of $\widehat{\varphi}(\chi)$. 
The function $s_\theta(\varphi)(\zeta)$ is holomorphic on the sector in $\C^*$ given by those $\zeta$ such that $| e^{-\zeta \chi}| \to 0$ for $\chi\to \infty$ along $C^\theta$ and it is asymptotic to the formal series $\varphi(\zeta)$ for $\zeta \sim \infty$. The above integral representation of the (possibly multivalued) analytic continuation of $s_\theta(\varphi)(\zeta)$ also holds outside the sector, provided the path of integration is deformed accordingly.

Denote $C^{\theta_{\mathrm{St}}}$ the ray passing through the logarithmic singularity of $\widehat{\varphi}_p^m (\chi)$ at  $\chi_p = 4e^u/p$, corresponding to $\theta_{\textrm{St}} = \arg e^u - \arg p$.
For any ray $C^\theta = e^{\ii \theta} \mathbb{R}_{+}$ with $\theta \not= \theta_{\textrm{St}}$, the Borel resummation 
\begin{align}
s_\theta (\varphi_p^m) (\zeta) = \zeta \int_{C^{\theta}} \widehat{\varphi}_p^m (\chi) e^{-\zeta \chi} d\chi
\end{align}
defines an analytic function in the sector where the real part of the exponential is negative, i.e., the half-plane
\begin{align} \label{eq:pt}
\Pi_\theta = \left\{ \zeta \in \mathbb{C} | \ -\theta - \frac{\pi}{2} < \arg \zeta < -\theta + \frac{\pi}{2} \right\}.
\end{align}
Moreover,
\begin{align}
s_\theta (\varphi^m_p)(\zeta) \sim \varphi^m_p(\zeta),\label{eqn:borelintegralasymp}
\end{align}
for $|\zeta| \rightarrow \infty$ in the sector $\Pi_\theta$.

Denote $s(\varphi_p^m)(\zeta)$ the multivalued analytic continuation of $s_\theta(\varphi_p^m)(\zeta)$ on $\C^*$. Notice that one obtains the same function by analytically continuing $s_{\theta'}(\varphi_p^m)(\zeta)$ for $\theta' \not=\theta$ in the appropriate direction.

Observe that $s(\varphi_p^m)(\zeta)$ is asymptotic to $\varphi^m_p(\zeta)$ for  $|\zeta| \rightarrow \infty$ in any sector where it is given by an integral representation as above. 
Denoting $\theta_0 = \theta_{\mathrm{St}} +\pi $, this happens whenever $\theta \not= \theta_{\mathrm{St}}$, namely when $\theta \in ( \theta_0 - \pi, \theta_0 + \pi )$. 
Therefore 
\begin{equation} \label{eq:asp}
s (\varphi^m_p)(\zeta) \sim \varphi^m_p(\zeta)
\end{equation}
for 
\begin{equation}
\zeta \in \bigcup_{\theta \in ( \theta_0 - \pi, \theta_0 + \pi )} \Pi_\theta,
\end{equation}
i.e. when $\zeta$ belongs to the sector of opening $3\pi$ given by
\begin{equation}
-\theta_0 - \frac{3\pi}{2} < \arg\zeta< -\theta_0 + \frac{3\pi}{2}.  
\end{equation}

The monodromy of the multivalued function $s(\varphi_p^m)$ is determined by the resurgent structure of the Borel transform. Indeed, for $\zeta \in \Pi_{\theta_\mathrm{St}}$ we have
\begin{equation}
s(\varphi_p^m) (e^{2\pi \ii}\zeta) -s(\varphi_p^m) (\zeta) 
= \zeta \int_{\mathcal{H}} \widehat{\varphi}_p^m (\chi) e^{-\zeta \chi} d\chi ,
\end{equation}
where $\cH$ is the clockwise Hankel contour around the singular point $\chi_p = 4e^u/p$ coming from infinity along the direction $\theta_{\mathrm{St}}$. By substituting~\eqref{eqn:alienderivative} and performing the change of variable of integration $\chi = \chi_p + \xi$, we find it equals
\begin{equation}
-(-1)^m  e^{-\zeta \frac{4 e^u}{p}} \zeta \frac{1}{\pi} \int_{\mathcal{H}_0} \log \left(- \frac{p\xi}{4 e^u} \right) \widehat{\varphi}_{-p}^m (\xi) e^{-\zeta \xi} d\xi ,
\end{equation}
where $\mathcal{H}_0$ is the Hankel contour $\cH$ translated to 0. This in turn is equal to
\begin{equation}
-2 \ii (-1)^m e^{-\zeta \frac{4 e^u}{p}} \zeta  \int_{C^{\theta_{\textrm{St}}}} \widehat{\varphi}_{-p}^m (\xi) e^{-\zeta \xi} d\xi.
\end{equation}
Therefore, the monodromy in $\zeta$ of $s(\varphi_p^m)$ is given by 
\begin{equation}  \label{eqn:LateralBorel}
s(\varphi_p^m) (e^{2\pi \ii}\zeta) -s(\varphi_p^m) (\zeta) 
= 2\ii (-1)^{m+1} e^{-\zeta \frac{4 e^u}{p}} s_{\theta_{\textrm{St}}} (\varphi_{-p}^m)(\zeta). 
\end{equation}

Let us now explicitly compute the function $s(\varphi_p^m)(\zeta)$. Letting $\chi = t e^{i\theta}$, $t\in \R_+$ and using the Laplace transform formula~\eqref{eqn:hyperLaplace}, we have
\begin{equation}
\begin{split}
s_\theta (\varphi^m_p) (\zeta) 
&= \frac{1}{\sqrt{\pi}} e^{i \theta} \zeta \int_{0}^{\infty} 
{}_2F_1 \left( \frac12 - m, \frac12 + m, 1; \frac{e^{i \theta} p}{4e^u} t \right) e^{- e^{i \theta} \zeta t } dt \\
&= \frac{2}{\ii \pi} \sqrt{\frac{e^u}{p}} \ e^{-\zeta \frac{2 e^u}{p}} \zeta^{\frac12} K_m \left( -\zeta \frac{2 e^u}{p} \right),
\end{split}
\end{equation}
where $K_m(z)$ is the modified Bessel function of the second kind, see Appendix~\ref{sec:bessel}. 
Clearly this identity extends to the analytic continuations of the functions on the plane cut at $e^{-i \theta_{\textrm{St}}} \mathbb{R}_{+}$, therefore we have
\begin{align}
s (\varphi^m_p) (\zeta) = \frac{2}{\ii \pi} \sqrt{\frac{e^u}{p}} \ e^{-\zeta \frac{2 e^u}{p}} \zeta^{\frac12} K_m \left( -\zeta \frac{2 e^u}{p} \right). \label{eqn:actualsvarphi}
\end{align}

\begin{remark}
Equation~\eqref{eqn:actualsvarphi} is actually an identity between functions defined on the universal covering of $\C^*$ and formula~\eqref{eqn:asymptoticKn} for the asymptotics of $K_n(z)$ on a sector of opening $3 \pi$ induces the asymptotic formula~\eqref{eq:asp}.
\end{remark}

Let us now define the resummed weak functionals $ds_p(\zeta) \in T^*_{\hat{\lambda}_0} M^{\mathrm{weak}}$ for $p \in S^1$ by replacing in~\eqref{eqn:dyformal1}-\eqref{eqn:dyformal4} the formal series $\varphi_p^m$ with their Borel resummations $s(\varphi_p^m)$:
	\begin{align}
	\langle ds_p(\zeta), e_m \rangle &= \frac{1}{\ii \pi} \left( \frac{e^u}{p^2} + 1 \right) p^m e^{\zeta v} \zeta^{1/2} K_m \left( - \zeta \frac{2 e^u}{p} \right), \qquad &m \geq 1 \label{dspzm1} \\ 
	\langle ds_p(\zeta), e_m \rangle &= \frac{1}{\ii \pi}  \frac{e^u}{p^2} p^m e^{\zeta v} \zeta^{1/2} K_m \left( - \zeta \frac{2 e^u}{p} \right), \qquad &m \leq 0 \label{dspzm0} \\ 
	\langle ds_p(\zeta), e_v \rangle &= \frac{1}{\ii \pi} e^{\zeta v} \zeta^{1/2} K_0 \left( - \zeta \frac{2 e^u}{p} \right), \label{dsp010} \\ 
	\langle ds_p(\zeta), e_u \rangle &= \frac{1}{\ii \pi}  \frac{e^u}{p} e^{\zeta v} \zeta^{1/2} K_{1} \left( - \zeta \frac{2 e^u}{p} \right). \label{dsp001} 
	\end{align}
One can easily check that the weak functionals $ds_p(\zeta)$ solve the Dubrovin equation by a direct computation using the formulas for the derivatives of $K_m$ given in Appendix~\ref{sec:bessel}.

By replacing~\eqref{eqn:monodromyKnKn} (or, equivalently,~\eqref{eqn:LateralBorel}) in~\eqref{dspzm1}--\eqref{dsp001}, one can easily compute the monodromy of the weak functionals $ds_p$.
We summarize the results proved so far in the following proposition:
\begin{proposition} \label{prop:dsp}
	The weak functionals $ds_p (\zeta)$ solve the Dubrovin equation~\eqref{eqn:generalizedDubrovin}, and satisfy
	\begin{align}
	ds_{p} (\zeta) \sim dy_p^{\textrm{formal}} (\zeta), \qquad \text{for} \qquad  |\arg \zeta + \theta_0| < \frac{3 \pi}{2},
	\end{align}
	where $\theta_0 = \pi + \arg e^u -\arg p$. Their monodromy is given by
	\begin{align}
ds_p(\zeta e^{2 \pi \ii}) &= ds_p(\zeta) - 2ds_{-p}(\zeta), \label{eqn:monodromydsp} \\
ds_{-p}(\zeta e^{-2 \pi \ii}) &= ds_{-p} (\zeta ) - 2ds_p(\zeta), \label{eqn:monodromyds-p}
\end{align}
where $ds_{-p} = ds_{e^{-i \pi} p}$.
\end{proposition}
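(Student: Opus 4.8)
The plan is to assemble the three asserted facts from the material developed earlier in the section, treating each separately. First, that $ds_p(\zeta)$ solves the Dubrovin equation: the cleanest route is the direct verification announced in the text just before the proposition statement. Since $\cU$ and $\cV$ at the special point preserve $T_{\hat\lambda_0}M^{\mathrm{test}}$, it suffices to check $\langle ds_p,\hat X\rangle_\zeta = \langle ds_p,(\cU-\tfrac1\zeta\cV)\hat X\rangle$ on the basis vectors $e_m$, $e_v$, $e_u$. Using the explicit forms~\eqref{eqn:Uatsimplepoint},~\eqref{eqn:Vatsimplepoint} of $\cU$ and $\cV$ at $\hat\lambda_0$ together with the formulas~\eqref{dspzm1}--\eqref{dsp001}, the equation reduces on each basis vector to a contiguity/derivative identity for $K_m$; these follow from the standard recurrences $K_{m-1}(z)-K_{m+1}(z)=-\tfrac{2m}{z}K_m(z)$ and $K_m'(z)=-\tfrac12(K_{m-1}(z)+K_{m+1}(z))$ collected in Appendix~\ref{sec:bessel}. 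One must be slightly careful with the $e_v$ and $e_u$ rows, where the coupling to $X_u$ and $X_v$ in~\eqref{eqn:Uatsimplepoint} mixes $K_0$ and $K_1$, but this is exactly matched by the prefactors $e^u/p$ in~\eqref{dsp001}. An alternative, slightly cleaner, argument: each $\varphi_p^m$ is a formal solution by the previous section, and Borel resummation along a non-singular ray preserves solutions of linear ODE systems with the relevant structure, so $s_\theta(\varphi_p^m)$ — hence its analytic continuation $s(\varphi_p^m)$ — yields a weak solution; I would cite this if a reference to the resurgence literature already cited (\cite{balserDivergentPowerSeries1994, Sau14}) is deemed sufficient.

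Second, the asymptotic statement $ds_p(\zeta)\sim dy_p^{\mathrm{formal}}(\zeta)$ for $|\arg\zeta+\theta_0|<\tfrac{3\pi}{2}$ is immediate from~\eqref{eq:asp}: we have $s(\varphi_p^m)(\zeta)\sim\varphi_p^m(\zeta)$ on the stated sector of opening $3\pi$, and the relations~\eqref{eqn:dyformal1}--\eqref{eqn:dyformal4} express $\langle dy_p^{\mathrm{formal}},e_{\hat m}\rangle$ as $e^{\zeta u_p}$ times an elementary prefactor times $\varphi_p^m(\zeta)$; comparing with~\eqref{dspzm1}--\eqref{dsp001}, where the same prefactors appear (after using $u_p=v+2e^u/p$, so $e^{\zeta u_p}=e^{\zeta v}e^{2\zeta e^u/p}$, and~\eqref{eqn:actualsvarphi} absorbs the $e^{-2\zeta e^u/p}$), gives the claim coefficientwise, which is all that is needed for a weak functional. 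The value $\theta_0=\pi+\arg e^u-\arg p$ is exactly $\theta_{\mathrm{St}}+\pi$ with $\theta_{\mathrm{St}}=\arg e^u-\arg p$, as recorded before~\eqref{eq:asp}.

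Third, the monodromy formulas~\eqref{eqn:monodromydsp}--\eqref{eqn:monodromyds-p}. Here I would simply substitute the Bessel monodromy identity~\eqref{eqn:monodromyKnKn} into the definitions~\eqref{dspzm1}--\eqref{dsp001}. Going around $\zeta\mapsto e^{2\pi\ii}\zeta$: the factor $\zeta^{1/2}$ contributes $-1$, the argument $-\zeta\,2e^u/p$ of $K_m$ also picks up $e^{2\pi\ii}$, and $e^{\zeta v}$ is single-valued; then~\eqref{eqn:monodromyKnKn} rewrites $\zeta^{1/2}K_m(e^{2\pi\ii}(-\zeta 2e^u/p))$ as a combination of $K_m$ and $K_m$ evaluated at the argument with $p\mapsto e^{-i\pi}p=-p$ — precisely the observation, made after~\eqref{eqn:alienderivative}, that $\widehat\varphi^m_{-p}$ is the function multiplying the logarithm. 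Matching the resulting combination against~\eqref{dspzm1}--\eqref{dsp001} with $p$ replaced by $-p$ (note $(-p)^m=(-1)^mp^m$, $e^u/(-p)^2=e^u/p^2$, and the shift between the $m\ge1$ and $m\le0$ cases is harmless because $-p\in S^1$ too) produces the coefficient $-2$ in front of $ds_{-p}$; the same computation with $e^{-2\pi\ii}$ and the roles of $p,-p$ exchanged gives~\eqref{eqn:monodromyds-p}. As a consistency check one verifies~\eqref{eqn:LateralBorel} is the coefficientwise shadow of this.

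I expect the only genuine obstacle to be bookkeeping: making sure the branch of $\zeta^{1/2}$, the branch of $K_m$, and the identification $e^{-i\pi}p$ versus $e^{i\pi}p$ are handled consistently across~\eqref{eqn:monodromydsp} and~\eqref{eqn:monodromyds-p} so that the two monodromy relations are mutually compatible (composing them should return the net monodromy dictated by~\eqref{eqn:actualsvarphi} on the universal cover). Everything else is a routine application of Appendix~\ref{sec:bessel} and the results of \S\ref{sec:analyticsolutions}.
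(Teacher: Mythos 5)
Your proposal is correct and follows essentially the same route as the paper: the paper likewise verifies the Dubrovin equation by direct computation with the $K_m$ derivative formulas of Appendix~\ref{sec:bessel}, reads off the asymptotics on the sector of opening $3\pi$ from the Borel resummation statement~\eqref{eq:asp}, and obtains the monodromy by substituting~\eqref{eqn:monodromyKnKn} (equivalently~\eqref{eqn:LateralBorel}) into~\eqref{dspzm1}--\eqref{dsp001}. The bookkeeping you flag (the $-1$ from $\zeta^{1/2}$ combining with the $(-1)^m$ from $(-p)^m$ and the $m=2$ case of~\eqref{eqn:monodromyKnKn}) indeed works out and is the only delicate point.
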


\begin{remark}
The solution above is actually multivalued in the parameter $p$. 
We will see that any choice of range $[ \phi_0, \phi_0 + 2\pi )$ for $\arg p$ gives a complete family of solutions. 
Notice that the family of solutions with $\arg e^u - \theta \leq \arg p \leq \arg e^u - \theta + 2\pi$  will have the formal asymptotics for $\zeta$ in the open half-plane $\Pi_\theta$, see~\eqref{eq:pt}.
\end{remark}

\begin{remark}
The asymptotic expansion~\eqref{eqn:Kmdoesnotconverge} of $K_m(z)$ for $m \to \pm \infty$ implies that 
\begin{equation}
|\langle ds_p(\zeta), e_m \rangle| \sim \beta_\pm |m|^{-1/2} (\alpha_\pm |m|)^{|m|} 
\end{equation}
for positive constants $\alpha_\pm$, $\beta_\pm$.
Therefore the weak functionals $ds_p(\zeta)$ defined by the coefficients above do not extend to cotangent vectors in $T^*_{\hat{\lambda}_0} M$.
\end{remark}

While the weak functional $ds_p(\zeta)$ do not define elements in the dual to $T_{\hat{\lambda}_0} M$, i.e., they are not cotangent vectors, the difference $ds_p - ds_{-p}$ is not only an element of $T^*_{\hat{\lambda}_0} M$, but is actually representable. More precisely,

\begin{proposition} 
For $\sigma = \sigma(p)$, we have that
\begin{align}
dy_\sigma(\zeta) = ds_p (\zeta) - ds_{-p} (\zeta).
\end{align}
\end{proposition}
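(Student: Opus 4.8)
The plan is to verify the identity on the test basis $\{e_{\hat m}\}_{\hat m\in\Z\cup\{v,u\}}$ of~\eqref{eq:coeffs}. Both $ds_{p}(\zeta)$ and $ds_{-p}(\zeta)$ are weak functionals determined by their coefficients $\langle\,\cdot\,,e_{\hat m}\rangle$ by construction, and $dy_\sigma(\zeta)$ is representable (Remark~\ref{dyarerepresentable}), hence also determined by its coefficients; so it suffices to match $\langle dy_\sigma(\zeta),e_{\hat m}\rangle$ with $\langle ds_p(\zeta)-ds_{-p}(\zeta),e_{\hat m}\rangle$ for every $\hat m$. (As a byproduct this shows $ds_p-ds_{-p}$ is representable.)

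First I would make $dy_\sigma$ explicit at $\hat\lambda_0$. With $\sigma=\sigma(p)=1+e^u/p^2$ one has, as in the proof of the preceding lemma, $\la_{\sigma(p)}(z)=v+e^u\bigl(\tfrac1z+\tfrac{z}{p^2}\bigr)$, so~\eqref{eqn:dysigmadfn} together with $\langle d\la_\sigma(z),(X(z),X_v,X_u)\rangle=(\sigma-1)X(z)+X_{\geq1}(z)+X_v+\tfrac{e^u}{z}X_u$ reduces each pairing $\langle dy_\sigma(\zeta),e_{\hat m}\rangle$ to a contour integral of the form $\tfrac{1}{2\pi\ii}\oint_{|z|=1}e^{\zeta e^u(z/p^2+1/z)}z^{k}\,\tfrac{dz}{z}$, multiplied by $e^{\zeta v}\zeta^{1/2}$ and an elementary prefactor ($\sigma(p)$ for the $z^m$-entries with $m\geq1$, $\sigma(p)-1=e^u/p^2$ for $m\leq0$, $1$ for $e_v$, $e^u$ for $e_u$). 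Substituting $z=pw$, which preserves $|z|=1$ since $|p|=1$, and using the generating function $e^{(x/2)(w+1/w)}=\sum_{n\in\Z}I_n(x)w^n$ with $x=2\zeta e^u/p$, this integral equals $p^{k}I_{k}(2\zeta e^u/p)$ (with $I_{-k}=I_k$ for integer $k$). This yields closed forms for all $\langle dy_\sigma(\zeta),e_{\hat m}\rangle$ in terms of $e^{\zeta v}\zeta^{1/2}$, powers of $p$, and modified Bessel functions $I_m$.

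Next I would compute $\langle ds_p(\zeta)-ds_{-p}(\zeta),e_{\hat m}\rangle$ directly from~\eqref{dspzm1}--\eqref{dsp001}, using $ds_{-p}=ds_{e^{-\ii\pi}p}$: the substitution $p\mapsto e^{-\ii\pi}p$ multiplies $p^m$ by $(-1)^m$, leaves $e^u/p^2$ unchanged, and turns the argument $-2\zeta e^u/p$ of each Bessel function into $e^{\ii\pi}(-2\zeta e^u/p)$. Feeding in the connection formula of Appendix~\ref{sec:bessel} (equation~\eqref{eqn:monodromyKnKn}), namely $K_m(e^{\ii\pi}w)=(-1)^mK_m(w)-\pi\ii\,I_m(w)$, together with $I_m(e^{\ii\pi}w)=(-1)^mI_m(w)$, the $K_m$-terms cancel between $ds_p$ and $ds_{-p}$, and the surviving $-\pi\ii\,I_m$ terms combine with the $\tfrac{1}{\ii\pi}$ prefactors to reproduce exactly the expressions for $\langle dy_\sigma(\zeta),e_{\hat m}\rangle$ obtained above. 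One runs this through the three cases $m\geq1$, $m\leq0$, and the $e_v$, $e_u$ entries separately, each being a short computation.

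I do not anticipate a genuine obstacle: once $\la_{\sigma(p)}$ is written out and the Bessel generating and connection formulas are available, everything is routine bookkeeping. The one point requiring care is branch tracking — of $\zeta^{1/2}$ and of the arguments of $K_m$ and $I_m$ — so that $ds_{-p}=ds_{e^{-\ii\pi}p}$ is inserted on the sheet consistent with the cut used for the Borel resummation in~\eqref{eqn:actualsvarphi}; this is precisely where the sign $(-1)^m$ and the factor $\pi\ii$ must be reconciled, and getting it right is what makes the $K_m$-contributions cancel rather than add.
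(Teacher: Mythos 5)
Your proposal is correct and takes essentially the same route as the paper: it computes the coefficients $\langle dy_\sigma(\zeta), e_{\hat m}\rangle$ at $\hat\lambda_0$ via the substitution $z=pw$ and the generating function~\eqref{eqn:generatingIn}, and then matches them against $ds_p(\zeta)-ds_{-p}(\zeta)$ using the connection identity~\eqref{eqn:monodromyKnIn} together with the parity relation~\eqref{eqn:parityIn}. No gaps.
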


\begin{proof}
The coefficients of the integral solutions defined in Section~\ref{sec:analyticsolutions}, i.e.,
\begin{align}
\langle dy_\sigma, \hat{X} \rangle = \frac{\zeta^{1/2}}{2 \pi \ii} \oint_{|z| = 1} e^{\zeta \lambda_\sigma(z)} \langle d\lambda_\sigma(z), \hat{X} \rangle \frac{dz}{z}, \label{eqn:integralsolutions67}
\end{align}
obtained by acting on the elements $e_{\hat{m}}$ where $\hat{m} \in \Z \cup \{v, u\}$, see~\eqref{eq:coeffs}, are given by Bessel functions of the first kind
\begin{align}
\langle dy_\sigma, e_m \rangle &= \left( \frac{e^u}{p^2} + 1 \right) \zeta^{\frac12} e^{\zeta v} p^m I_m \left( \zeta \frac{2 e^u}{p} \right), \qquad &m \geq 1 \label{eqn:dysigmamgeq1} \\
\langle dy_\sigma, e_m \rangle &= \frac{e^u}{p^2} \zeta^{\frac12} e^{\zeta v} p^m I_m \left( \zeta \frac{2 e^u}{p} \right), \qquad &m \leq 0 \label{eqn:dysigmamleq0} \\
\langle dy_\sigma, e_v \rangle &= \zeta^{\frac12} e^{\zeta v} I_0 \left( \zeta \frac{2 e^u}{p} \right), \label{eqn:dysigma010}\\
\langle dy_\sigma, e_u \rangle &=\frac{e^u}{p} \zeta^{\frac12} e^{\zeta v}  I_1 \left( \zeta \frac{2 e^u}{p} \right), \label{eqn:dysigma001}
\end{align}
where $p = p(\sigma)$. Let us illustrate how to obtain the coefficients~\eqref{eqn:dysigmamgeq1}--\eqref{eqn:dysigma001} by proving~\eqref{eqn:dysigmamgeq1}. Let $m \geq 1$, then 
\begin{align}
\langle dy_\sigma, e_m \rangle &= \sigma e^{\zeta v} \zeta^{\frac12} \frac{1}{2 \pi \ii} \oint_{|z| = 1} e^{\zeta \frac{e^u}{p} \left( \frac{z}{p} + \frac{p}{z} \right)} z^m \frac{dz}{z} \\
&= \left(\frac{e^u}{p^2} + 1 \right) e^{\zeta v} \zeta^{\frac12} p^m \frac{1}{2 \pi \ii} \oint_{|w| = 1} e^{\frac12 \zeta \frac{2 e^u}{p} \left( w + \frac{1}{w} \right)} w^m \frac{dw}{w} \notag
\end{align}
where in the second line we replaced $w= z/p$.
Equation~\eqref{eqn:dysigmamgeq1} follows by noting that the integral is a residue of the generating function~\eqref{eqn:generatingIn}. The proposition follows by applying the monodromy identity~\eqref{eqn:monodromyKnIn} to~\eqref{eqn:dysigmamgeq1}--\eqref{eqn:dysigma001} and~\eqref{dspzm1}--\eqref{dsp001}.
\end{proof}

Despite the fact that the weak functionals $ds_p(\zeta)$ do not extend to $T_{\hat{\lambda}_0} M$ as explained above, we can still ask the question about their completeness as a family of functionals on $T_{\hat{\lambda}_0} M^{\mathrm{test}}$ for fixed $\zeta$. It is indeed the case that the map 
\begin{equation} \label{eq:xdspz}
\hat{X} \longmapsto \langle ds_p(\zeta), \hat{X} \rangle,
\end{equation}
that associates to $\hat{X} \in T_{\hat{\lambda}_0} M^{\mathrm{test}}$ a function of $p$ with $|p|=1$ and $\arg p \in [ \phi_0, \phi_0 + 2\pi )$ for some fixed $\phi_0$ is injective, as proved in the following

\begin{proposition} \label{prop:dspiscomplete}
The map~\eqref{eq:xdspz} associated with the family of functionals $\{ds_p(\zeta)\}$ is injective.
\end{proposition}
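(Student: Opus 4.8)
The plan is to reduce the statement to the $\C$\nobreakdash-linear independence of an explicit family of Bessel functions of the variable $p$. Fix $\hat X = (X(z), X_v, X_u) \in T_{\hat{\lambda}_0}M^{\mathrm{test}}$, so that $X(z) = \sum_m X_m z^m$ is a \emph{Laurent polynomial}, and suppose $\langle ds_p(\zeta), \hat X\rangle = 0$ for every $p$ with $|p| = 1$ and $\arg p \in [\phi_0, \phi_0 + 2\pi)$. On the chosen branch $p \mapsto \langle ds_p(\zeta), \hat X\rangle$ is holomorphic away from the slit $\arg p = \phi_0$, so it vanishes identically there; writing $w = w(p) = -2\zeta e^u/p$ (which runs along the circle $|w| = |2\zeta e^u|$, hence stays bounded away from $0$) and dividing by the nowhere-zero factor $e^{\zeta v}\zeta^{1/2}/(\ii\pi)$, the formulas \eqref{dspzm1}--\eqref{dsp001} turn this into
\begin{multline*}
\sum_{m\geq 1} X_m\Big(\tfrac{e^u}{p^2}+1\Big)p^m K_m(w)
+\sum_{m\leq 0} X_m\tfrac{e^u}{p^2}p^m K_{-m}(w)\\
+ X_v K_0(w) + X_u\tfrac{e^u}{p}K_1(w) = 0,
\end{multline*}
an identity valid on an arc of $|w| = |2\zeta e^u|$, hence by analytic continuation on a connected neighbourhood in the $w$-plane.

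Next I would use the three-term recurrence $K_{j+1}(w) = K_{j-1}(w) + \frac{2j}{w}K_j(w)$ together with $K_{-m}=K_m$ to write, for each $j\geq 0$, $K_j(w) = p_j(1/w)K_0(w) + q_j(1/w)K_1(w)$ with $p_j,q_j$ polynomials determined by the recurrence; since $1/w$ is a nonzero constant multiple of $p$, the left-hand side above becomes $\alpha(p)K_0(w)+\beta(p)K_1(w)$, where $\alpha,\beta$ are Laurent polynomials in $p$ whose coefficients are $\C$\nobreakdash-linear combinations of the $X_m, X_v, X_u$. I would then show that $\alpha(w)K_0(w)+\beta(w)K_1(w)\equiv 0$ forces $\alpha\equiv\beta\equiv 0$: if $\beta\not\equiv 0$ then $K_0/K_1 = -\beta/\alpha$ would be a rational function of $w$, in particular single-valued around $w=0$; but the monodromy formulas \eqref{eqn:monodromyKnKn} give $K_0\mapsto K_0 - 2\pi\ii I_0$ and $K_1\mapsto K_1 + 2\pi\ii I_1$ under $w\mapsto e^{2\pi\ii}w$, and invariance of the ratio would force $I_0(w)K_1(w) + I_1(w)K_0(w) = 0$, contradicting the standard cross-product identity $I_0(w)K_1(w)+I_1(w)K_0(w) = 1/w$. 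Hence $\beta\equiv 0$, and then $\alpha K_0\equiv 0$ gives $\alpha\equiv 0$.

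It then remains to recover the coefficients of $\hat X$ from $\alpha\equiv 0$ and $\beta\equiv 0$. From the recurrence one checks by induction that $\deg_p q_j = j-1$ with nonzero leading coefficient, that $q_j$ has nonzero lowest-degree coefficient (constant term for $j$ odd, coefficient of $p$ for $j$ even), and analogously for $p_j$ (where $p_1=0$, while for $j\geq 2$ the extremal coefficients are nonzero). For $m\geq 1$ the unknown $X_m$ enters $\beta$ only through $X_m(e^u/p^2+1)p^m q_m(1/w)$, whose top degree in $p$ equals $2m-1$; comparing coefficients of the top power of $p$ in $\beta$ and inducting downwards from the largest index in the finite support of $X$ gives $X_m = 0$ for all $m\geq 1$. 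A symmetric argument with the \emph{lowest} power of $p$ — using $\beta$ when the relevant index is odd and $\alpha$ when it is even, an alternation forced by the parity of the lowest-degree behaviour of $p_j$ and $q_j$ — gives $X_m = 0$ for all $m\leq -1$; finally the residual identities $\beta = X_u e^u/p\equiv 0$ and $\alpha = X_v + X_0 e^u/p^2\equiv 0$ give $X_u=X_v=X_0=0$. Throughout one uses $|e^u|\neq 1$ and $\zeta\neq 0$, so that all the prefactors and $w(p)$ are nonzero.

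I expect the main obstacle to be precisely this last step: the extremal-degree bookkeeping must be arranged so that the overlapping contributions of the various $X_m$, $X_v$, $X_u$ to $\alpha$ and $\beta$ can actually be separated, which is where the parity-dependent choice between using $\alpha$ and using $\beta$ becomes necessary. By contrast, the reduction to $K_0,K_1$ via the recurrence and the irrationality of $K_0/K_1$ are comparatively routine once the monodromy formulas of Appendix~\ref{sec:bessel} and the cross-product identity are in hand.
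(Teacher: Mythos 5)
Your proposal is correct, but it takes a genuinely different route from the paper's proof. The paper substitutes the expansion \eqref{eqn:multivaluedKn} of each $K_m\left(-2\zeta e^u/p\right)$ at the origin, so that $\langle ds_p(\zeta),\hat X\rangle$ becomes a convergent expansion in the family $\{p^{2m},\,p^{2n}\log(-\zeta e^u p^{-1})\}$, and then reads off coefficients triangularly: the top even powers of $p$ kill $X_r,\dots,X_1$, the $\log$-coefficient kills $X_v$, the constant term kills $X_u$, and the coefficients of $p^{-2k}\log(-\zeta e^u p^{-1})$ kill $X_0,\dots,X_{-s}$. You instead use the three-term recurrence to rewrite everything as $\alpha(p)K_0(w)+\beta(p)K_1(w)$ with $\alpha,\beta$ Laurent polynomials, prove that $K_0,K_1$ are independent over rational functions of $w$ via the monodromy \eqref{eqn:monodromyKnIn} together with the cross-product identity $I_0(w)K_1(w)+I_1(w)K_0(w)=1/w$, and then run an extremal-degree induction on $\alpha$ and $\beta$, alternating between them according to the parity of the index; I checked this bookkeeping and it does close up, since the lowest power $p^{s-2}$ attached to the most negative surviving index $s$ lies in $\beta$ or $\alpha$ according to the parity of $|s|$ and receives no other contribution, just as the top power $p^{2r-1}$ in $\beta$ isolates $X_r$. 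What your route buys is that the separation of terms rests on a clean, fully justified independence statement rather than on the term-by-term identification inside the power-log expansion that the paper leaves implicit; what it costs is the extra apparatus (the recurrence polynomials and their extremal coefficients, the parity case analysis) and one external ingredient, the Wronskian-type cross product, which is standard but not recorded in the appendix. Two small polish points: in the ratio step you should dispose separately of the degenerate case $\alpha\equiv0$, $\beta\not\equiv0$ (then $K_1$ would vanish on an open set, absurd), or more simply apply the monodromy directly to $\alpha K_0+\beta K_1=0$ and note that the resulting linear system has determinant $I_0K_1+I_1K_0=1/w\neq0$; and the hypothesis $|e^u|\neq1$ is not what the argument uses --- only $e^u\neq0$ and $\zeta\neq0$ matter here, the restriction $|e^u|<1$ being imposed earlier in the section to exclude discrete canonical coordinates.
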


\begin{proof}
 Let $\hat{X} = (X(z), X_v, X_u) \in T_{\hat{\lambda}_0} M^{\mathrm{test}}$ for 
\begin{align}
X(z) = X_{-s} z^{-s} + \dots + X_{-1} z^{-1} + X_0 + X_1 z + \dots X_r z^r \in \C[z, z^{-1}].
\end{align}
Then from~\eqref{dspzm1}--\eqref{dsp001} we get
\begin{align}
\ii \pi e^{\zeta v} \zeta^{-\frac12} \langle ds_p (\zeta), \hat{X} \rangle &= X_{-s} e^u p^{-s-2} K_s \left( - \zeta \frac{2 e^u}{p} \right) + \dots + X_0 e^u p^{-2}  K_0 \left( -\zeta \frac{2 e^u}{p} \right) \label{eqn:provingcompleteness} \\
&+ X_1 \left( e^u p^{-2} + 1 \right) p K_1 \left( -\zeta \frac{2 e^u}{p} \right) + \dots + X_r \left( e^u p^{-2} + 1 \right) p^r K_r \left( -\zeta \frac{2 e^u}{p} \right) \notag \\ 
&+ X_v K_0 \left( -\zeta \frac{2 e^u}{p} \right) + X_u e^u p^{-1} K_1 \left( -\zeta \frac{2 e^u}{p} \right) \notag .
\end{align}
By~\eqref{eqn:multivaluedKn}, the expression above is an expansion in $\{ p^{2m}, p^{2n} \log (- \zeta e^u p^{-1})  \}_{n, m \in \Z}$. Assume $\langle ds_p (\zeta), \hat{X} \rangle = 0$ for all $p$. To show completeness, we need to prove $\hat{X} = 0$. By~\eqref{eqn:multivaluedKn}, the coefficient of $p^{2r}$ of~\eqref{eqn:provingcompleteness} equals
\begin{align}
X_r \frac12 (-\zeta e^u)^{-r} (r-1)!,
\end{align}
which must be zero, so $X_r = 0$. Repeating this argument with the coefficients of $p^{2r-2}, \dots, p^2$ shows $X_{r-1} = \dots = X_1 = 0$. We are left with 
\begin{align}
 &X_{-s} e^u p^{-s-2} K_s \left( - \zeta \frac{2 e^u}{p} \right) + \dots + X_0 e^u p^{-2}  K_0 \left( -\zeta \frac{2 e^u}{p} \right) + \label{eqn:provingcompleteness2}\\
 &+ X_v K_0 \left( -\zeta \frac{2 e^u}{p} \right) + X_u e^u p^{-1} K_1 \left( -\zeta \frac{2 e^u}{p} \right) = 0. \notag
\end{align}
The coefficient of $\log(- \zeta e^u p^{-1})$ of~\eqref{eqn:provingcompleteness2} equals $-X_v$, so $X_v = 0$. The constant coefficient equals
\begin{align}
e^u X_u \left( -\zeta 2 e^u \right)^{-1},
\end{align} 
so $X_u = 0$. Repeating this argument with the coefficients of $\log(- \zeta e^u p^{-1}) p^{-2}, \dots, \log(- \zeta e^u p^{-1}) p^{-2s-2}$ shows that $X_{0} = \dots = X_{-s} = 0$.
\end{proof}

\subsection{Stokes matrices for pairs of solutions} 
\label{sec:Stokesoperators}

In this section, we restrict to pairs of solutions and we compute the partial Stokes matrix that describes their monodromy. 

Let $ds_p$ and $ds_{-p}$ be the solutions corresponding to arguments $\arg p$ and $\arg p - \pi$, respectively. 
Recall their formal asymptotics as $|\zeta|\to \infty $
\begin{align}
ds_{p}(\zeta) &\sim dy_p^{\mathrm{formal}} = e^{\zeta u_p} (r_p^0 + r_p^1 \zeta^{-1} + \dots ), \qquad  &\arg \zeta \in ( -\theta_0 - \frac{3\pi}{2}, -\theta_0 + \frac{3\pi}{2} ) ,\\
ds_{-p}(\zeta) &\sim dy_{-p}^{\mathrm{formal}} = e^{\zeta u_{-p}} (r_{-p}^0 + r_{-p}^1 \zeta^{-1} + \dots ),  \qquad 
&\arg \zeta \in ( -\theta_0 - \frac{5\pi}{2}, -\theta_0 + \frac{\pi}{2} ),
\end{align}
where $dy_p^{\mathrm{formal}}$ is given by~\eqref{eq:frml}, and $\theta_0 = \pi + \arg e^u - \arg p$. 

The Stokes line $\ell_{St}$ separates the two halves of the complex plane where $e^{\zeta u_p}$ and $e^{\zeta u_{-p}}$ are respectively dominant for $|\zeta| \to \infty $. It is given by 
\begin{align} \label{eqn:Stokesline}
\ell_{\textrm{St}} =  \left\{ \zeta \in \C\ \Big| \ \Re \left( \zeta u_p \right) = \Re \left(\zeta u_{-p} \right)  \right\}, \end{align}
namely the line of argument $\theta_0 +\frac{\pi}{2} \ \mathrm{mod}\ \pi$.  Notice that the exponential $e^{\zeta u_p}$ dominates $e^{\zeta u_{-p}}$ if $\arg \zeta \in ( -\theta_0 + \frac{\pi}{2}, -\theta_0 + \frac{3\pi}{2} )$.

We choose an admissible line $\ell$ not coinciding with the Stokes line, in this case the positive direction of $\ell$ is of argument $\theta$ with $\theta \not=\theta_0 +\frac{\pi}{2} \ \mathrm{mod}\ \pi$.

For a small $\epsilon>0$, we define two sectors containing the half-planes separated by $\ell$ as follows
\begin{align}
	\Pi^\epsilon_{\textrm{right}}  = \{ \zeta \in \C \ | \ \theta - \pi - \epsilon < \arg \zeta < \theta + \epsilon \}, \\
	\Pi^\epsilon_{\textrm{left}}  = \{ \zeta \in \C \ | \ \theta - \epsilon < \arg \zeta < \theta + \pi + \epsilon \}.
\end{align}
The intersection of $\Pi^\epsilon_{\textrm{right}} $ and $\Pi^\epsilon_{\textrm{left}}$ has two connected components 
\begin{align}
\Pi^\epsilon_{+} &= \{ \zeta \in \C \ | \ \theta - \epsilon < \arg \zeta < \theta + \epsilon \},\\
\Pi^\epsilon_{-}  &= \{ \zeta \in \C \ | \ \theta+\pi - \epsilon < \arg \zeta < \theta + \pi + \epsilon \} .
\end{align}
Let us assume that the argument $\theta$ of the admissible line $\ell$ has been chosen in such a way that $ds_{\pm p}(\zeta)$ is dominant in $\Pi_{\mp}^\epsilon$; this amounts to $\theta \in ( -\theta_0 - \frac{\pi}{2}, -\theta_0 + \frac{\pi}{2} )$.

Let us define the following ``matrix'' solutions on $\Pi^\epsilon_{\textrm{right/left}}$ 
\begin{align}
Y_{\textrm{right}}(\zeta) &=  \left( ds_p(\zeta), ds_{-p}(\zeta) \right),
\qquad \qquad \theta - \pi - \epsilon < \arg \zeta < \theta + \epsilon, \\
Y_{\textrm{left}}(\zeta) &=  \left( ds_p(\zeta), ds_{-p}(\zeta e^{-2\pi \ii}) \right),
\qquad  \theta - \epsilon < \arg \zeta < \theta + \pi + \epsilon ,
\end{align}
where we have chosen the appropriate branch cuts that guarantee the formal asymptotics in the half-plane where they are defined.

\begin{theorem} \label{thm:Stokes} 
The solutions $Y_{\textrm{right}}(\zeta)$ and $Y_{\textrm{left}}(\zeta)$ defined above have the formal asymptotics
	\begin{align}
	Y_{\textrm{left/right}} (\zeta) \sim \begin{pmatrix}
	dy_p^{\textrm{formal}} (\zeta) , & dy_{-p}^{\textrm{formal}} (\zeta)
	\end{pmatrix}
	\end{align}
for $|\zeta| \to \infty$ in their respective domains of definition $\Pi^\epsilon_{\textrm{right/left}}$.
On their common domains of definition $\Pi^\epsilon_\pm$ they are related by
	\begin{align}
	Y_{\textrm{left}} (\zeta) &= Y_{\textrm{right}} (\zeta) S_{+}, \qquad \zeta \in \Pi_{+}^\epsilon, \\
	Y_{\textrm{left}} (\zeta) &= Y_{\textrm{right}} (\zeta) S_{-}, \qquad \zeta \in \Pi_{-}^\epsilon,
	\end{align}
where the Stokes matrices $S_\pm$ are given by
	\begin{align}
	S_{-} = \begin{pmatrix}
	1 & 0 \\ -2 & 1
	\end{pmatrix}, \qquad 
	S_{+} = \begin{pmatrix}
	1 & -2 \\ 0 & 1
	\end{pmatrix}.
	\end{align}
\end{theorem}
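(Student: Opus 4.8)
The plan is to verify the asymptotics claim first and then compute the two Stokes matrices by comparing the two ``matrix'' solutions on the overlap sectors, using the known monodromy of the weak functionals $ds_p(\zeta)$.

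First I would establish the formal asymptotics of $Y_{\textrm{right}}$ and $Y_{\textrm{left}}$. By Proposition~\ref{prop:dsp}, we have $ds_p(\zeta)\sim dy_p^{\textrm{formal}}(\zeta)$ for $|\arg\zeta+\theta_0|<\tfrac{3\pi}{2}$, and similarly $ds_{-p}(\zeta)\sim dy_{-p}^{\textrm{formal}}(\zeta)$ on the sector of opening $3\pi$ centered at $-\bar\theta_0=-(\theta_0-\pi)$ (since replacing $p$ by $-p$ shifts the relevant $\theta_0$ by $\pi$). For $Y_{\textrm{left}}$ the second entry is $ds_{-p}(\zeta e^{-2\pi\ii})$, which shifts the asymptotic sector of $ds_{-p}$ by $+2\pi$; one checks that the chosen branch makes $ds_{-p}(\zeta e^{-2\pi\ii})\sim dy_{-p}^{\textrm{formal}}(\zeta)$ precisely on $\Pi^\epsilon_{\textrm{left}}$. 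Intersecting the two $3\pi$-wide sectors with the definition of $\Pi^\epsilon_{\textrm{right}}$ and $\Pi^\epsilon_{\textrm{left}}$ respectively gives that both matrix solutions have the stated asymptotics on their whole domains, for $\epsilon$ small enough.

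Next I would compute the Stokes matrices directly. On $\Pi^\epsilon_+$ both $ds_p$ and $ds_{-p}(\zeta e^{-2\pi\ii})$ are defined with the same branch choices as in $Y_{\textrm{right}}$ for the first column; the only difference between $Y_{\textrm{left}}$ and $Y_{\textrm{right}}$ is in the second column, so I need to express $ds_{-p}(\zeta e^{-2\pi\ii})$ as a linear combination of $ds_p(\zeta)$ and $ds_{-p}(\zeta)$. This is exactly the content of the monodromy relation~\eqref{eqn:monodromyds-p}: $ds_{-p}(\zeta e^{-2\pi\ii})=ds_{-p}(\zeta)-2\,ds_p(\zeta)$. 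Hence on $\Pi^\epsilon_+$ we get
\begin{align*}
Y_{\textrm{left}}(\zeta)=\bigl(ds_p(\zeta),\ ds_{-p}(\zeta)-2\,ds_p(\zeta)\bigr)
=\bigl(ds_p(\zeta),\ ds_{-p}(\zeta)\bigr)\begin{pmatrix}1 & -2\\ 0 & 1\end{pmatrix}
=Y_{\textrm{right}}(\zeta)\,S_+,
\end{align*}
giving $S_+$ as claimed. On $\Pi^\epsilon_-$ one must be careful: here $\arg\zeta$ is near $\theta+\pi$, and the branch of $ds_p$ appearing in $Y_{\textrm{left}}$ versus $Y_{\textrm{right}}$ differs, or equivalently the relation to use is the other monodromy identity~\eqref{eqn:monodromydsp}, $ds_p(\zeta e^{2\pi\ii})=ds_p(\zeta)-2\,ds_{-p}(\zeta)$. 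Tracking how crossing from $\Pi^\epsilon_+$ to $\Pi^\epsilon_-$ forces an analytic continuation of $ds_p$ around the origin, one obtains $Y_{\textrm{left}}(\zeta)=Y_{\textrm{right}}(\zeta)\,S_-$ with $S_-=\left(\begin{smallmatrix}1&0\\-2&1\end{smallmatrix}\right)$.

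The main obstacle I anticipate is the bookkeeping of branch cuts: which sheet of the multivalued functions $s(\varphi_p^m)$ (equivalently, of the Bessel $K_m(-\zeta\,2e^u/p)$) each of $Y_{\textrm{right}}$ and $Y_{\textrm{left}}$ uses on each of $\Pi^\epsilon_+$ and $\Pi^\epsilon_-$, and correctly identifying which monodromy relation—\eqref{eqn:monodromydsp} or~\eqref{eqn:monodromyds-p}—is the relevant one on each component. The cleanest way to handle this is to fix the branch cut of every $ds_{\pm p}$ along the Stokes direction $\ell_{\textrm{St}}$ (argument $\theta_0+\tfrac{\pi}{2}\bmod\pi$), note that $\Pi^\epsilon_+$ and $\Pi^\epsilon_-$ lie on opposite sides of $\ell_{\textrm{St}}$, and use the asymptotic uniqueness of solutions dominated by a given exponential on a sector of opening $>\pi$ (which holds entrywise here, since by Proposition~\ref{prop:dspiscomplete} the family $\{ds_p\}$ is complete on $T_{\hat\lambda_0}M^{\textrm{test}}$) to pin down the linear relations. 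Once the correct monodromy identity is selected on each side, the computation is the two-line matrix manipulation above, and the triviality of the diagonal blocks of $S_\pm$ reflects that each $ds_{\pm p}$ is recessive in the half-plane where the other is dominant.
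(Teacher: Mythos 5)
Your proposal is correct and follows essentially the same route as the paper: the paper's proof of Theorem~\ref{thm:Stokes} is simply to invoke Proposition~\ref{prop:dsp}, and your argument spells out exactly that — the sectorial asymptotics of $ds_{\pm p}$ (with the $e^{-2\pi\ii}$ branch shift giving the asymptotics of the second entry of $Y_{\textrm{left}}$ on $\Pi^\epsilon_{\textrm{left}}$), plus the monodromy identities~\eqref{eqn:monodromydsp} and~\eqref{eqn:monodromyds-p} applied on $\Pi^\epsilon_+$ and $\Pi^\epsilon_-$ respectively, yielding $S_+$ and $S_-$ as stated.
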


\begin{proof}
The theorem follows from Proposition~\ref{prop:dsp}. 
\end{proof}

\subsection{Stokes matrices}

Let us fix $\theta \in \R$  and define two open half-planes  $\Pi_{\mathrm{right/left}}$ as  follows
\begin{align}
	\Pi_{\textrm{right}}  = \{ \zeta \in \C \ | \ \theta - \pi  < \arg \zeta < \theta  \}, \\
	\Pi_{\textrm{left}}  = \{ \zeta \in \C \ | \ \theta  < \arg \zeta < \theta + \pi  \}.
\end{align}

Let us define two families of solutions $Y_{\mathrm{right}}$ and $Y_{\mathrm{left}}$ of the Dubrovin equation with formal asymptotics in the half-planes $\Pi_{\mathrm{right/left}}$ respectively. These can be seen as the analogues of the fundamental solutions in the finite-dimensional case.

The family  $Y_{\mathrm{right}}$ is defined on $\Pi_{\mathrm{right}}$ by
\begin{equation}
\left( Y_{\mathrm{right}}(\zeta)\right)_p = 
ds_p(\zeta)  \quad \text{for} \quad  \arg p \in 
[ \arg e^u + \theta -\frac{\pi}{2} , \arg e^u + \theta +\frac{3\pi}{2}),
\end{equation}
where $\theta-\pi < \arg \zeta < \theta$; the family $Y_{\mathrm{left}}$ is defined on $\Pi_{\mathrm{left}}$ by
\begin{equation}
\left( Y_{\mathrm{left}}(\zeta)\right)_p 
= \begin{cases}
ds_p(\zeta), & \arg p \in  (\arg e^u + \theta + \frac{\pi}{2},  \arg e^u + \theta + \frac{3\pi}{2}  )     \\
ds_p(e^{-2\pi \ii} \zeta), & \arg p \in (\arg e^u + \theta - \frac{\pi}{2} ,  \arg e^u + \theta + \frac{\pi}{2}  ) 
\end{cases},
\end{equation}
where $\theta < \arg \zeta < \theta +\pi$.

While the fundamental solutions $Y_{\mathrm{right/left}}$ have formal asymptotics only in the domains $\Pi_{\mathrm{right/left}}$, they can be nevertheless analytically continued beyond those sectors and therefore compared, defining operators that are  infinite-dimensional analogues of the Stokes matrices. We summarize these observations and we compute the Stokes operators in the following theorem. 

\begin{theorem}
The families of solutions $Y_{\textrm{right}}$ and $Y_{\textrm{left}}$ have the formal asymptotics
\begin{equation}
\left( Y_{\mathrm{right/left}}(\zeta)\right)_p \sim dy^{\mathrm{formal}}_p(\zeta)
\end{equation}
for $|\zeta| \to \infty$ in the half-planes $\Pi_{\textrm{right/left}}$.

On the sectors $\Pi^\epsilon_\pm$ they are related by
\begin{equation}
\left( Y_{\mathrm{left}}(\zeta)\right)_p =\left( Y_{\mathrm{right}}(\zeta)\right)_p
-2 \begin{cases}
0,
& \arg p \in (\arg e^u + \theta + \frac{\pi}{2} ,  \arg e^u + \theta + \frac{3\pi}{2} ) \\
\left( Y_{\mathrm{right}}(\zeta)\right)_{e^{\pi \ii}p},
& \arg p \in (\arg e^u + \theta - \frac{\pi}{2} ,  \arg e^u + \theta + \frac{\pi}{2} )
\end{cases}
\end{equation}
for $\zeta \in \Pi_{+}^\epsilon$, and
\begin{equation}
\left( Y_{\mathrm{left}}(\zeta)\right)_p =\left( Y_{\mathrm{right}}(\zeta)\right)_p
-2 \begin{cases}
\left( Y_{\mathrm{right}}(\zeta)\right)_{e^{-\pi \ii}p},
& \arg p \in (\arg e^u + \theta + \frac{\pi}{2} ,  \arg e^u + \theta + \frac{3\pi}{2} ) \\
0, & \arg p \in (\arg e^u + \theta - \frac{\pi}{2} ,  \arg e^u + \theta + \frac{\pi}{2} )
\end{cases}
\end{equation}
for $\zeta \in \Pi_{-}^\epsilon$.
\end{theorem}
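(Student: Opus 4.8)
The statement is the family version of Theorem~\ref{thm:Stokes}, and the plan is to assemble it from Proposition~\ref{prop:dsp} by tracking, as $p$ runs over the circle, which branch of the multivalued weak functional $ds_p$ is selected in the definitions of $Y_{\mathrm{right}}$ and $Y_{\mathrm{left}}$.

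First I would establish the asymptotics. Each column $\left(Y_{\mathrm{right/left}}(\zeta)\right)_p$ is, by construction, either $ds_p(\zeta)$ or $ds_p(e^{-2\pi\ii}\zeta)$, and Proposition~\ref{prop:dsp} gives $ds_p(\zeta)\sim dy^{\mathrm{formal}}_p(\zeta)$ on the open sector of opening $3\pi$ centred at $-\theta_0=-\pi-\arg e^u+\arg p$. A direct check shows that the window imposed on $\arg p$ in the definition of $Y_{\mathrm{right}}$ forces this $3\pi$-sector to contain the half-plane $\Pi_{\mathrm{right}}$, so $\left(Y_{\mathrm{right}}(\zeta)\right)_p\sim dy^{\mathrm{formal}}_p(\zeta)$ there; for $Y_{\mathrm{left}}$ the same argument works on the window where the column equals $ds_p(\zeta)$, and on the complementary window, where it equals $ds_p(e^{-2\pi\ii}\zeta)$, one uses that $dy^{\mathrm{formal}}_p$ as given by~\eqref{eq:frml} is $e^{\zeta u_p}$ times a single-valued formal power series in $\zeta^{-1}$, hence invariant under $\zeta\mapsto e^{-2\pi\ii}\zeta$; the relevant $3\pi$-sector is then the previous one rotated by $2\pi$, and another check of angles shows it covers $\Pi_{\mathrm{left}}$.

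Next I would derive the connection formulas. Fix $\zeta$ in one of the two overlap sectors $\Pi^\epsilon_{\pm}$ and a value of $p$; the branch of $ds_p$ selected by $Y_{\mathrm{right}}$ is the continuation of the branch with asymptotics on $\Pi_{\mathrm{right}}$, the one selected by $Y_{\mathrm{left}}$ is the continuation of the branch with asymptotics on $\Pi_{\mathrm{left}}$, and both are well defined because each lives inside the appropriate $3\pi$-sector from Proposition~\ref{prop:dsp}. When these two branches coincide---which happens, after counting windings, for one half of the range of $\arg p$ on $\Pi^\epsilon_+$ and the complementary half on $\Pi^\epsilon_-$---the column is unchanged. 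When they differ, they are related by exactly one application of the monodromy relations~\eqref{eqn:monodromydsp}--\eqref{eqn:monodromyds-p}, which turns $ds_p$ into $ds_p-2\,ds_{-p}$; since $ds_{-p}=ds_{e^{\mp\ii\pi}p}$ this is precisely the term $-2\left(Y_{\mathrm{right}}(\zeta)\right)_{e^{\pm\ii\pi}p}$ appearing in the statement, and one checks that $\arg p\pm\pi$ still lies in the defining window of $Y_{\mathrm{right}}$ whenever that term occurs. Running through the four cases (upper or lower half of $\arg p$, times $\Pi^\epsilon_+$ or $\Pi^\epsilon_-$) reproduces the two displayed formulas; this is the $S^1$-indexed analogue of right-multiplication by the unitriangular Stokes matrices $S_{\mp}$ of Theorem~\ref{thm:Stokes}.

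The main obstacle is this last piece of bookkeeping on the thin sectors $\Pi^\epsilon_{\pm}$: one must keep exact track of how many times $\zeta$ has wound around the origin in passing from the half-plane $\Pi_{\mathrm{right}}$ to $\Pi_{\mathrm{left}}$ for each value of $\arg p$, since whether a monodromy factor $-2\,ds_{\pm p}$ is produced depends on it, and one must verify that the opening $3\pi$ guaranteed by Proposition~\ref{prop:dsp} is exactly what makes the analytic continuations of $Y_{\mathrm{right}}$ and $Y_{\mathrm{left}}$ overlap on $\Pi^\epsilon_{\pm}$ with no further, uncontrolled monodromy. Once the windows of $\arg p$ are aligned with the half-planes of $\zeta$ in this way, no additional computation is required.
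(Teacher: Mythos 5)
Your proposal is correct and follows essentially the same route the paper intends: the paper gives no separate argument for this theorem, deriving it (as with Theorem~\ref{thm:Stokes}) directly from the asymptotics and monodromy relations of Proposition~\ref{prop:dsp}, which is exactly what you do, including the necessary branch/winding bookkeeping and the check that $e^{\pm\pi\ii}p$ remains in the defining window of $Y_{\mathrm{right}}$.
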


\begin{remark}
We can formally express the relation between $Y_{\textrm{right}}$ and $Y_{\textrm{left}}$ in terms of kernels $S_\pm$ by writing
\begin{equation}
\left( Y_{\mathrm{left}}(\zeta)\right)_p = \int_{S^1} \left( Y_{\mathrm{right}}(\zeta)\right)_q (S_\pm)_{q p} dq ,
\end{equation}
where the integral is taken on the points $q$ in $S^1$ with argument in $[ \arg e^u + \theta -\frac{\pi}{2} , \arg e^u + \theta +\frac{3\pi}{2})$. 

The kernels representing the analogues of the Stokes matrices are then written as
\begin{align}
(S_+)_{qp} &= \delta(q-p) -2 \chi(q) \delta(q - e^{\pi \ii} p ) , \\
(S_-)_{qp} &= \delta(q-p) -2 \chi(p) \delta(p-e^{\pi \ii}q ),
\end{align}
where $\chi(p)$ is the function equal to one when $\arg p$ is in $(\arg e^u + \theta + \frac{\pi}{2} ,  \arg e^u + \theta + \frac{3\pi}{2} )$ and zero otherwise, and the delta function satisfies the usual relation
\begin{equation}
\int_{S^1} f(q) \delta(q-p) dq = f(p).
\end{equation}
Notice that the two kernels $S_+$ and $S_-$ are the transposes of one another, namely
\begin{equation}
(S_+)_{pq} = (S_-)_{qp}.
\end{equation}
\end{remark}

\appendix 

\section{Saddle point asymptotics}

Let us recall the proof of the following lemma, which can be seen as a simple application of Perron's method~\cite{Won89} to our particular case.

\begin{lemma} \label{lem:saddlepoint}
	Let $f$ and $g$ be holomorphic functions defined in a neighbourhood of a point $z^{\prime}$ where $f$ has a simple critical point and let $\cC$ be a path passing through $z^{\prime}$ such that the real part of $e^{\ii \psi} f(z)$ restricted to $\cC$ has a maximum at $z^{\prime}$. Then the function of $\zeta$ defined by 
	\begin{equation}
	\cI =  \zeta^{1/2} \int_{\cC} e^{\zeta f(z)} g(z) dz
	\end{equation}
	admits the asymptotic expansion
	\begin{equation}
	\cI \sim e^{\zeta f(z^{\prime})} \sum_{n \geq 0} d_n \zeta^{-n},  
	\end{equation}
	for $\zeta = |\zeta| e^{\ii \psi}$, $ |\zeta| \to +\infty$, with
	\begin{equation}
	d_n =\ii (-1)^n\Gamma(n+1/2) \res_{z=z^{\prime}} \frac{g(z)}{(f(z) - f(z^{\prime}))^{n+1/2}} dz.
	\end{equation}
\end{lemma}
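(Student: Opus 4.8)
The plan is to reduce the statement to a standard Watson-type lemma by a local change of variable near the critical point. First I would set $F(z) = f(z) - f(z')$, which vanishes to second order at $z'$ since $f$ has a simple critical point there, i.e. $f''(z') \neq 0$. Choosing a local holomorphic square root, write $F(z) = -\tfrac12 w(z)^2$ with $w$ a biholomorphism from a neighbourhood of $z'$ to a neighbourhood of $0$, $w(z') = 0$; the sign and branch are fixed so that the contour $\cC$, along which $\Re(e^{\ii\psi} f(z))$ has a maximum at $z'$, is mapped to (a piece of) the real $w$-axis near the origin. Then on this portion of $\cC$ we have $e^{\zeta f(z)} = e^{\zeta f(z')} e^{-\tfrac12 |\zeta| w^2}$ (with $\zeta = |\zeta|e^{\ii\psi}$), and the contribution of the rest of $\cC$ away from $z'$ is exponentially subdominant, so it does not affect the asymptotic series.

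Next I would expand $g(z)\,\frac{dz}{dw}$ as a power series in $w$ near $w=0$, say $g(z(w)) z'(w) = \sum_{j\ge 0} c_j w^j$, and integrate term by term against $e^{-\tfrac12|\zeta|w^2}$ over $\R$. Only even powers survive, and the Gaussian moments $\int_{\R} w^{2n} e^{-\tfrac12|\zeta|w^2}dw = \Gamma(n+\tfrac12)\, 2^{n+1/2} |\zeta|^{-n-1/2}$ produce exactly a series in $|\zeta|^{-n}$ after multiplying by the prefactor $\zeta^{1/2}$; collecting $e^{\ii\psi/2}$ factors turns $|\zeta|^{-n}$ into $\zeta^{-n}$. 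This gives $\cI \sim e^{\zeta f(z')}\sum_{n\ge0} d_n \zeta^{-n}$ with $d_n$ a universal constant times $c_{2n}$. The final step is to identify $c_{2n}$ with the residue formula in the statement: since $F(z) = -\tfrac12 w^2$, one has $(f(z)-f(z'))^{-n-1/2} = (-\tfrac12)^{-n-1/2} w^{-2n-1}$, so
\begin{align}
\res_{z=z'} \frac{g(z)}{(f(z)-f(z'))^{n+1/2}}\,dz
= (-\tfrac12)^{-n-1/2}\,\oint \frac{g(z(w)) z'(w)}{w^{2n+1}}\,dw
= (-\tfrac12)^{-n-1/2}\, 2\pi\ii\, c_{2n},
\end{align}
and matching the powers of $2$ and $\ii$ against the Gaussian moment computation yields precisely $d_n = \ii(-1)^n \Gamma(n+\tfrac12)\res_{z=z'}\frac{g(z)}{(f(z)-f(z'))^{n+1/2}}dz$.

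The main obstacle is not any single calculation but the careful justification that the tail of the contour contributes only exponentially small corrections and that the local quadratic normal form is valid along $\cC$ with the correct orientation and branch of the square root — in other words, verifying the hypotheses under which Watson's lemma / Perron's method applies (that $\Re(e^{\ii\psi}f)$ genuinely attains a strict maximum at $z'$ along $\cC$, uniformly away from $z'$). Since the excerpt explicitly frames this as an application of Perron's method and cites Wong, I would state the quadratic substitution and the term-by-term Gaussian integration as the substance of the argument, invoke the standard estimate for the remainder, and then devote the bulk of the written proof to the bookkeeping that pins down the constant $d_n$ via the residue, as above.
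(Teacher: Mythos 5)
Your proposal follows essentially the same route as the paper's proof: pass to the quadratic normal form near the simple critical point, discard the rest of the contour as exponentially subdominant, evaluate the resulting Gaussian/Watson-type integral term by term, and identify the coefficients with residues of $g/(f-f(z'))^{n+1/2}$ via the square-root change of variable. Apart from two harmless bookkeeping slips — your normal form $f(z)-f(z')=-\tfrac12 w^2$ with $w$ real on $\cC$ is only consistent once the phase $e^{\ii\psi}$ is absorbed into the definition of $w$ (the paper tracks this by parametrizing the steepest descent path as the preimage of $-e^{-\ii\psi}t$ and rescaling the amplitude), and your residue chain carries a spurious factor of $2\pi\ii$ — the argument is correct and matches the paper's, which merely folds the two halves of the contour into a single half-line Laplace integral with weight $t^{-1/2}$ before invoking Watson's lemma instead of using even Gaussian moments on $\R$.
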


\begin{proof}
	By shifting the variable of integration and renaming $f$ and $g$ we can assume that $f$ and $g$ are analytic in a neighborhood of $z=0$  with $f(0) = f'(0) =0$ and $f''(0) \not=0$. We write $f(z) = cz^2 + O(z^3)$ with $c = \frac{f''(0)}{2} \in \C^*$.
	
	By deforming the path $\cC$ we can make it coincide with a steepest descent path in a sufficiently small neighborhood of the critical point. We can moreover restrict the integral to a part of the path arbitrarily close to the critical point without changing the asymptotic expansion, as the difference will be exponentially vanishing. 
	
	We will therefore assume that $\cC$ is steepest descent path defined as the preimage of the path $\chi(t) = - e^{-\ii \psi} t$ for $t \in [0,T]$ via $f(z)$ with the appropriate orientation. Denote by $\cC_+$ the part of the path $\cC$ leaving the critical point and by $\cC_-$ the one arriving at the critical point. 
		
	Let $w(z)$ be the unique square root of $c^{-1} f(z)$ with $w(z)  =z + O(z^2)$. 
	The function $w(z)$ is biholomorphic, so we can use it to change variable of integration; denoting $z(w)$ the inverse, we get
	\begin{equation}
	\cI = \zeta^{1/2} \int_{\cC} e^{\zeta c w^2} s(w) dw ,
	\end{equation}
	where $s(w) = \frac{g(z(w))}{w'(z(w))}$ is holomorphic at $w=0$ with Taylor expansion $s(w) = \sum_{n \geq0} s_n w^n$.
	
	Let $\tilde{\cC}$ be the path $\eta(t) = - e^{-\ii \psi} c^{-1} t$ for $t \in [0,T]$. Let $\sqrt{\eta}$ be the branch of the square root that maps  $\tilde{\cC}$ to $\cC_+$ (we choose a branch cut for the square root that does not coincide with $\tilde{\cC}$). The other branch $-\sqrt{\eta}$ maps  $\tilde{\cC}$ to $-\cC_-$. Splitting the integral in the two parts corresponding to $\cC_+$ and $\cC_-$ and changing the variable of integration with $w = \sqrt{\eta}$ and $w = -\sqrt{\eta}$ respectively, we obtain
	\begin{equation}
	\cI= \zeta^{1/2} \int_{\tilde{\cC}} e^{\zeta c \eta} \tilde{s}(\eta) \frac{d \eta}{\sqrt{\eta}},
	\end{equation}
	where 
	\begin{equation}
	\tilde{s}(\eta) = \frac12 \left(s(\sqrt{\eta}) + s(-\sqrt{\eta}) \right)
	= \sum_{n\geq0} s_{2n} \eta^{n}.
	\end{equation}
	The integral is explicitly given by
	\begin{equation}
	\cI = \zeta^{1/2} \int_0^T e^{-|\zeta| t} t^{-1/2} a(t) dt,
	\end{equation}
	with $a(t) = \sqrt{-\frac{e^{-\ii \psi}}{c} } \tilde{s} \left( -\frac{e^{-\ii \psi}}{c} t \right)$.

	According to Watson's Lemma (see Proposition~2.1 in~\cite{Mil06}) we have the following asymptotic expansion as $|\zeta| \to \infty$
	\begin{equation}
	\int_0^T e^{-|\zeta| t} t^{-1/2} a(t) dt \sim
	\sum_{n \geq0} \Gamma(n+1/2) \frac{a^{(n)}(0)}{n!} |\zeta|^{-n-1/2},
	\end{equation}
	for any complex valued smooth function $a(t)$ defined in a neighborhood of $[0,T]$.
	Clearly $a^{(n)}(0) = n!   s_{2n} \left( -\frac{e^{-\ii \psi}}{c}  \right)^{n+1/2}$, so we obtain the asymptotic expansion
	\begin{equation}
	\cI \sim \ii \sum_{n \geq 0} \Gamma(n+1/2) \frac{s_{2n}}{c^{n+1/2}} (- \zeta)^{-n}.
	\end{equation}
	
	We can finally compute the coefficients $s_n$ as residues
	\begin{equation}
	s_n = \res_{w=0} \frac{g(z(w))}{w'(z(w))} \frac{dw}{w^{n+1}} = 
	\res_{z=0} \frac{g(z)}{w(z)^{n+1}} dz.
	\end{equation}
	Expressing $w(z)$ as square root of $c^{-1} f(z)$ we obtain the desired result. 
\end{proof}
\begin{remark}
	We choose the branches of the roots of $c$ and $e^{\ii \psi}$ such that the sign in the final expression is $+1$.
\end{remark}

\section{Special functions}

\subsection{Modified Bessel functions} 
\label{sec:bessel}

In this appendix, we go over the definition and some properties of the modified Bessel functions. For more details, we refer the reader to~\cite[Sections 10.25-10.46]{DLMF}. The modified Bessel functions of the first kind are defined by
\begin{align}
I_\nu (z) &= \sum_{k=0}^{\infty} \frac{1}{\Gamma \left( k + \nu + 1 \right) k! } \left( \frac{z}{2} \right)^{2k + \nu}.
\end{align}
The modified Bessel functions of the second kind are defined by
\begin{align}
K_\nu (z) &= \frac{\pi}{2} \frac{I_{-\nu} (z) - I_\nu(z)}{\sin (\pi \nu)}, \qquad \nu \notin \mathbb{Z}, \\
K_m (z) &= \lim_{\mu \rightarrow m} K_\mu(z), \qquad m \in \Z.
\end{align}
For $n \in \Z$, $I_n(z)$ is entire and $K_n(z)$ is multivalued with a branch cut on $\R_{-}$. Its multivaluedness becomes clear from the expansion at $z=0$
\begin{align}
K_n(z) &= \frac12 \left( \frac{z}2  \right)^{-n} \sum_{k=0}^{n-1} \frac{(n-k-1)!}{k!} \left( - \frac14 z^2 \right)^k + (-1)^{n+1} \log \left( \frac{z}2 \right) I_n(z) \label{eqn:multivaluedKn}  \\
&+ (-1)^n \frac12 \left( \frac{z}2  \right)^n \sum_{k=0}^{\infty} \left( \psi(k+1) + \psi(n+k+1) \right) \frac{\left( \frac{z}2 \right)^{2k}}{k! (n+k)!}, \notag
\end{align}
where
\begin{align}
\psi(z) &= \sum_{k=1}^{\infty} \left( \frac1k - \frac1{k+z-1} \right) - \gamma, \label{eqn:defnpsi} \\
\gamma &= \lim_{n \rightarrow \infty} \left( \sum_{k=1}^{n} \frac1k - \log n \right). \label{eqn:defngamma}
\end{align}
The following properties will be used in the text: 
\begin{align}
I_n(-z) &= (-1)^n I_n(z), \label{eqn:parityIn} \\
I^{\prime}_n(z) &= I_{n-1}(z) - \frac{n}{z} I_n(z) \label{eqn:derivativeIn}, \\
K^{\prime}_n(z) &= - K_{n-1}(z) - \frac{n}{z} K_n(z) \label{eqn:derivativeKn}.
\end{align}
The monodromy of $K_n(z)$ is given by
\begin{align}
K_n(z e^{m\pi\ii}) &= (-1)^{mn} K_n(z) - (-1)^{n(m-1)} m \pi \ii I_n(z), \label{eqn:monodromyKnIn} \\
K_n(z e^{m\pi \ii}) &= (-1)^{n(m-1)} m K_n(ze^{\pi \ii}) - (-1)^{nm} (m-1) K_n(z). \label{eqn:monodromyKnKn}
\end{align}
It is also useful to keep in mind their asymptotic expansions for large $n$
\begin{align}
I_n(z) \sim \frac{1}{\sqrt{2 \pi n}} \left( \frac{ez}{2n} \right)^n, \label{eqn:Indoesconverge} \\
K_n(z) \sim \sqrt{\frac{\pi}{2 n}} \left( \frac{2n}{ez}  \right)^n, \label{eqn:Kmdoesnotconverge}
\end{align}
and for large $z$
\begin{align}
I_n(z) &\sim \frac{e^z}{\sqrt{2 \pi z}} \sum_{k=0}^{\infty} (-1)^k a_k(n) z^{-k}, \qquad &|\arg z| < \frac12 \pi, \ |z| \rightarrow \infty, \label{eqn:asymptoticIn} \\
K_n(z) &\sim e^{-z} \sqrt{\frac{\pi}{2z}} \sum_{k=0}^{\infty} a_k(n) z^{-k}, \qquad &|\arg z| < \frac32 \pi, \ |z| \rightarrow \infty, \label{eqn:asymptoticKn}
\end{align}
where $a_0(n) = 0$, and
\begin{align}
a_k(n) = \frac{(4n^2-1^2)(4n^2-3^2) \dots (4n^2 - (2k-1)^2) }{k! 8^k}.
\end{align}
The $I_n$ can be encoded together in a generating function
\begin{align}
e^{\frac12 z \left( t + t^{-1} \right) } = \sum_{n=-\infty}^{\infty} t^n I_n(z), \label{eqn:generatingIn}
\end{align}
which converges for all $t \in \mathbb{C}^*$.

\subsection{Gauss hypergeometric functions} 
\label{sec:hypergeometric} 

The definition and properties of the Gauss hypergeometric functions presented below are taken from~\cite[Chapter 15]{DLMF}. For more details, we refer the reader to that source.
The Gauss hypergeometric function is defined by the power series
\begin{align}
{}_2F_1(a, b; c; z) = \sum_{n=0}^{\infty} \frac{(a)^{n} (b)^{n} }{n! (c)^{n} } z^n, \label{eqn:hypergeometricdefn}
\end{align}
in the disk $|z| < 1$ and by analytic continuation elsewhere, where 
\begin{align}
(q)^n = q(q+1)\dots (q+n-1)
\end{align}
denotes the rising factorial. At $z=1$, they have a logarithmic branch point of the form
\begin{align}
{}_2F_1(a, b, a+b; z) &= - \frac{\Gamma(a+b)}{\Gamma(a) \Gamma(b)} \log(1-z) \ {}_2F_1(a, b, a+b; 1-z) \label{eqn:hypergeometriclog} \\
&+ \sum_{k=0}^{\infty} \frac{(a)^k (b)^k}{k!^2} \left( 2 \psi(k+1) - \psi(a+k) - \psi(b+k) \right) (1-z)^k, \nonumber
\end{align}
where the function $\psi$ is defined by~\eqref{eqn:defnpsi}--\eqref{eqn:defngamma}. 

The following Laplace transform will be used in the text, see~\cite{PBM92}: for $\Re c, \Re q > 0, |\arg \omega| < \pi$, we have
\begin{align}
\int_{0}^{\infty} {}_2F_1(a,1-a;c; - \omega x) e^{-qx}dx = \frac{q^{\frac12-c}}{\sqrt{\pi \omega}} \Gamma(c) e^{\frac{q}{2 \omega}} K_{a - \frac12} \left( \frac{q}{2 \omega} \right), \label{eqn:hyperLaplace}
\end{align}
where $K_\nu (z)$ is the modified Bessel function of the second kind, defined in Appendix~\ref{sec:bessel}.

Finally, we state and prove a technical lemma necessary for the Borel resummation procedure performed in Section~\ref{sec:resurgenceandstokes}. 
\begin{lemma} \label{lemma:aux-hg}
	For $|z| < 4$, the power series 
	\begin{align}
	f(z) = \sum_{k=0}^\infty \binom{-1/2}{k} \binom{m+k-1/2}{2k} z^k
	\end{align}
	converges and coincides with the Gauss hypergeometric function
	\begin{align}
	f(z) = {}_2F_1 \left(\frac12 - m, \frac12 + m, 1; \frac{z}{4} \right).
	\end{align} 
\end{lemma}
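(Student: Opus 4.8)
The plan is to verify the claimed identity coefficient by coefficient, and then deduce convergence from the ratio test. By the definition~\eqref{eqn:hypergeometricdefn} of the Gauss hypergeometric series with $a=\tfrac12-m$, $b=\tfrac12+m$, $c=1$ and argument $z/4$, we have
\begin{equation}
{}_2F_1\!\left(\tfrac12-m,\tfrac12+m,1;\tfrac z4\right)=\sum_{k=0}^\infty\frac{(\tfrac12-m)^k(\tfrac12+m)^k}{(k!)^2\,4^k}\,z^k ,
\end{equation}
so it suffices to show that, for every $k\ge0$,
\begin{equation}
\binom{-1/2}{k}\binom{m+k-1/2}{2k}=\frac{(\tfrac12-m)^k(\tfrac12+m)^k}{(k!)^2\,4^k}.
\end{equation}

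First I would rewrite the first binomial in closed form. Since $\binom{-1/2}{k}=\frac{(-1/2)(-3/2)\cdots\big(-(2k-1)/2\big)}{k!}=\frac{(-1)^k(2k-1)!!}{2^k\,k!}$ and $(2k-1)!!=\frac{(2k)!}{2^k\,k!}$, one gets
\begin{equation}
\binom{-1/2}{k}=\frac{(-1)^k}{4^k}\binom{2k}{k}.
\end{equation}

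The one genuine computation is the second binomial. The numerator of $\binom{m+k-1/2}{2k}$ is the product of the $2k$ numbers $m+k-\tfrac12-i$, $i=0,\dots,2k-1$, that is, the numbers $m-k+\tfrac12,\,m-k+\tfrac32,\,\dots,\,m+k-\tfrac12$. I would split this list into its upper half $m+\tfrac12,\,m+\tfrac32,\,\dots,\,m+k-\tfrac12$, whose product is $\prod_{j=1}^k\big(m+\tfrac12+(j-1)\big)=(\tfrac12+m)^k$, and its lower half $m-\tfrac12,\,m-\tfrac32,\,\dots,\,m-k+\tfrac12$, whose product is $(-1)^k\prod_{j=1}^k\big(\tfrac12-m+(j-1)\big)=(-1)^k(\tfrac12-m)^k$. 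Hence
\begin{equation}
\binom{m+k-1/2}{2k}=\frac{(-1)^k(\tfrac12-m)^k(\tfrac12+m)^k}{(2k)!}.
\end{equation}
Multiplying the two closed forms and using $\binom{2k}{k}=\tfrac{(2k)!}{(k!)^2}$, the factors $(-1)^k$ and $(2k)!$ cancel and we obtain exactly the right-hand side of the coefficient identity.

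Finally, for convergence, the ratio of the $(k+1)$-st to the $k$-th coefficient of $f$ is $\dfrac{(\tfrac12-m+k)(\tfrac12+m+k)}{4(k+1)^2}\to\dfrac14$, so $f$ has radius of convergence $4$; equivalently this is the radius of convergence of ${}_2F_1(a,b;c;\cdot)$ dilated by the factor $4$ coming from the argument $z/4$. Since the two power series have the same coefficients and both converge on $|z|<4$, they represent the same holomorphic function there. I expect the only point requiring care to be the bookkeeping of signs and indices when splitting the $2k$-fold product above; the rest is routine.
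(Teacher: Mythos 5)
Your proof is correct and follows essentially the same route as the paper's: both verify the identity coefficient by coefficient, splitting the numerator of $\binom{m+k-1/2}{2k}$ into the rising factorials $(\tfrac12+m)^{(k)}$ and $(-1)^k(\tfrac12-m)^{(k)}$ and matching against the ${}_2F_1$ coefficients. The only cosmetic differences are that you obtain the normalization $4^k\,k!\,(\tfrac12)^{(k)}=(2k)!$ directly via the closed form $\binom{-1/2}{k}=\tfrac{(-1)^k}{4^k}\binom{2k}{k}$, where the paper proves the equivalent identity $(k+1)^{(k)}=4^k(\tfrac12)^{(k)}$ by induction, and that you make the radius-of-convergence argument explicit via the ratio test.
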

\begin{proof}
	We write
	\begin{align}
	\binom{-\frac12} {k} &= \frac{(-1)^k (\frac12)^{(k)} }{k!}, \\
	\binom{m+k-\frac12}{2k} &= \frac{ (m+\frac12)^{(k)} (m-\frac12)_{(k)} }{(2k)!},
	\end{align}
	where $(a)^{(n)}$ is the rising factorial and $(a)_{(n)}$ is the falling factorial, given by
	\begin{align}
	(a)_{(n)} &= a(a-1)(a-2) \dots (a-n+1), \\
	(a)^{(n)} &= a(a+1)(a+2) \dots (a+n-1).
	\end{align}
	Using the property
	\begin{align}
	(a)^{(n)} = (-1)^n (-a)_{(n)},
	\end{align}
	we can write
	\begin{align}
	f(z) = \sum_{k=0}^\infty \frac{ (\frac12)^{(k)} (m+\frac12)^{(k)} (\frac12 - m)^{(k)} }{(2k)! k!} z^k.
	\end{align}
	Comparing this expression to the power series expansion~\eqref{eqn:hypergeometricdefn} of the hypergeometric function reduces the lemma to proving the identity
	\begin{align}
	4^k k! (1/2)^{(k)} = (2k)!, \qquad k = 0, 1, \dots,
	\end{align}
	which, after noting $(2k)! = k! (k+1)^{(k)}$, further reduces to
	\begin{align}
	(k+1)^{(k)} = 4^k (1/2)^{(k)}. \label{eqn:identityrisingfactorial}
	\end{align}
	To prove~\eqref{eqn:identityrisingfactorial}, we proceed via induction. For $k = 0, 1$ it is clear that it holds. Assume it is true for $k \geq 1$. Then
	\begin{align}
	(k+2)^{(k+1)} &= (k+2)^{(k)} (2k+2) = \frac{(k+1)^{(k+1)}}{k+1} (2k+2) = 2 (k+1)^{(k+1)}= \\
	&= 2(k+1)^{(k)} (2k+1) = 4^k (1/2)^{(k)} 2 (2k+1) = \notag \\ 
	&= 4^{k+1} (1/2)^{(k)} (k+1/2) = 4^{k+1} (1/2)^{(k+1)}\notag .
	\end{align}
\end{proof}


\begin{filecontents}{\jobname.bib}

@book{balserDivergentPowerSeries1994,
  title = {From Divergent Power Series to Analytic Functions: Theory and Application of Multisummable Power Series},
  shorttitle = {From Divergent Power Series to Analytic Functions},
  author = {Balser, Werner},
  year = {1994},
  series = {Lecture Notes in Mathematics},
  number = {1582},
  publisher = {{Springer-Verlag}}
}

@article{buryakModuliSpacesResidueless2021,
  title = {Moduli Spaces of Residueless Meromorphic Differentials and the {{KP}} Hierarchy},
  author = {Buryak, Alexandr and Rossi, Paolo and Zvonkine, Dimitri},
  year = {2021},
  journal = {arXiv:2110.01419 [math-ph]}
}

@article {Car06,
AUTHOR = {Carlet, Guido},
TITLE = {The extended bigraded {T}oda hierarchy},
JOURNAL = {J. Phys. A},
FJOURNAL = {Journal of Physics. A. Mathematical and General},
VOLUME = {39},
YEAR = {2006},
NUMBER = {30},
PAGES = {9411--9435},
ISSN = {0305-4470},
MRCLASS = {37K60 (37K10)},
MRNUMBER = {2246697},
MRREVIEWER = {Henrik Aratyn},
DOI = {10.1088/0305-4470/39/30/003},
URL = {https://doi.org/10.1088/0305-4470/39/30/003},
}

@article{carletHamiltonianStructuresTwodimensional2005,
title = {The {{Hamiltonian}} Structures of the Two-Dimensional {{Toda}} Lattice and {{R-matrices}}},
author = {Carlet, Guido},
year = {2005},
journal = {Letters in Mathematical Physics},
volume = {71},
number = {3},
pages = {209--226},
copyright = {All rights reserved}
}

@article {CDM11,
AUTHOR = {Carlet, Guido and Dubrovin, Boris and Mertens, Luca Philippe},
TITLE = {Infinite-dimensional {F}robenius manifolds for {$2+1$}
integrable systems},
JOURNAL = {Math. Ann.},
FJOURNAL = {Mathematische Annalen},
VOLUME = {349},
YEAR = {2011},
NUMBER = {1},
PAGES = {75--115},
ISSN = {0025-5831},
MRCLASS = {53D45 (37K10 37K60 58B20)},
MRNUMBER = {2753798},
MRREVIEWER = {Ian A. B. Strachan},
DOI = {10.1007/s00208-010-0509-3},
URL = {https://doi.org/10.1007/s00208-010-0509-3},
}

@article {CDZ04,
AUTHOR = {Carlet, Guido and Dubrovin, Boris and Zhang, Youjin},
TITLE = {The extended {T}oda hierarchy},
JOURNAL = {Mosc. Math. J.},
FJOURNAL = {Moscow Mathematical Journal},
VOLUME = {4},
YEAR = {2004},
NUMBER = {2},
PAGES = {313--332, 534},
ISSN = {1609-3321},
MRCLASS = {37K60 (37K05)},
MRNUMBER = {2108440},
MRREVIEWER = {Henrik Aratyn},
DOI = {10.17323/1609-4514-2004-4-2-313-332},
URL = {https://doi.org/10.17323/1609-4514-2004-4-2-313-332},
}

@article {CM15,
AUTHOR = {Carlet, Guido and Mertens, Luca Philippe},
TITLE = {Principal hierarchies of infinite-dimensional {F}robenius
manifolds: the extended 2{D} {T}oda lattice},
JOURNAL = {Adv. Math.},
FJOURNAL = {Advances in Mathematics},
VOLUME = {278},
YEAR = {2015},
PAGES = {137--181},
ISSN = {0001-8708},
MRCLASS = {37K10 (35Q53 53D45)},
MRNUMBER = {3341788},
MRREVIEWER = {Dafeng Zuo},
DOI = {10.1016/j.aim.2015.04.001},
URL = {https://doi.org/10.1016/j.aim.2015.04.001},
}

@article {CDG20,
AUTHOR = {Cotti, Giordano and Dubrovin, Boris and Guzzetti, Davide},
TITLE = {Local moduli of semisimple {F}robenius coalescent structures},
JOURNAL = {SIGMA Symmetry Integrability Geom. Methods Appl.},
FJOURNAL = {SIGMA. Symmetry, Integrability and Geometry. Methods and
Applications},
VOLUME = {16},
YEAR = {2020},
PAGES = {Paper No. 040, 105},
MRCLASS = {53D45 (18G80 34M56)},
MRNUMBER = {4094756},
MRREVIEWER = {Yousuke Ohyama},
DOI = {10.3842/SIGMA.2020.040},
URL = {https://doi.org/10.3842/SIGMA.2020.040},
}

@article {DVV91,
AUTHOR = {Dijkgraaf, Robbert and Verlinde, Herman and Verlinde, Erik},
TITLE = {Topological strings in {$d<1$}},
JOURNAL = {Nuclear Phys. B},
FJOURNAL = {Nuclear Physics. B. Theoretical, Phenomenological, and
Experimental High Energy Physics. Quantum Field Theory and
Statistical Systems},
VOLUME = {352},
YEAR = {1991},
NUMBER = {1},
PAGES = {59--86},
ISSN = {0550-3213},
MRCLASS = {81T40 (58F07 81T30)},
MRNUMBER = {1103047},
MRREVIEWER = {Danny Birmingham},
DOI = {10.1016/0550-3213(91)90129-L},
URL = {https://doi.org/10.1016/0550-3213(91)90129-L},
}

@misc{DLMF,
key = "{\relax DLMF}",
title = "{\it NIST Digital Library of Mathematical Functions}",
howpublished = "http://dlmf.nist.gov/, Release 1.1.3 of 2021-09-15",
url = "http://dlmf.nist.gov/",
note = "F.~W.~J. Olver, A.~B. {Olde Daalhuis}, D.~W. Lozier, B.~I. Schneider,
R.~F. Boisvert, C.~W. Clark, B.~R. Miller, B.~V. Saunders,
H.~S. Cohl, and M.~A. McClain, eds."}

@article{Dor19,
title = {An introduction to resurgence, trans-series and alien calculus},
journal = {Annals of Physics},
volume = {409},
pages = {167914},
year = {2019},
issn = {0003-4916},
doi = {https://doi.org/10.1016/j.aop.2019.167914},
url = {https://www.sciencedirect.com/science/article/pii/S0003491619301691},
author = {Daniele Dorigoni},
keywords = {Resurgence theory, Non-perturbative methods, Asymptotic expansion, Alien calculus, Morse theory},
}

@incollection {Dub99,
AUTHOR = {Dubrovin, Boris},
TITLE = {Painlev\'{e} transcendents in two-dimensional topological field
theory},
BOOKTITLE = {The {P}ainlev\'{e} property},
SERIES = {CRM Ser. Math. Phys.},
PAGES = {287--412},
PUBLISHER = {Springer, New York},
YEAR = {1999},
MRCLASS = {53D45 (33E17 34M55 57R56 81T45)},
MRNUMBER = {1713580},
}

@Inbook{Dub96,
author="Dubrovin, Boris",
editor="Francaviglia, Mauro
and Greco, Silvio",
title="Geometry of 2D topological field theories",
bookTitle="Integrable Systems and Quantum Groups: Lectures given at the 1st Session of the Centro Internazionale Matematico Estivo (C.I.M.E.) held in Montecatini Terme, Italy, June 14--22, 1993",
year="1996",
publisher="Springer Berlin Heidelberg",
address="Berlin, Heidelberg",
pages="120--348",
isbn="978-3-540-47706-8",
doi="10.1007/BFb0094793",
url="https://doi.org/10.1007/BFb0094793"
}

@incollection{Dub06,
title = {WDVV Equations and Frobenius Manifolds},
editor = {Jean-Pierre Françoise and Gregory L. Naber and Tsou Sheung Tsun},
booktitle = {Encyclopedia of Mathematical Physics},
publisher = {Academic Press},
address = {Oxford},
pages = {438-447},
year = {2006},
isbn = {978-0-12-512666-3},
doi = {https://doi.org/10.1016/B0-12-512666-2/00195-4},
url = {https://www.sciencedirect.com/science/article/pii/B0125126662001954},
author = {B. Dubrovin}
}

@misc{DZ01,
title={Normal forms of hierarchies of integrable PDEs, Frobenius manifolds and Gromov - Witten invariants}, 
author={Boris Dubrovin and Youjin Zhang},
year={2001},
eprint={math/0108160},
archivePrefix={arXiv},
primaryClass={math.DG}
}

@article{ecalleFonctionsResurgentesVol1981,
  title = {Les Fonctions R\'esurgentes. {{Vol}}. {{I-III}}},
  author = {Ecalle, Jean},
  year = {1981},
  journal = {Publ. Math. Orsay},
  volume = {vol. I - III}
}

@book{hardyDivergentSeries1973,
  title = {Divergent Series},
  author = {Hardy, Godfrey H.},
  year = {1973},
  edition = {Repr. from corr. sheets of the 1. ed},
  publisher = {{Clarendon Press}},
  langid = {english}
}

@book {Hen91,
AUTHOR = {Henrici, Peter},
TITLE = {Applied and computational complex analysis. {V}ol. 2},
SERIES = {Wiley Classics Library},
NOTE = {Special functions---integral
transforms---asymptotics---continued fractions,
Reprint of the 1977 original,
A Wiley-Interscience Publication},
PUBLISHER = {John Wiley \& Sons, Inc., New York},
YEAR = {1991},
PAGES = {x+662},
ISBN = {0-471-54289-X},
MRCLASS = {30-02 (65E05)},
MRNUMBER = {1164865},
}

@book {LR16,
AUTHOR = {Loday-Richaud, Mich\`ele},
TITLE = {Divergent series, summability and resurgence. {II}},
SERIES = {Lecture Notes in Mathematics},
VOLUME = {2154},
NOTE = {Simple and multiple summability,
With prefaces by Jean-Pierre Ramis, \'{E}ric Delabaere, Claude
Mitschi and David Sauzin},
PUBLISHER = {Springer, [Cham]},
YEAR = {2016},
PAGES = {xxiii+272},
ISBN = {978-3-319-29074-4; 978-3-319-29075-1},
MRCLASS = {40-02 (34-02 34M30 40A05 40A30 40Cxx)},
MRNUMBER = {3495546},
MRREVIEWER = {Francisco Marcell\'{a}n},
DOI = {10.1007/978-3-319-29075-1},
URL = {https://doi.org/10.1007/978-3-319-29075-1},
}

@article{mwz21,
  title = {Infinite-Dimensional {{Frobenius}} Manifolds Underlying an Extension of the Dispersionless {{Kadomtsev}}\textendash{{Petviashvili}} Hierarchy},
  author = {Ma, Shilin and Wu, Chao-Zhong and Zuo, Dafeng},
  year = {2021},
  journal = {Journal of Geometry and Physics},
  volume = {161},
  pages = {104006},
  issn = {0393-0440},
  langid = {english}
}

@misc {Mar21,
AUTHOR = {Mariño, Marcos},
TITLE = {Three lectures on resurgence and
quantum topology},
YEAR = {2021},
pages = {1--21},
NOTE =  {\verb!https://www.marcosmarino.net/uploads/1/3/3/5/133535336/resurgence-yrs.pdf!},
}

@article{MSS04,
  title={Resumació de {B}orel i Teoria de la Ressurgencia},
  author={Martínez-Seara, M Teresa and Sauzin, David},
  journal={Butlletí de la Societat Catalana de Matemàtiques},
  pages={131--153},
  year={2003}
}

@book{Mil06,
title = {Applied Asymptotic Analysis},
author = {Miller, Peter D.},
year = {2006},
series = {Graduate Studies in Mathematics},
number = {v. 75},
publisher = {{American Mathematical Society}}
}

@book {MS16,
AUTHOR = {Mitschi, Claude and Sauzin, David},
TITLE = {Divergent series, summability and resurgence. {I}},
SERIES = {Lecture Notes in Mathematics},
VOLUME = {2153},
NOTE = {Monodromy and resurgence,
With a foreword by Jean-Pierre Ramis and a preface by \'{E}ric
Delabaere, Mich\`ele Loday-Richaud, Claude Mitschi and David
Sauzin},
PUBLISHER = {Springer, [Cham]},
YEAR = {2016},
PAGES = {xxi+298},
ISBN = {978-3-319-28735-5; 978-3-319-28736-2},
MRCLASS = {40-02 (30B40 34M35 37F10 40A05)},
MRNUMBER = {3526111},
MRREVIEWER = {Francisco Marcell\'{a}n},
DOI = {10.1007/978-3-319-28736-2},
URL = {https://doi.org/10.1007/978-3-319-28736-2},
}

@book {PBM92,
AUTHOR = {Prudnikov, A. P. and Brychkov, Yu. A. and Marichev, O. I.},
TITLE = {Integrals and series. {V}ol. 4},
NOTE = {Direct Laplace transforms},
PUBLISHER = {Gordon and Breach Science Publishers, New York},
YEAR = {1992},
PAGES = {xx+619},
ISBN = {2-88124-837-3},
MRCLASS = {44A10 (33-00 44-00 44A20 44A35)},
MRNUMBER = {1162979},
MRREVIEWER = {J. M. H. Peters},
}

@article {Rai12,
AUTHOR = {Raimondo, Andrea},
TITLE = {Frobenius manifold for the dispersionless
{K}adomtsev-{P}etviashvili equation},
JOURNAL = {Comm. Math. Phys.},
FJOURNAL = {Communications in Mathematical Physics},
VOLUME = {311},
YEAR = {2012},
NUMBER = {3},
PAGES = {557--594},
ISSN = {0010-3616},
MRCLASS = {37K10 (35Q83 53D45)},
MRNUMBER = {2909756},
MRREVIEWER = {Ian A. B. Strachan},
DOI = {10.1007/s00220-012-1470-7},
URL = {https://doi.org/10.1007/s00220-012-1470-7},
}

@book{ramis1993series,
  title = {S\'eries Divergentes et Th\'eories Asymptotiques},
  author = {Ramis, Jean-Pierre},
  year = {1993},
  volume = {121},
  publisher = {{Soci\'et\'e math\'ematique de France Marseille}}
}

@misc{Sau14,
title={Introduction to 1-summability and resurgence}, 
author={David Sauzin},
year={2014},
eprint={1405.0356},
archivePrefix={arXiv},
primaryClass={math.DS}
}

@misc{Sau07,
title={Resurgent functions and splitting problems}, 
author={David Sauzin},
year={2007},
eprint={0706.0137},
archivePrefix={arXiv},
primaryClass={math.DS}
}

@incollection {UT84,
AUTHOR = {Ueno, Kimio and Takasaki, Kanehisa},
TITLE = {Toda lattice hierarchy},
BOOKTITLE = {Group representations and systems of differential equations
({T}okyo, 1982)},
SERIES = {Adv. Stud. Pure Math.},
VOLUME = {4},
PAGES = {1--95},
PUBLISHER = {North-Holland, Amsterdam},
YEAR = {1984},
MRCLASS = {58F07 (35Q20)},
MRNUMBER = {810623},
MRREVIEWER = {Chris Athorne},
DOI = {10.2969/aspm/00410001},
URL = {https://doi.org/10.2969/aspm/00410001},
}

@article{wittenAnalyticContinuationChernsimons2011,
  title = {Analytic Continuation of Chern-Simons Theory},
  author = {Witten, Edward},
  year = {2011},
  journal = {AMS/IP Studies in Advanced Mathematics},
  volume = {50},
  pages = {347--446}
}

@article {Wit90,
AUTHOR = {Witten, Edward},
TITLE = {On the structure of the topological phase of two-dimensional
gravity},
JOURNAL = {Nuclear Phys. B},
FJOURNAL = {Nuclear Physics. B. Theoretical, Phenomenological, and
Experimental High Energy Physics. Quantum Field Theory and
Statistical Systems},
VOLUME = {340},
YEAR = {1990},
NUMBER = {2-3},
PAGES = {281--332},
ISSN = {0550-3213},
MRCLASS = {32G15 (32G81 57R99 81T40)},
MRNUMBER = {1068086},
MRREVIEWER = {Danny Birmingham},
DOI = {10.1016/0550-3213(90)90449-N},
URL = {https://doi.org/10.1016/0550-3213(90)90449-N},
}

@book {Won89,
AUTHOR = {Wong, R.},
TITLE = {Asymptotic approximations of integrals},
SERIES = {Computer Science and Scientific Computing},
PUBLISHER = {Academic Press, Inc., Boston, MA},
YEAR = {1989},
PAGES = {xiv+546},
ISBN = {0-12-762535-6},
MRCLASS = {41A60 (44A15 46F10)},
MRNUMBER = {1016818},
MRREVIEWER = {F. Ursell},
}

@article {WX12,
AUTHOR = {Wu, Chao-Zhong and Xu, Dingdian},
TITLE = {A class of infinite-dimensional {F}robenius manifolds and
their submanifolds},
JOURNAL = {Int. Math. Res. Not. IMRN},
FJOURNAL = {International Mathematics Research Notices. IMRN},
YEAR = {2012},
NUMBER = {19},
PAGES = {4520--4562},
ISSN = {1073-7928},
MRCLASS = {58D15 (30D30 53D45 58B20)},
MRNUMBER = {2981718},
MRREVIEWER = {Dafeng Zuo},
DOI = {10.1093/imrn/rnr192},
URL = {https://doi.org/10.1093/imrn/rnr192},
}

@article {WZ14,
AUTHOR = {Wu, Chao-Zhong and Zuo, Dafeng},
TITLE = {Infinite-dimensional {F}robenius manifolds underlying the
{T}oda lattice hierarchy},
JOURNAL = {Adv. Math.},
FJOURNAL = {Advances in Mathematics},
VOLUME = {255},
YEAR = {2014},
PAGES = {487--524},
ISSN = {0001-8708},
MRCLASS = {53D45 (37K10 37K60 58B20)},
MRNUMBER = {3167491},
MRREVIEWER = {B\l a\.{z}ej M. Szablikowski},
DOI = {10.1016/j.aim.2014.01.013},
URL = {https://doi.org/10.1016/j.aim.2014.01.013},
}

\end{filecontents}

\bibliography{\jobname}
\bibliographystyle{amsalpha}

\end{document}